\newtheorem{theorem}{Theorem}[section]
\newtheorem{proposition}[theorem]{Proposition}
\newtheorem{definition}[theorem]{Definition}
\newtheorem{lemma}[theorem]{Lemma}
\numberwithin{equation}{section}
\newcommand{\opH}{H_{\omega,\lambda}}
\newcommand{\T}{{\bf T}_{\omega,\lambda}}
\newcommand{\vp}{\phi}% eigenfunction
\newcommand{\Z}{\mathbb{Z}}
\newcommand{\R}{\mathbb{R}}
\newcommand{\p}{\mathbb{P}}
\newcommand{\esp}{\mathbb{E}}
\newcommand{\bra}{\langle}
\newcommand{\ket}{\rangle}
\newcommand{\norm}[1]{\left\lVert #1 \right\rVert}
\newcommand{\e}{\mathrm{e}}
\def \simless {\mathbin{\lower 3pt\hbox{$\rlap{\raise 5pt
              \hbox{$\char'074$}}\mathchar"7218$}}}
\author[O. Bourget, G. R. Moreno Flores and A. Taarabt]{Olivier Bourget$^1$, Gregorio R. Moreno Flores$^{2,*}$ and Amal Taarabt$^3$}
\thanks{{\it Key words and phrases.} 
%Regeneration times,  Interacting Particle Systems, Front propagation.}
Anderson model, decaying disordered, dynamical localization.}
\thanks{ AMS 2010 {\it subject classifications}. 82B44, 47B80}
\thanks{$^*$ Corresponding author}
\thanks{$^1$  Partially supported by Fondecyt grant 1161732}
\thanks{$^1$ $^2$ $^3$ Facultad de Matem\'aticas, Pontificia Universidad Cat\'olica de Chile.}
\thanks{$^2$  Partially supported by Fondecyt grant 1171257, N\'ucleo Milenio `Modelos Estoc\'asticos de Sistemas Complejos y Desordenados' and MATH Amsud `Random Structures and Processes in Statistical Mechanics'}
\thanks{$^3$  Partially supported by Fondecyt grant 11190084}
\address{Facultad de Matem\'aticas\\
Pontificia Universidad Cat\'olica de Chile\\
Vicu\~na Mackenna 4860, Macul\\
Santiago, Chile}
\email{bourget@mat.uc.cl, grmoreno@mat.uc.cl, amtaarabt@mat.uc.cl}
\title[Anderson Model in a Decaying Potential]{Dynamical Localization for the One-dimensional Continuum Anderson Model in a Decaying Random Potential}
\begin{document}

\begin{abstract}
	%We show dynamical localization for the one-dimensional continuum Anderson model in a sub-critical decaying random potential.
	
	We consider a one-dimensional continuum Anderson model where the potential decays in average like $|x|^{-\alpha}$, $\alpha>0$. We show dynamical localization for $0<\alpha<\frac12$ and provide  control on the decay of the eigenfunctions.
	
%	This simple model is known to display a rich phase diagram with different kinds of spectrum arising as the decay rate $\alpha$ varies. 
%	In particular, the spectrum of the operator is almost surely pure point for $0<\alpha<1/2$.

%	In \cite{KLS}, the authors show a.c. spectrum in the super-critical case $\alpha>1/2$, a transition from singular continuous to pure point spectrum in the critical case $\alpha=1/2$ and point spectrum for $\alpha<1/2$.
	%At the dynamical level, delocalization holds for $\alpha>1/2$. The work \cite{GKT} shows that there is no dynamical transition for $\alpha=1/2$ despite of the spectral transition by exhibiting non-trivial transport exponents for all energies. 
\end{abstract}

\maketitle

\tableofcontents

%%%%%%%%%%%%%%%%%%%%%%%%%%%%%%%%%%%%%%%%%%%%%%%%%%%%%%%%%%%%
%%%%%%%%%%%%%%%%%%%%%%%%%%%%%%%%%%%%%%%%%%%%%%%%%%%%%%%%%%%%
%%%%%%%%%%%%%%%%%%%%%%%%%%%%%%%%%%%%%%%%%%%%%%%%%%%%%%%%%%%%
%%%%%%%%%%%%%%%%%%%%%%%%%%%%%%%%%%%%%%%%%%%%%%%%%%%%%%%%%%%%
%%%%%%%%%%%%%%%%%%%%%%%%%%%%%%%%%%%%%%%%%%%%%%%%%%%%%%%%%%%%
%%%%%%%%%%%%%%%%%%%%%%%%%%%%%%%%%%%%%%%%%%%%%%%%%%%%%%%%%%%%
%%%%%%%%%%%%%%%%%%%%%%%%%%%%%%%%%%%%%%%%%%%%%%%%%%%%%%%%%%%%

%%%%%%%%%%%%%%%%%%%%%%%%%%%%%%%%%%%%%%%%%%%%%%%%%%%%%%%%%%%%
%%%%%%%%%%%%%%%%%%%%%%%%%%%%%%%%%%%%%%%%%%%%%%%%%%%%%%%%%%%%
%%%%%%%%%%%%%%%%%%%%%%%%%%%%%%%%%%%%%%%%%%%%%%%%%%%%%%%%%%%%
%%%%%%%%%%%%%%%%%%%%%%%%%%%%%%%%%%%%%%%%%%%%%%%%%%%%%%%%%%%%
%%%%%%%%%%%%%%%%%%%%%%%%%%%%%%%%%%%%%%%%%%%%%%%%%%%%%%%%%%%%
%%%%%%%%%%%%%%%%%%%%%%%%%%%%%%%%%%%%%%%%%%%%%%%%%%%%%%%%%%%%
%%%%%%%%%%%%%%%%%%%%%%%%%%%%%%%%%%%%%%%%%%%%%%%%%%%%%%%%%%%%

\section{Introduction}

Disordered systems in material sciences have been the source of a plethora of interesting phenomena and many practical applications. The addition of impurities in otherwise fairly homogeneous materials is known to induce new behaviours such as Anderson localization where wave packets get trapped by the disorder and conductivity can be suppressed \cite{An58}. It is then natural to expect that accurate mathematical models for disordered media should display an interesting phase diagram. 

As a model for the dynamics of an electron in a disordered medium, the Anderson model is expected to undergo a transition from a delocalized to a localized regime reflected at the spectral level by a transition from absolutely continuous to pure point spectrum. While the localized regime is well understood (see \cite{AW, CBD} and references therein), the existence of absolutely continuous spectrum remains a mystery (nonetheless, see \cite{Kl,ASW,FHS,GKS}).  

In order to understand how absolutely continuous spectrum survives in spite of the disorder, it has been proposed to modulate the random potential by a decaying envelope \cite{DSS, D, FGKM,KKO,Kr,KU}, this is, to replace the usual i.i.d. random variables $\{V(n):\, n\in\Z^d\}$ by $a_n V(n)$, where $(a_n)_n$ is a deterministic sequence satisfying $a_n \sim |n|^{-\alpha}$ for some decay rate $\alpha>0$.
%This model is well studied in dimension one \cite{Si95,KRS} and a critical value of $\alpha$ gives a rise to a transition from point spectrum to continuous and singular spectrum \cite{D}.
For large values of $\alpha$ and dimensions $d\geq 3$, scattering methods can be applied, leading to the proof of absolutely continuous spectrum \cite{Kr}. A wider range of values of $\alpha$ was considered by Bourgain in dimension $2$ \cite{B1} and higher \cite{B2}. Point spectrum was also showed to hold outside the essential spectrum of the operator in \cite{KKO}. 

It is well known that, in the i.i.d. case, the one-dimensional Anderson model always displays pure point spectrum
\cite{Car,GMP,KS,DSS,BDFG,DG,GZ,JZ,HSS,DaSiSt} while the addition of a decaying envelope leads to a rich phase diagram as the value of $\alpha$ varies. Transfer matrix analysis can be applied, leading to a complete understanding of the spectrum of the operator \cite{DSS, KLS} in the discrete and continuum setting (see also \cite{KU} for a related model). This time, absolutely continuous spectrum can still be observed for large values of $\alpha$. As it is natural to expect, small values of $\alpha$ lead to pure point spectrum. Interestingly, there is a critical value of $\alpha$ for which a transition from pure point to singular continuous spectrum occurs as a function of the coupling constant. The three above regimes correspond to $\alpha>\frac12$, $\alpha<\frac12$ and $\alpha=\frac12$ respectively. A complete study of the spectral behaviour of the one-dimensional discrete and continuum models is given in \cite{KLS} (see also \cite{BMT00}).

From the dynamical point of view, it is standard to show that the system propagates for $\alpha>\frac12$. For the critical case $\alpha=\frac12$, no transition occurs at the dynamical level, despite of the spectral transition: there are non-trivial transport exponents for all values of the coupling constant \cite{GKT} for both the discrete and continuum model (see also \cite{BMT00,BMT01} for elementary arguments showing delocalization). This provides yet another example of a model where spectral localization and transport coexist. Dynamical localization in the regime $0<\alpha<\frac12$ for the discrete model was shown in \cite{Si82}. 

In the present paper, we show dynamical localization for the continuum model in the sub-critical region $0<\alpha<\frac12$. This was left as an open question in \cite{DS}, where the authors develop a continuum version of the Kunz-Souillard method \cite{KS}. Instead, we have chosen to work within the framework of the continuum fractional moment method of \cite{AENSS} in the one dimensional version of \cite{HSS}. In addition, our proof involves a fine tuning of various auxiliary technical ingredients scattered in the literature e.g. \cite{KLS, CKM, CHN}.

\subsection*{Structure of the article}

We present the model and the main results in Section \ref{sec:model-and-results}. In Section \ref{sec:transfer-matrices} we recall the Pr\"ufer transform formalism and give some preliminary bounds. The fractional moments of the Green's function are studied in Section \ref{sec:FMM}. In Section \ref{sec:DL}, we prove the main theorem on dynamical localization. Section \ref{sec:consequences} contains the proof of some consequences of our main result. Finally, the Appendix contains some estimates used along the proofs.

%%%%%%%%%%%%%%%%%%%%%%%%%%%%%%%%%%%%%%%%%%%%%%%%%%%%%%%%%%%%
%%%%%%%%%%%%%%%%%%%%%%%%%%%%%%%%%%%%%%%%%%%%%%%%%%%%%%%%%%%%
%%%%%%%%%%%%%%%%%%%%%%%%%%%%%%%%%%%%%%%%%%%%%%%%%%%%%%%%%%%%
%%%%%%%%%%%%%%%%%%%%%%%%%%%%%%%%%%%%%%%%%%%%%%%%%%%%%%%%%%%%

\section{Model and main results}\label{sec:model-and-results}

%%%%%%%%%%%%%%%%%%%%%%%%%%%%%%%%%%%%%%%%%%%%%%%%%%%%%%%%%%%%
%%%%%%%%%%%%%%%%%%%%%%%%%%%%%%%%%%%%%%%%%%%%%%%%%%%%%%%%%%%%
%%%%%%%%%%%%%%%%%%%%%%%%%%%%%%%%%%%%%%%%%%%%%%%%%%%%%%%%%%%%
%%%%%%%%%%%%%%%%%%%%%%%%%%%%%%%%%%%%%%%%%%%%%%%%%%%%%%%%%%%%
%%%%%%%%%%%%%%%%%%%%%%%%%%%%%%%%%%%%%%%%%%%%%%%%%%%%%%%%%%%%
%%%%%%%%%%%%%%%%%%%%%%%%%%%%%%%%%%%%%%%%%%%%%%%%%%%%%%%%%%%%
%%%%%%%%%%%%%%%%%%%%%%%%%%%%%%%%%%%%%%%%%%%%%%%%%%%%%%%%%%%%
Let $\{\omega_n:\, n\in\Z\}$ be an i.i.d. family of bounded random variables defined on a probability space $(\Omega,\mathcal{F},\p)$. Assume that $\esp[\omega_0]=0$ and $\esp[\omega_0^2]=1$. We denote by $\omega^+$ and $\omega^-$ the supremum and infimum of the support of the $\omega_n$'s. By choosing an appropriate probability space, we may assume that these random variables are bounded and not merely bounded with probability $1$. We will always assume that the random variables $\omega_n$ admit a bounded density $\rho$ which is hence supported on $[\omega^-,\omega^+]$. 
%Note that the probability space can be constructed in such a way that the random variables are bounded and not just almost-surely bounded. 
\newline
Let $u$ be a non-negative bounded function with support in $(0,1)$ and denote $u_n(\cdot)=u(\cdot-n)$. This is often called the single-site potential. We assume that there exists a non-trivial interval $J\subset (0,1)$ and two constants $c_u,C_u\in (0,\infty)$ such that
\begin{eqnarray}\label{eq:single-site-potential}
	c_u \chi_J \leq u \leq C_u \chi_{[0,1]}.
\end{eqnarray}
Finally, let $\lambda\ne0$ and $\alpha>0$, and let $(a_n)_n$ be a positive sequence such that $\displaystyle\lim_{|n|\to\infty} a_n |n|^{\alpha}=1$.

We consider the random operator
\begin{eqnarray}
	\opH= - \Delta + \lambda V_{\omega} \quad \mathrm{on}\quad L^2(\R),
\end{eqnarray}
where $V_\omega$ is a multiplication operator by the function
 \begin{eqnarray}
 	V_{\omega}(x) = \sum_{n\in\Z} a_n \omega_n u_n(x).
 \end{eqnarray}
Notice that $(\opH)_{\omega}$ is a non-ergodic family of random operators which are essentially self-adjoint on $C_0^{\infty}(\R)$, the space of infinitely differentiable compactly supported functions. 
%%%%%%%%%%%%%%%%%%%%%%%%%%%%%%%%%%%%%%%%%%%%%%%%%%%%%%%%%%%%
%%%%%%%%%%%%%%%%%%%%%%%%%%%%%%%%%%%%%%%%%%%%%%%%%%%%%%%%%%%%
%%%%%%%%%%%%%%%%%%%%%%%%%%%%%%%%%%%%%%%%%%%%%%%%%%%%%%%%%%%%
%%%%%%%%%%%%%%%%%%%%%%%%%%%%%%%%%%%%%%%%%%%%%%%%%%%%%%%%%%%%

We start recalling the spectral results of \cite{KLS} which give a complete characterization of the spectrum of $\opH$ for all the possible combinations of parameters. The following is proved for the model on the half-line. 
%%%%%%%%%%%%%%%%%%%%%%%%%%%%%%%%%%%%%%%%%%%%%%%%%%%%%%%%%%%%
%%%%%%%%%%%%%%%%%%%%%%%%%%%%%%%%%%%%%%%%%%%%%%%%%%%%%%%%%%%%
\begin{theorem}\label{thm:KLS}
 Under the hypothesis above, the essential spectrum of $\opH$ is $\p$-a.s. equal to $[0,\infty)$. Furthermore, 

\begin{enumerate}
	\item[(1)] \textbf{Super-critical case.} If $\alpha>\frac12$ then for all $\lambda\in\mathbb{R}$, the spectrum of $\opH$ is almost surely purely absolutely continuous in $(0,\infty)$.
					\vspace{1ex}

    \item[(2)] \textbf{Critical case.} If $\alpha=\frac12$ then for all $\lambda \neq 0$, the a.c. spectrum of $\opH$ is almost surely empty. Furtheremore, for each $\lambda \neq 0$, there exists $E_0(\lambda)\geq 0$ such that, almost surely, the spectrum of $\opH$ is pure point in $(0,E_0(\lambda))$ and purely singular continuous in $[E_0(\lambda),\infty)$.

	\vspace{1ex}
		
	\item[(3)] \textbf{Sub-critical case.} If $0<\alpha<\frac12$ then for all $\lambda\neq 0$, the spectrum of $\opH$ is almost surely pure point in $(0,\infty)$.
\end{enumerate}
\end{theorem}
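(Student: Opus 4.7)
The statement recalls the spectral picture of Kiselev–Last–Simon, so my plan is the standard Prüfer/subordinacy strategy, with the regime split driven by the $\ell^2$-summability of the envelope $(a_n)$.

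First, for the essential spectrum, I would observe that $V_\omega(x)\to 0$ as $|x|\to\infty$ (uniformly in $\omega$, since $\omega_n$ is bounded and $a_n\to 0$). Standard relative-compactness arguments then give that $(\opH-i)^{-1}-(-\Delta-i)^{-1}$ is compact, so by Weyl's theorem $\sigma_{ess}(\opH)=[0,\infty)$ almost surely. The real work is the spectral type inside $(0,\infty)$.

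Next I would set up modified Prüfer variables $(R,\theta)$ at each positive energy $E=k^2$, writing the solution $\psi$ of $-\psi''+\lambda V_\omega\psi=E\psi$ as $\psi=R\sin\theta$, $\psi'=kR\cos\theta$. This yields
\begin{align*}
(\log R)' &= \tfrac{\lambda V_\omega}{2k}\sin(2\theta),\qquad \theta' = k-\tfrac{\lambda V_\omega}{k}\sin^2\theta.
\end{align*}
Integrating over unit cells produces discrete recursions of the form $\log R_{n+1}-\log R_n = \lambda a_n\omega_n F_n(\theta_n)+O(a_n^2)$, where $F_n$ is a bounded function, and a similar recursion with a deterministic drift for $\theta_n$. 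The whole analysis reduces to the asymptotics of the random sum $S_n(E):=\sum_{j\leq n}a_j\omega_j F_j(\theta_j(E))$. The key input is that the phases $\theta_n$ equidistribute modulo $\pi$ sufficiently well that, up to lower-order corrections, $S_n$ behaves like a martingale with conditional variance of order $\sum_{j\leq n}a_j^2$.

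With that in hand, the three regimes fall out. In the super-critical case $\alpha>\frac12$, $\sum a_n^2<\infty$, so Doob's martingale convergence (plus controlling the phase drift) gives $\sup_n |\log R_n(E)|<\infty$ for Lebesgue-a.e.\ $E>0$, a.s. Boundedness of all solutions, via the Last–Simon criterion, yields $(0,\infty)\subset\sigma_{ac}$ and no singular part there. In the sub-critical case $\alpha<\frac12$, $\sum_{j\leq n}a_j^2\sim c\,n^{1-2\alpha}\to\infty$ and a martingale CLT yields $\log R_n(E)/n^{(1-2\alpha)/2}$ converging to a non-degenerate Gaussian. This is sub-linear but super-logarithmic growth, so the Lyapunov exponent vanishes yet transfer matrices grow faster than any power. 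Combined with a Kotani/Simon-type transversality argument for the dependence on $E$, one shows that for a.e.\ $E>0$ there exists a subordinate solution at $+\infty$ which moreover decays sufficiently fast to be $L^2$; Gilbert–Pearson then gives pure point spectrum in $(0,\infty)$. In the critical case $\alpha=\tfrac12$, $\sum_{j\leq n}a_j^2\sim c\log n$, so $R_n(E)\sim n^{\kappa(E,\omega)}$ with $\kappa$ random and proportional to $|\lambda|$; the threshold $\kappa=\tfrac12$ separates $L^2$ eigenfunctions from merely subordinate ones, producing pure point on $(0,E_0(\lambda))$ and singular continuous on $[E_0(\lambda),\infty)$.

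The main obstacle, and the reason the argument is far from routine, is passing from pointwise-in-$E$ statements about $\log R_n(E)$ to a statement that holds simultaneously for Lebesgue-a.e.\ $E$ while controlling the joint distribution of $(\log R_n,\theta_n)$. The equidistribution of $\theta_n$ must be quantitative enough to absorb the drift from the potential and the $n$-dependence of $a_n$ (note that the process is non-stationary, so one cannot directly invoke Furstenberg/Kotani for ergodic cocycles). A Jitomirskaya–Last type Hausdorff-dimensional analysis is required at the critical line to pin down $E_0(\lambda)$ exactly. I would treat this non-ergodic perturbation theory as the principal technical hurdle and expect the sub-critical $L^2$ bound on subordinate solutions to be the most delicate moment of the proof.
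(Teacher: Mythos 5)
First, note the paper does not prove Theorem~\ref{thm:KLS}: it quotes it from Kiselev--Last--Simon \cite{KLS}, and the only part of the KLS machinery the paper actually reuses is Proposition~\ref{thm:lyapunov} together with the martingale decomposition of Appendix~\ref{app:martingale}. Measured against the KLS argument your sketch is trying to reproduce, there is a genuine error in the mechanism you propose for the sub-critical and critical regimes.

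You write the cell recursion as $\log R_{n+1}-\log R_n=\lambda a_n\omega_n F_n(\theta_n)+O(a_n^2)$ and then discard the $O(a_n^2)$ remainder, so that the growth of $\log R_n$ is read off from a martingale CLT at scale $n^{(1-2\alpha)/2}$. But when $\alpha\le\frac12$ the series $\sum a_n^2$ diverges, and that ``remainder'' is the dominant term, not a correction: expanding the Pr\"ufer ODE to second order in $a_n\omega_n$ and using $\esp[\omega_n]=0$, $\esp[\omega_n^2]=1$, one gets a \emph{deterministic} drift
\[
\esp\bigl[\log R(n)\bigr]=\beta(\lambda,E)\sum_{j\le n}j^{-2\alpha}+o\Bigl(\sum_{j\le n}j^{-2\alpha}\Bigr),\qquad
\beta(\lambda,E)=\frac{\lambda^2}{8E}\Bigl|\int_0^1 u(y)\,\e^{i\sqrt{E}\,y}\,dy\Bigr|^2,
\]
which is of order $n^{1-2\alpha}$ (resp.\ $\log n$ at $\alpha=\frac12$), strictly larger than the $n^{(1-2\alpha)/2}$ (resp.\ $\sqrt{\log n}$) martingale fluctuation. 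This is precisely Proposition~\ref{thm:lyapunov} and the decomposition \eqref{eq:martingale-decomposition}. Your version, a mean-zero Gaussian limit for $\log R_n/n^{(1-2\alpha)/2}$, gives no positivity at all; a centered Gaussian is as likely to be negative as positive, so by itself it does not force transfer-matrix growth, does not produce a decaying subordinate solution, and does not show that such a solution is $L^2$. The pure-point conclusion for $\alpha<\frac12$ therefore does not follow from what you wrote. For the same reason, the exponent $\kappa$ in your $R_n\sim n^{\kappa}$ at $\alpha=\frac12$ is not ``random and proportional to $|\lambda|$'': it is the deterministic quantity $\beta(\lambda,E)$, and the transition energy $E_0(\lambda)$ is the level at which this deterministic exponent crosses the square-integrability threshold, which is what makes the pp/sc boundary sharp in $E$ rather than washed out by randomness. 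The essential-spectrum and super-critical parts of your outline are fine.
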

%%%%%%%%%%%%%%%%%%%%%%%%%%%%%%%%%%%%%%%%%%%%%%%%%%%%%%%%%%%%
%%%%%%%%%%%%%%%%%%%%%%%%%%%%%%%%%%%%%%%%%%%%%%%%%%%%%%%%%%%%
Delocalization, or spreading of wave packets, for $\alpha>\frac12$ follows from the RAGE theorem \cite{CFKS}. The situation is particularly interesting for $\alpha=\frac12$: non-trivial transport occurs regardless of the precise nature of the spectrum. In particular, this provides an example of an operator displaying pure point spectrum but no dynamical localization. To describe the dynamics, we consider the time-averaged random moment of order $p\ge0$ for the evolution, 
%in the operator norm, 
initially spatially localized at the origin and localized in energy by a positive function $f\in C_{0}^\infty(\R)$,
\begin{equation*}
	\mathbb{M}_{\omega}(p,f,T):=\frac{2}{T}\int^{\infty}_0 \e^{-\frac{2t}{T}} \left\| |X|^{\frac{p}{2}} \e^{-it\opH}f(\opH) \chi_0 \right\|^2 dt,
\end{equation*}
where $|X|$ denotes the position operator and $\chi_x$ denotes the characteristic function of the interval $[x,x+1]$.
The following result is proved in \cite{GKT} for the model defined on the half-line. 
%%%%%%%%%%%%%%%%%%%%%%%%%%%%%%%%%%%%%%%%%%%%%%%%%%%%%%%%%%%%
%%%%%%%%%%%%%%%%%%%%%%%%%%%%%%%%%%%%%%%%%%%%%%%%%%%%%%%%%%%%
\begin{theorem}\label{thm:GKT}
	Let $\alpha=\frac12$ and $\lambda \in \R$. The following holds $\p$-amost surely: for all positive $f\in C^{\infty}_0(\R)$ constantly equal to $1$ on a compact interval $J\subset (0,\infty)$, for any $\nu>0$ and all $p>2\gamma_J +\nu$ where 
	$\gamma_J=\inf\{\lambda(8E)^{-1}|\widehat{u}(\sqrt{E})|:\, E\in J\}$, 
	there exists $C_{\omega}(p,J,\nu)>0$ such that
	\begin{eqnarray*}
		\mathbb{M}_{\omega}(p,f,T) \geq C_{\omega}(p,J,\nu) T^{p-2\gamma_J-\nu},
	\end{eqnarray*}
	for all sufficiently large values of $T$ and where $\widehat u$ denotes the Fourier coefficient of $u$.
\end{theorem}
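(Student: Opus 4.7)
The plan is to combine a sharp Pr\"ufer variable analysis at the critical scale $\alpha=\tfrac12$ with a Jitomirskaya--Last type dynamical lower bound that converts polynomial bounds on transfer-matrix growth into polynomial lower bounds on transported moments. First I would fix $E\in J$ and consider solutions of the eigenvalue equation $-u''+\lambda V_\omega u=Eu$, introducing the modified Pr\"ufer variables $u(x)=R(x,E)\sin\theta(x,E)$ and $u'(x)/\sqrt E=R(x,E)\cos\theta(x,E)$, which satisfy the coupled system
\[
(\log R)'=\frac{\lambda V_\omega}{2\sqrt E}\sin 2\theta,\qquad \theta'=\sqrt E-\frac{\lambda V_\omega}{\sqrt E}\sin^2\theta.
\]

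The core computation is the critical-case Pr\"ufer radius growth. Substituting $V_\omega=\sum_n a_n\omega_n u_n$ and expanding $\sin 2\theta$ around the free motion $\theta(x)\simeq\sqrt E\,x$, the perturbative expansion in $\lambda$ produces a deterministic drift of order $\sum_{n\le x}a_n^2=O(\log x)$ since $a_n\sim n^{-1/2}$. Careful bookkeeping of oscillatory cancellations and of the Fourier content of $u$ (which yields factors of $\widehat u(\sqrt E)$) leads, in the spirit of \cite{KLS}, to
\[
\log R(x,E)=\gamma(E)\log x+M_\omega(x,E)+o(\log x),
\]
where $\gamma(E)=\lambda(8E)^{-1}|\widehat u(\sqrt E)|$ and $M_\omega$ is a martingale of size $O(\sqrt{\log x})$ controlled via Doob's inequality. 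By joint continuity of the Pr\"ufer dynamics in $E$ and a covering argument on the compact interval $J\subset(0,\infty)$, I would upgrade this pointwise estimate to the uniform polynomial bound $\sup_{E\in J}\|T_L(E)\|\le L^{\gamma_J+\nu/4}$ for all $L\ge L_0(\omega)$.

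Finally, I would feed this uniform transfer matrix bound into the Jitomirskaya--Last--Killip machinery, which relates transfer-matrix growth to the local $L^2$-mass of the Weyl solutions at scale $L$ and, through a Guarneri--Combes--Last type inequality, to lower bounds on the transported moments. Selecting the transport length $L(T)\sim T^{1/(2\gamma_J+\nu)}$ and integrating over the spectral window selected by $f$, one obtains the stated bound $\mathbb{M}_\omega(p,f,T)\ge C_\omega(p,J,\nu)\,T^{p-2\gamma_J-\nu}$ for all sufficiently large $T$.

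The hardest step is the identification of the prefactor $\gamma(E)$ in the critical Pr\"ufer computation: separating the deterministic $\log x$ drift from the martingale noise and controlling the resonant cancellations driven by the single-site Fourier coefficient is a delicate stochastic computation, and this is the place where the combination $\lambda(8E)^{-1}|\widehat u(\sqrt E)|$ emerges with the correct numerical constant. The uniformity in $E\in J$ required to pass from pointwise Pr\"ufer bounds to the dynamical inequality is a second technical point, but is tractable because $J$ is bounded away from $0$, so no small-denominator issues arise.
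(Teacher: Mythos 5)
This statement is quoted verbatim from \cite{GKT} and the paper does not prove it; it is merely recorded as background, so there is no internal proof to compare against. Your overall strategy -- a critical-scale Pr\"ufer/martingale analysis to extract the power-law growth exponent $\gamma(E)$ of the transfer matrices, followed by the Germinet--Kiselev--Tcheremchantsev transport machinery converting polynomial transfer-matrix upper bounds into lower bounds on time-averaged moments -- is indeed the skeleton of the argument in \cite{GKT}, and is the right framework.

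There is, however, a concrete logical error in the intermediate step. You write that you would ``upgrade this pointwise estimate to the uniform polynomial bound $\sup_{E\in J}\|T_L(E)\|\le L^{\gamma_J+\nu/4}$,'' where $\gamma_J=\inf_{E\in J}\gamma(E)$. This cannot be correct as stated: continuity of $E\mapsto\gamma(E)$ gives a uniform upper bound with exponent $\sup_{E\in J}\gamma(E)$, not the infimum. At energies $E$ with $\gamma(E)$ strictly larger than $\gamma_J$, the transfer matrices genuinely grow like $L^{\gamma(E)}$, so $\sup_{E\in J}\|T_L(E)\|$ is of order $L^{\sup_J\gamma}$. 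The repair is standard but essential: by continuity of $\gamma$ one can choose a nondegenerate subinterval $J'\subset J$ on which $\gamma(E)\le\gamma_J+\nu/8$, and then establish the transfer-matrix bound uniformly on $J'$ only. Since $f\equiv 1$ on $J\supset J'$, the spectral weight needed by the GKT dynamical lower bound is still available on $J'$, and one recovers the stated exponent $p-2\gamma_J-\nu$. Without passing to the subinterval the argument proves a weaker bound with $\sup_J\gamma$ in place of $\gamma_J$. A second, minor remark: Doob's inequality controls the martingale $M_\omega$ at a fixed energy, but the uniformity in $E$ you need must be handled more carefully (typically via a net of energies plus a Lipschitz-in-$E$ estimate on $\theta$ and $R$ over a window of width $L^{-1}$); this is an additional technical step that your ``covering argument'' alludes to but does not carry out, and it is where most of the work in \cite{GKT} actually lies.
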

%%%%%%%%%%%%%%%%%%%%%%%%%%%%%%%%%%%%%%%%%%%%%%%%%%%%%%%%%%%%
%%%%%%%%%%%%%%%%%%%%%%%%%%%%%%%%%%%%%%%%%%%%%%%%%%%%%%%%%%%%
We complete the dynamical study of the model by addressing the question of dynamical localization for $0<\alpha<\frac12$. The corresponding result for the discrete model was obtained in \cite{Si82}.

 For an interval of energy $I$, we define the correlator 
\begin{eqnarray}\label{eq:correlator}
	Q_{\omega,\lambda}(x,y;I)
	=\sup_{\substack{f\in \mathcal{C}_{c}(I)\\ \norm{f}\le1}}
	\norm{ \chi_x f(\opH) \chi_y},
\end{eqnarray}
where $\mathcal{C}_c(I)$ denotes the space of bounded measurable functions compactly supported in $I$.
%%%%%%%%%%%%%%%%%%%%%%%%%%%%%%%%%%%%%%%%%%%%%%%%%%%%%%%%%%%%
%%%%%%%%%%%%%%%%%%%%%%%%%%%%%%%%%%%%%%%%%%%%%%%%%%%%%%%%%%%%
We now give the notion of dynamical localization we will use in this work.
%%%%%%%%%%%%%%%%%%%%%%%%%%%%%%%%%%%%%%%%%%%%%%%%%%%%%%%%%%%%
%%%%%%%%%%%%%%%%%%%%%%%%%%%%%%%%%%%%%%%%%%%%%%%%%%%%%%%%%%%%
\begin{definition}
We say that $\opH$ exhibits \textit{dynamical localization} in an interval $I\subset\R$ if we have
\begin{equation}\label{eq:dynamical-localization}
\sum_{x\in\Z} \esp\left[ Q_{\omega,\lambda}(x,y;I)^2\right]<\infty,
\end{equation}
for all $y\in\Z$.
\end{definition}
%%%%%%%%%%%%%%%%%%%%%%%%%%%%%%%%%%%%%%%%%%%%%%%%%%%%%%%%%%%%
%%%%%%%%%%%%%%%%%%%%%%%%%%%%%%%%%%%%%%%%%%%%%%%%%%%%%%%%%%%%
Our main theorem is the following.
\begin{theorem}\label{thm:DL}
	Let $0<\alpha<\frac12, \lambda\neq 0$ and let $I\in(0,\infty)$ be a compact interval.
	For each $y\in\Z$, there exist constants $c=c(I,y),\, C=C(I,y) \in (0,\infty)$ such that
	\begin{eqnarray*}
		\esp\left[ Q_{\omega,\lambda}(x,y;I) \right] 
		\leq
		C
		\e^{-c|x|^{1-\alpha}},
	\end{eqnarray*}
	for all $x\in\R$. In particular, dynamical localization in the sense of \eqref{eq:dynamical-localization} holds in any compact subinterval of $(0,\infty)$.
\end{theorem}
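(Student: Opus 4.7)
The plan is to carry out a one-dimensional continuum fractional moment method, adapted to the decaying-potential setting, in the spirit of \cite{AENSS,HSS}. The first step is to reduce the correlator bound for $Q_{\omega,\lambda}(x,y;I)$ to a fractional moment estimate on the resolvent. Using the HSS machinery together with a suitable Schur test for characteristic functions of unit intervals, one obtains, for some $s\in(0,1)$ and some constant $C_s$, a bound of the form
\begin{equation*}
\esp[Q_{\omega,\lambda}(x,y;I)] \leq C_s \int_{I} \esp\bigl[|G_{\omega,\lambda}(x,y;E+i0)|^{s}\bigr]\,dE,
\end{equation*}
where $G_{\omega,\lambda}(x,y;z) = \langle\chi_x,(\opH-z)^{-1}\chi_y\rangle$ and boundary values at $\epsilon\downarrow 0$ are controlled by Combes--Thomas near the spectral edge of $I$. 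The problem therefore reduces to establishing the uniform sub-exponential decay
\begin{equation*}
\sup_{E\in I,\,\epsilon>0} \esp\bigl[|G_{\omega,\lambda}(x,y;E+i\epsilon)|^{s}\bigr] \leq C\,e^{-c|x-y|^{1-\alpha}}
\end{equation*}
for all $x,y$, uniformly over a compact $I\subset(0,\infty)$.

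The second step uses the Pr\"ufer formalism from Section \ref{sec:transfer-matrices}. The Green's function can be bounded by the inverse transfer matrix norm, $|G_{\omega,\lambda}(x,y;E+i\epsilon)|\lesssim \|\bT_{y,x}\|^{-1}$ (up to negligible boundary terms), so it suffices to control $\esp[\|\bT_{y,x}\|^{-s}]$. Factoring $\bT_{y,x}$ into its single-site components $R_n=R_n(E,\lambda a_n\omega_n)$ and conditioning telescopically on $\omega_n$, the key technical input is a per-step contraction of the form
\begin{equation*}
\esp_{\omega_n}\Bigl[\bigl(\|R_n\mathbf{v}\|/\|\mathbf{v}\|\bigr)^{-s}\Bigr] \leq 1-c_s\min(1,\lambda a_n),
\end{equation*}
uniformly in $E\in I$ and in the deterministic direction $\mathbf{v}$. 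This is the continuum analogue of Simon's key step in \cite{Si82}, and should follow by combining the boundedness of the density $\rho$, decoupling inequalities in the spirit of \cite{CKM,CHN}, and a careful expansion of the single-site Pr\"ufer rotation in powers of $\lambda a_n$. The point is that the fractional exponent $s<1$ ensures that a first-order-in-$a_n$ term survives the integration against $\rho$, so the contraction defect is linear in $a_n$ rather than quadratic.

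Iterating this estimate and using the independence of the $\omega_n$ yields
\begin{equation*}
\esp\bigl[\|\bT_{y,x}\|^{-s}\bigr] \leq \prod_{n=y}^{x}\bigl(1-c_s\min(1,\lambda a_n)\bigr) \leq \exp\Bigl(-c_s\sum_{n=y}^{x}a_n\Bigr),
\end{equation*}
and since $a_n\sim|n|^{-\alpha}$ with $0<\alpha<\tfrac12$, we have $\sum_{n=y}^{x}a_n\geq c\,|x-y|^{1-\alpha}$ for $|x|$ large. Integrating over $E\in I$ and combining with the first step produces the announced bound on $\esp[Q_{\omega,\lambda}(x,y;I)]$; since $1-\alpha>\tfrac12$, the series $\sum_{x\in\Z}e^{-2c|x|^{1-\alpha}}$ converges, yielding \eqref{eq:dynamical-localization} for any compact subinterval of $(0,\infty)$. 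The main obstacle is the per-site linear-in-$a_n$ contraction: for large $|n|$ the model is nearly free and the usual fractional-moment bounds for Anderson-type operators degenerate, so one must extract a defect from $1$ that is \emph{linear} (not quadratic) in $a_n$, with constants uniform for $E$ in a compact subset of $(0,\infty)$ bounded away from the spectral edge. Once this uniform bound is in hand, the remaining steps are standard bookkeeping within the continuum FMM framework of \cite{HSS,AENSS}.
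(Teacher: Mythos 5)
Your overall architecture matches the paper's: reduce the eigenfunction correlator to fractional moments of the finite-volume Green's function (the paper does this via Theorem \ref{thm:from-green-to-correlator}, which combines the Aizenman--Elgart--Naboko--Schenker--Stolz fractional eigenfunction correlator machinery with the Hamza--Sims--Stolz one-dimensional adaptations), then control those fractional moments via negative fractional moments of transfer-matrix norms (Lemma \ref{thm:from-green-to-transfer-matrices}, Lemma \ref{thm:key-lemma}), iterating a per-block contraction bound with a martingale conditioning. So the route is essentially the same. However, the proof as written contains a genuine error in the key quantitative step.

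You assert a per-site contraction defect \emph{linear} in $a_n$, i.e.\ $\esp_{\omega_n}\bigl[\|R_n\mathbf{v}\|^{-s}\bigr]\leq 1-c_s\min(1,\lambda a_n)$, claiming that ``the fractional exponent $s<1$ ensures that a first-order-in-$a_n$ term survives the integration against $\rho$.'' This is false, and it is the crux of the estimate. Writing $X_n=\log\|R_n\mathbf{v}\|=\lambda a_n\omega_n A+\lambda^2 a_n^2\omega_n^2 B+o(a_n^2)$ for suitable Pr\"ufer coefficients $A,B$, one has $\esp[\e^{-sX_n}]=1-s\esp[X_n]+O(s^2\esp[X_n^2])$. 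Since $\esp[\omega_n]=0$, the linear term $\lambda a_n A\,\esp[\omega_n]$ vanishes identically, and $\esp[X_n]\sim\lambda^2 a_n^2 B$; the variance term $\esp[X_n^2]$ is also of order $a_n^2$. No choice of fractional exponent $s$ resurrects the first-order term, because its vanishing is a consequence of centering, not of the moment power. Consequently the achievable defect is $\sim a_n^2\sim n^{-2\alpha}$ (this is precisely the content of Lemma \ref{thm:bounds-on-Tmn} and Lemma \ref{thm:first-bound-Tmn}, where the contraction is $1-c/l^{2\alpha}$ after grouping $n_0$ sites), and summing gives $\sum a_n^2\sim|x|^{1-2\alpha}$, not $\sum a_n\sim|x|^{1-\alpha}$. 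This matches the Lyapunov normalization in Proposition \ref{thm:lyapunov}, where $\log\|\bT_\omega(n,0;E)\|$ grows like $\sum j^{-2\alpha}$; a rate $\sum a_n$ would contradict that theorem. Your claimed bound $\e^{-c|x|^{1-\alpha}}$ therefore does not follow from your argument, and in fact cannot be obtained by this method: the correct rate is $\e^{-c|x|^{1-2\alpha}}$. (Note that the statement you were asked to prove as printed reads $|x|^{1-\alpha}$, but every supporting estimate in the paper --- Theorem \ref{thm:FM}, Lemma \ref{thm:key-lemma}, Theorem \ref{thm:decay-eigenfunctions}, and Theorem \ref{thm:lower-bound-DL} --- carries the exponent $1-2\alpha$; this is consistent only if the exponent in Theorem \ref{thm:DL} is a typographical slip.) You also do not track the inverse powers of $a_x$ that arise from the change-of-variables in the density (the $(\lambda a_x)^{-1/2}$ in Theorem \ref{thm:FM} and $(\lambda a_m)^{(v-1)/(2-v)}$ in Theorem \ref{thm:from-green-to-correlator}); these are polynomially growing in $|x|$ and must be absorbed by the stretched-exponential decay, but they are harmless once acknowledged. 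Finally, the block-grouping over $n_0$ consecutive sites (Lemma \ref{thm:first-bound-Tmn}) is necessary to make the contraction constant uniform; a raw single-site bound need not contract.
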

%%%%%%%%%%%%%%%%%%%%%%%%%%%%%%%%%%%%%%%%%%%%%%%%%%%%%%%%%%%%
%%%%%%%%%%%%%%%%%%%%%%%%%%%%%%%%%%%%%%%%%%%%%%%%%%%%%%%%%%%%
Although the lack of ergodicity of the model induces the dependence of \eqref{eq:dynamical-localization}  on the base site $y$, it is standard to show that this bound still implies pure point spectrum and finiteness of the moments. We will recall the proof of pure point spectrum in Appendix \ref{app:pp} as some care is needed to overcome the non-uniform bounds. Respect to the finiteness of the moments, we will prove a stronger result in Theorem \ref{thm:lower-bound-DL}.

%%%%%%%%%%%%%%%%%%%%%%%%%%%%%%%%%%%%%%%%%%%%%%%%%%%%%%%%%%%%
%%%%%%%%%%%%%%%%%%%%%%%%%%%%%%%%%%%%%%%%%%%%%%%%%%%%%%%%%%%%
%\begin{proposition}\label{thm:consequences-DL}
%	Assume dynamical localization for $\opH$ holds in the sense of \eqref{eq:dynamical-localization} in an energy interval $I\subset\R$. Then the following holds:
%	\begin{enumerate}
%		\item[(i)] The spectrum of $\opH$ is almost surely pure point in $I$.
%		
%		\item[(ii)] For all $p>0$,
%			\begin{equation*}
%				\esp\left(\sup_{t\in\R}
%					\norm{
%						|X|^p\e^{-it\opH}P_I(\opH)\varphi_0}^2
%				\right)<\infty,
%			\end{equation*}
%			for all $\varphi_0 \in L^2(\R)$ with bounded support. 
%	\end{enumerate}
%\end{proposition}
%%%%%%%%%%%%%%%%%%%%%%%%%%%%%%%%%%%%%%%%%%%%%%%%%%%%%%%%%%%%
%%%%%%%%%%%%%%%%%%%%%%%%%%%%%%%%%%%%%%%%%%%%%%%%%%%%%%%%%%%%
Our analysis provides a control on the eigenfunctions of the operator $\opH$. Let $\vp_{\omega,E}$ denote the eigenfunction of $\opH$ corresponding to the eigenvalue $E$. The analysis of \cite[Theorem 8.6]{KLS} can be adapted to the continuum setting to show that
\begin{equation*}
	\lim_{n\to \infty} \frac{1}{n^{1-2\alpha}} \log \sqrt{|\vp_{\omega,E}(n)|^2 + |\vp'_{\omega,E}(n)|^2} = -\beta(\lambda,E), \quad \p-\text{a.s.},
\end{equation*}
for almost every fixed $E\in(0,\infty)$, where $\beta(\lambda,E)$ is explicitly given in \cite[Theorem 9.2]{KLS}(see Proposition \ref{thm:lyapunov} below). In particular, this shows that for almost every $E\in(0,\infty)$, $\p$-almost surely, there exists a finite constant $C_{\omega,E}$ such that
\begin{equation*}
	\sqrt{|\vp_{\omega,E}(x)|^2 + |\vp'_{\omega,E}(x)|^2} \leq C_{\omega,E}\ \e^{-\beta(\lambda,E) |x|^{1-2\alpha}}.
\end{equation*}
It is known that certain types of decay of eigenfunctions are closely related to dynamical localization \cite{DeRJLS1,DeRJLS2,GT}. Such criteria usually require a control on the localization centres of the eigenfunctions, uniformly in energy intervals. This information is missing in the above bound. We provide this uniform control in the next proposition.
%%%%%%%%%%%%%%%%%%%%%%%%%%%%%%%%%%%%%%%%%%%%%%%%%%%%%%%%%%%%
%%%%%%%%%%%%%%%%%%%%%%%%%%%%%%%%%%%%%%%%%%%%%%%%%%%%%%%%%%%%
\begin{theorem} \label{thm:decay-eigenfunctions}
	Let $0<\alpha<\frac12$ and let $\lambda\neq 0$.
	For all compact energy interval $I\subset (0,\infty)$, there exists two deterministic constants $c_1=c_1(I),\, c_2=c_2(I)$ and almost surely finite positive random quantities $c_{\omega}=c_{\omega}(I),\, C_{\omega}=C_{\omega}(I)$ such that
	\begin{eqnarray}\label{eq:SULE}
		c_{\omega} \, \e^{-c_1 |x|^{1-2\alpha}} 
		\leq
		\norm{ \chi_x  \vp_{\omega,E}}
		 \leq C_{\omega} \e^{-c_2 |x|^{1-2\alpha}}, 
		 %\quad \p-\text{a.s.},
		%\sqrt{|u_{\omega,E}(x)|^2 + |u'_{\omega,E}(x)|^2}
	\end{eqnarray}
	for all $E\in I\cap\sigma(\opH)$ and all $x\in\R$.
	%Furtheremore, we can take any $c_1 < \beta(\lambda,E)$.
\end{theorem}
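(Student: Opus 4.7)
Both the upper and lower bounds stem from the continuum analogue of \cite[Theorem 9.2]{KLS} recalled in the paragraph preceding the theorem: for each fixed $E\in(0,\infty)$, $\p$-almost surely,
\[
\lim_{|n|\to\infty}\frac{1}{|n|^{1-2\alpha}}\log\sqrt{|\vp_{\omega,E}(n)|^2+|\vp'_{\omega,E}(n)|^2} = -\beta(\lambda,E),
\]
with $\beta(\lambda,\cdot)$ a positive continuous function of $E$ on $(0,\infty)$. The plan is therefore to set $c_1 = \sup_{E\in I}\beta(\lambda,E) + \nu$ and $c_2 = \inf_{E\in I}\beta(\lambda,E) - \nu$ for a small $\nu > 0$ (so that both are positive on the compact interval $I\Subset(0,\infty)$), and to convert the pointwise asymptotics on $(\vp_{\omega,E},\vp'_{\omega,E})$ into the claimed local $L^2$ bounds \eqref{eq:SULE}. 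The conversion step is routine: by Sobolev embedding on $[n,n+1]$, combined with the bounded coefficients of the eigenvalue equation $-\vp'' + \lambda V_\omega \vp = E\vp$, one obtains
\[
c\,\sqrt{|\vp(n)|^2+|\vp'(n)|^2} \leq \norm{\chi_n \vp} \leq C\,\sqrt{|\vp(n)|^2+|\vp'(n)|^2}
\]
with $c,C$ depending only on $\sup I$, $\lambda$, and $\omega^\pm$.

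The main obstacle is uniformity in $E$: the limit above is almost sure for each fixed $E$, whereas the theorem requires an event of full measure on which both bounds hold simultaneously for all $E\in I\cap\sigma(\opH)$. I would tackle this in two steps. First, exploit H\"older continuity in $E$ of the transfer matrices $T_E(n)$ to interpolate between a countable dense set of energies of $I$. Second, establish a large deviation estimate for $\log\|T_E(n)\|$ uniform in $E\in I$, controlling the fluctuations around the mean $\beta(\lambda,E)|n|^{1-2\alpha}$. Since the decaying envelope $a_n\sim |n|^{-\alpha}$ destroys stationarity, the usual subadditive ergodic toolkit is unavailable, so one must rely on the quantitative control of the Pr\"ufer variables developed in Section \ref{sec:transfer-matrices}, combined with concentration-type estimates in the spirit of \cite{CKM, CHN}.

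Once these ingredients are in place, a Borel--Cantelli argument yields, on an event of full measure, the two-sided asymptotic
\[
c_2\,|n|^{1-2\alpha} \leq -\log\sqrt{|\vp_{\omega,E}(n)|^2+|\vp'_{\omega,E}(n)|^2} \leq c_1\,|n|^{1-2\alpha}
\]
for all $|n|\geq R_\omega(I,\nu)$ and all $E\in I\cap\sigma(\opH)$. The finite-$n$ contributions (including $|x|\leq R_\omega$ for real $x$ via the ODE/Sobolev conversion above) are then absorbed into the random constants $c_\omega$ and $C_\omega$, giving \eqref{eq:SULE}. The most delicate step in the program is the uniform large deviation estimate: it is precisely this bound, combined with the H\"older regularity in $E$, that bridges the gap between pointwise-in-$E$ and uniform-in-$E$ almost-sure convergence of the Lyapunov asymptotic, and hence upgrades the pointwise statement of \cite[Theorem 9.2]{KLS} to a form strong enough to cover the whole spectrum in $I$ at once.
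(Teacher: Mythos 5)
Your proposal takes a genuinely different route from the paper, and for the upper bound it has a real gap.

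For the \emph{lower} bound, the paper is substantially more elementary than your plan. It does not try to upgrade the a.e.-$E$ Lyapunov asymptotics at all: it simply observes that \emph{any} solution obeys $\|\Psi_{\omega,E}(x)\|\geq\|\T(x,0;E)\|^{-1}$, bounds $\esp[\|\T(x,0;E)\|^{2}]\leq \e^{C_{3}|x|^{1-2\alpha}}$ uniformly in $E\in I$ directly from the martingale decomposition \eqref{eq:martingale-decomposition}, and applies Chebyshev plus Borel--Cantelli with $2c_{1}>C_{3}$. No large deviation estimate, no H\"older interpolation, no reference to $\beta(\lambda,E)$ itself. Your plan here could be made to work, but it is a much heavier hammer.

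The \emph{upper} bound is where your sketch does not close. A uniform-in-$E$ large deviation estimate for $\log\|\T(n,0;E)\|$, even granted in full, controls the \emph{expanding} behaviour of the cocycle. The eigenfunction $\vp_{\omega,E}$ is the (essentially unique) solution that lies in the \emph{contracting} direction, and the statement that it decays at the Lyapunov rate is an Osceledets/Ruelle-type assertion, not a norm bound on $\T$. In the KLS framework this is precisely the a.e.-$E$ portion of \cite[Theorem 8.6]{KLS}, and the paper's paragraph preceding Theorem~\ref{thm:decay-eigenfunctions} explicitly flags that this gives no uniform control over $E\in I\cap\sigma(\opH)$. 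Your two ingredients (H\"older regularity in $E$ plus concentration) would let you pass the \emph{rate} $\beta(\lambda,E)$ to a uniform statement, but they do not by themselves tell you that a given $\omega$-dependent eigenfunction at an $\omega$-dependent eigenvalue actually sits in the contracting subspace, uniformly over the whole (dense, $\omega$-dependent) set $I\cap\sigma(\opH)$. You would still need a separate argument identifying the eigenfunction with the contracting solution for \emph{all} $E$ in the spectrum simultaneously, which is the crux of the uniformity problem and which your program leaves unaddressed; you acknowledge the large deviation step is ``the most delicate'' but do not carry it out, and even carried out it attacks the wrong quantity.

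The paper sidesteps this entirely: for the upper bound it invokes the already-established dynamical localization (the correlator bound from Theorems~\ref{thm:FM} and~\ref{thm:from-green-to-correlator}) via the standard argument of \cite[Theorem~9.22]{CFKS}, obtaining $\|\chi_{x}\vp_{\omega,E}\|\,\|\chi_{0}\vp_{\omega,E}\|\leq C_{\omega}\e^{-c_{2}|x|^{1-2\alpha}}$ uniformly in $E\in I$, and then divides by the lower bound on $\|\chi_{0}\vp_{\omega,E}\|$. This is the key structural difference: the paper's upper bound is a \emph{consequence} of the fractional-moment machinery, whereas yours tries to run independently through transfer-matrix asymptotics, precisely the route the authors explain is insufficiently uniform. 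Your Sobolev/ODE conversion between $\sqrt{|\vp(n)|^{2}+|\vp'(n)|^{2}}$ and $\|\chi_{n}\vp\|$ is correct and is exactly Lemma~\ref{thm:general-bound-2}, so that part is fine.
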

%%%%%%%%%%%%%%%%%%%%%%%%%%%%%%%%%%%%%%%%%%%%%%%%%%%%%%%%%%%%
%%%%%%%%%%%%%%%%%%%%%%%%%%%%%%%%%%%%%%%%%%%%%%%%%%%%%%%%%%%%
The upper bound in \eqref{eq:SULE} can be seen as a stretched form of the condition SULE  where the localization centres are all equal to $0$ \cite{DeRJLS1,GK1}.

We finally state our result on the moments.
%%%%%%%%%%%%%%%%%%%%%%%%%%%%%%%%%%%%%%%%%%%%%%%%%%%%%%%%%%%%
%%%%%%%%%%%%%%%%%%%%%%%%%%%%%%%%%%%%%%%%%%%%%%%%%%%%%%%%%%%%
\begin{theorem}\label{thm:lower-bound-DL}
	Let $0<\alpha<\frac12$ and $\lambda\neq 0$, and let $I\subset \R$ be a compact interval. Then, 
	\begin{equation*}
		\esp\left(\sup_{t\in\R}
			\norm{
				\e^{\frac12 |X|^{\kappa}}\e^{-it\opH}P_I(\opH)\psi}^2
		\right)<\infty,
	\end{equation*}
	for all $\kappa<1-2\alpha$ and $\psi \in L^2(\R)$ with bounded support, while
	\begin{eqnarray*}
		\limsup_{t\to \infty} \norm{ \e^{\frac12|{X}|^{\kappa}}  \e^{-it \opH} \psi}^2 = \infty,
		\quad
		\p-a.s.,
	\end{eqnarray*}
	for all $\kappa>1-2\alpha$ and all $\psi \in {\text Ran}P_I(\opH)$.
\end{theorem}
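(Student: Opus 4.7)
The two parts are treated separately: the upper bound reduces to the correlator estimate of Theorem~\ref{thm:DL}, while the lower bound combines the lower bound on eigenfunctions from Theorem~\ref{thm:decay-eigenfunctions} with a time-averaging argument.

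\emph{Upper bound.} First I would decompose via unit cells,
\begin{equation*}
    \norm{\e^{\frac12|X|^\kappa} \e^{-it\opH} P_I(\opH) \psi}^2 \leq \sum_{x \in \Z} \e^{(|x|+1)^\kappa} \norm{\chi_x \e^{-it\opH} P_I(\opH)\psi}^2.
\end{equation*}
Since $\psi$ has bounded support contained in $[-R_0,R_0]$, I expand $\psi=\sum_{|y|\leq R_0}\chi_y\psi$. For each $t\in\R$, the function $E\mapsto \mathbf{1}_I(E)\e^{-itE}$ is an admissible test function in the definition \eqref{eq:correlator}, so $\norm{\chi_x \e^{-it\opH}P_I(\opH)\chi_y}\leq Q_{\omega,\lambda}(x,y;I)$ uniformly in $t$. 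Pulling $\sup_t$ inside the sum over $x$, taking expectations, and using $Q\leq 1$ so that $\esp[Q^2]\leq \esp[Q]$, the problem reduces to the summability of $\sum_{x\in\Z}\e^{(|x|+1)^\kappa}\,\esp[Q_{\omega,\lambda}(x,y;I)]$. By Theorem~\ref{thm:DL} this is finite whenever $\kappa<1-\alpha$, which covers the claimed range $\kappa<1-2\alpha$.

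\emph{Lower bound.} Theorem~\ref{thm:KLS}(3) yields pure point spectrum in $(0,\infty)$ almost surely, so I use the spectral decomposition $\psi=\sum_{E} P_E\psi$, where $E$ ranges over eigenvalues of $\opH$ in $I$. For $R>0$, set $g_R(x)=\e^{\frac12|x|^\kappa}\mathbf{1}_{|x|\leq R}$, which is bounded. A standard Wiener-type time-averaging argument for the almost periodic function $t\mapsto \norm{g_R \e^{-it\opH}\psi}^2$ gives, after the off-diagonal terms cancel in the Ces\`aro limit,
\begin{equation*}
    \lim_{T\to\infty}\frac{1}{T}\int_0^T \norm{g_R \e^{-it\opH}\psi}^2 \,dt = \sum_{E} \norm{g_R P_E\psi}^2.
\end{equation*}
Since $\psi\neq 0$, there is some eigenvalue $E_0\in I$ with $P_{E_0}\psi\neq 0$. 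Because $\limsup_t$ dominates the Ces\`aro mean and $g_R\leq \e^{\frac12|X|^\kappa}$, applying the lower bound of Theorem~\ref{thm:decay-eigenfunctions} to the normalized eigenfunction $P_{E_0}\psi/\norm{P_{E_0}\psi}$ yields
\begin{equation*}
    \limsup_{t\to\infty}\norm{\e^{\frac12|X|^\kappa} \e^{-it\opH}\psi}^2 \geq \norm{P_{E_0}\psi}^2\, c_\omega^2 \sum_{|k|\leq R-1} \e^{|k|^\kappa - 2c_1 |k|^{1-2\alpha}}.
\end{equation*}
Letting $R\to\infty$, the right-hand side diverges precisely when $\kappa>1-2\alpha$, which proves the claim.

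\textbf{Main obstacle.} The delicate step is the rigorous justification of the time-average identity, since the sum over eigenvalues is a priori infinite. The cleanest route is to truncate the spectral sum to finitely many eigenvalues (controlling the tail in $\norm{\psi}$), apply Wiener's theorem to the resulting almost periodic finite sum, and then pass to the limit, monotonicity in $R$ saving one from interchanging limits in the final divergent bound. Everything else amounts to careful accounting of the quantitative decay rates from Theorems~\ref{thm:DL} and~\ref{thm:decay-eigenfunctions}.
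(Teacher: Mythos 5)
Your proposal is correct and follows essentially the same route as the paper: the upper bound reduces to the stretched-exponential decay of $\esp[Q_{\omega,\lambda}(x,y;I)]$ from Theorem~\ref{thm:DL}, and the lower bound combines the eigenfunction lower bound of Theorem~\ref{thm:decay-eigenfunctions} with a Wiener/Ces\`aro time-averaging argument. Two cosmetic differences worth noting: for the upper bound the paper works via the inner-product identity $\norm{\e^{\frac12|X|^\kappa}\e^{-it\opH}P_I\chi_m}^2=\langle\e^{|X|^\kappa}\e^{-it\opH}P_I\chi_m,\e^{-it\opH}\chi_m\rangle$ and then inserts $\sum_n\chi_n$, whereas you expand $\e^{\frac12|X|^\kappa}$ directly over unit cells — both land on the same summability statement (and your observation that $\kappa<1-\alpha$ suffices is a legitimate, slightly stronger conclusion); for the lower bound the paper expands $\psi$ over the full eigenbasis of $\mathrm{Ran}\,P_I(\opH)$ and uses the uniformity of $c_\omega(I)$ over $E\in I$, while you isolate a single eigenvalue $E_0$ with $P_{E_0}\psi\neq 0$ — a mild simplification that relies, correctly, on the simplicity of the point spectrum in 1D and on the $E$-uniform constant in Theorem~\ref{thm:decay-eigenfunctions}.
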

%%%%%%%%%%%%%%%%%%%%%%%%%%%%%%%%%%%%%%%%%%%%%%%%%%%%%%%%%%%%
%%%%%%%%%%%%%%%%%%%%%%%%%%%%%%%%%%%%%%%%%%%%%%%%%%%%%%%%%%%%

%%%%%%%%%%%%%%%%%%%%%%%%%%%%%%%%%%%%%%%%%%%%%%%%%%%%%%%%%%%%
%%%%%%%%%%%%%%%%%%%%%%%%%%%%%%%%%%%%%%%%%%%%%%%%%%%%%%%%%%%%
%%%%%%%%%%%%%%%%%%%%%%%%%%%%%%%%%%%%%%%%%%%%%%%%%%%%%%%%%%%%
%%%%%%%%%%%%%%%%%%%%%%%%%%%%%%%%%%%%%%%%%%%%%%%%%%%%%%%%%%%%
%%%%%%%%%%%%%%%%%%%%%%%%%%%%%%%%%%%%%%%%%%%%%%%%%%%%%%%%%%%%
%%%%%%%%%%%%%%%%%%%%%%%%%%%%%%%%%%%%%%%%%%%%%%%%%%%%%%%%%%%%
%%%%%%%%%%%%%%%%%%%%%%%%%%%%%%%%%%%%%%%%%%%%%%%%%%%%%%%%%%%%
%%%%%%%%%%%%%%%%%%%%%%%%%%%%%%%%%%%%%%%%%%%%%%%%%%%%%%%%%%%%
%%%%%%%%%%%%%%%%%%%%%%%%%%%%%%%%%%%%%%%%%%%%%%%%%%%%%%%%%%%%
%%%%%%%%%%%%%%%%%%%%%%%%%%%%%%%%%%%%%%%%%%%%%%%%%%%%%%%%%%%%
%%%%%%%%%%%%%%%%%%%%%%%%%%%%%%%%%%%%%%%%%%%%%%%%%%%%%%%%%%%%
%%%%%%%%%%%%%%%%%%%%%%%%%%%%%%%%%%%%%%%%%%%%%%%%%%%%%%%%%%%%
%%%%%%%%%%%%%%%%%%%%%%%%%%%%%%%%%%%%%%%%%%%%%%%%%%%%%%%%%%%%
%%%%%%%%%%%%%%%%%%%%%%%%%%%%%%%%%%%%%%%%%%%%%%%%%%%%%%%%%%%%

\section{Asymptotics of Transfer Matrices and Pr\"ufer transform}\label{sec:transfer-matrices}

%%%%%%%%%%%%%%%%%%%%%%%%%%%%%%%%%%%%%%%%%%%%%%%%%%%%%%%%%%%%
%%%%%%%%%%%%%%%%%%%%%%%%%%%%%%%%%%%%%%%%%%%%%%%%%%%%%%%%%%%%
Let $\vp$ be a solution of the equation $\opH \vp = E \vp$ in some interval $[a,b]$. For $x,y \in [a,b]$,
we define the transfer matrices by the relation
\begin{eqnarray}\label{transfer matrix}
	\T(y,x;E) 
	\begin{pmatrix}
		\vp(x) \\ \vp'(x)
	\end{pmatrix}
	=
	\begin{pmatrix}
		\vp(y) \\ \vp'(y)
	\end{pmatrix}.
\end{eqnarray}
%%%%%%%%%%%%%%%%%%%%%%%%%%%%%%%%%%%%%%%%%%%%%%%%%%%%%%%%%%%%
%%%%%%%%%%%%%%%%%%%%%%%%%%%%%%%%%%%%%%%%%%%%%%%%%%%%%%%%%%%%
%%%%%%%%%%%%%%%%%%%%%%%%%%%%%%%%%%%%%%%%%%%%%%%%%%%%%%%%%%%%
%%%%%%%%%%%%%%%%%%%%%%%%%%%%%%%%%%%%%%%%%%%%%%%%%%%%%%%%%%%%
\begin{proposition}[\cite{KLS}, Theorem 9.2]\label{thm:lyapunov} Let $0<\alpha\leq \frac12$. For all $E\geq 0$ such that  $\sqrt{E}\notin \pi \Z$, we have the almost sure limit
	\begin{eqnarray}
		\beta(\lambda,E)
		:=
		\lim_{n\to\pm\infty} \frac{\log \| {\bf T}_{\omega}(n,0;E) \|}{\sum^n_{j=1}j^{-2\alpha}}
		=
		\frac{\lambda^2}{8E} 
		\left|
			\int^1_0 u(y)\ \e^{i \sqrt{E} y} dy
		\right|^2.
	\end{eqnarray}
\end{proposition}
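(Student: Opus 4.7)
The plan is to adapt the modified Pr\"ufer transform argument of [KLS, Sect. 8--9] to the continuum setting. For a solution of $\opH\varphi=E\varphi$, I would introduce the Pr\"ufer variables $(R(x),\theta(x))$ by
\begin{equation*}
\varphi(x)=R(x)\sin\bigl(\sqrt{E}\,x+\theta(x)\bigr),\qquad \varphi'(x)/\sqrt{E}=R(x)\cos\bigl(\sqrt{E}\,x+\theta(x)\bigr),
\end{equation*}
which are constant on the complement of $\bigcup_n\supp u_n$ (since $V_\omega$ vanishes there). Writing $R_n:=R(n)$, $\theta_n:=\theta(n)$, one obtains a random dynamical system $(R_{n+1},\theta_{n+1})=\Phi_E(R_n,\theta_n;\lambda a_n\omega_n)$, where $\Phi_E$ is the one-step map obtained by integrating the Pr\"ufer ODE across the support of $u_n$ with effective coupling $\lambda a_n\omega_n$. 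Since $\|\mathbf{T}_\omega(n,0;E)\|$ is comparable to $\max(R_n/R_0,R_0/R_n)$ up to $E$-dependent constants, the problem reduces to analysing the asymptotics of $\log(R_n^2/R_0^2)$.

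Next, I would Taylor expand the one-step increment in the small parameter $\eta_j:=\lambda a_j\omega_j$. Writing $\Theta_j:=\sqrt{E}\,j+\theta_j$, explicit computation gives
\begin{equation*}
\log\bigl(R_{j+1}^2/R_j^2\bigr)=\eta_j\,F_1(\Theta_j;E)+\eta_j^2\,F_2(\Theta_j;E)+O(|\eta_j|^3),
\end{equation*}
with $F_1,F_2$ explicit trigonometric polynomials whose coefficients are integrals of $u$ against $\e^{i\sqrt{E}\,\cdot}$. Summing over $j<n$, the cubic remainder contributes at most $O\bigl(\sum_{j<n}a_j^3\bigr)=o(n^{1-2\alpha})$ and is thus negligible after dividing by $\sum_{j=1}^n j^{-2\alpha}$. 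The first-order sum $\lambda\sum_{j<n}a_j\omega_j F_1(\Theta_j;E)$ is a martingale with bounded increments of order $a_j$, so by Azuma--Hoeffding and Borel--Cantelli it is a.s.\ $O\bigl(n^{(1-2\alpha)/2}\sqrt{\log n}\bigr)=o(n^{1-2\alpha})$.

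The main deterministic contribution comes from $\lambda^2\sum_{j<n}a_j^2\omega_j^2 F_2(\Theta_j;E)$. Replacing $\omega_j^2$ by $\esp[\omega_0^2]=1$ at the cost of another martingale error of the same negligible order, one is left with $\lambda^2\sum_{j<n}a_j^2 F_2(\Theta_j;E)$. The key ergodic input is the equidistribution of $\Theta_j\bmod 2\pi$, which requires $\sqrt{E}\notin\pi\Z$ so that the underlying free rotation is non-trivial. This yields
\begin{equation*}
\frac{\sum_{j<n}a_j^2 F_2(\Theta_j;E)}{\sum_{j<n}a_j^2}\longrightarrow \frac{1}{2\pi}\int_0^{2\pi}F_2(\theta;E)\,d\theta,
\end{equation*}
and a direct trigonometric computation identifies the right-hand side with $(8E)^{-1}\bigl|\int_0^1 u(y)\e^{i\sqrt{E}y}\,dy\bigr|^2$. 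Combined with $\sum_{j<n}a_j^2\sim\sum_{j=1}^n j^{-2\alpha}$ (from $a_j|j|^\alpha\to1$), this produces the stated formula for $\beta(\lambda,E)$.

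The hardest step is the equidistribution of $\Theta_j\bmod 2\pi$: in the non-decaying i.i.d.\ case this follows from the Furstenberg/Birkhoff theorems for the skew product, but here the random kick at step $j$ has amplitude $a_j\to0$, so ergodicity must be made quantitative. Following [KLS], the strategy is to control the Fourier coefficients of the empirical distribution of $\Theta_j$ by an exponential-sum/Koksma-type estimate that exploits both the smoothness of the transition kernel in $\omega$ and the non-triviality of the rotation guaranteed by $\sqrt{E}\notin\pi\Z$. The continuum adaptation requires in addition a Born-type expansion of the single-site transfer matrix that retains enough regularity in $\omega$ for these Fourier estimates to carry through; this is where care is needed compared with the discrete case, and where the assumption that $u$ is bounded with integrable oscillation $\widehat u(\sqrt E)$ enters.
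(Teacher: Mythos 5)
Your proposal follows essentially the same modified Pr\"ufer/martingale strategy as [KLS, Section 9], which is what the paper does: Proposition~\ref{thm:lyapunov} is quoted directly from [KLS, Theorem~9.2], and the paper's only original contribution to it is the recapitulation in Appendix~\ref{app:martingale} of the martingale decomposition of $\log(R(n)/R(m))$, the isolation of the $\omega_j^2$ term as the leading deterministic contribution, and the deferral of the oscillatory sums to the deterministic equidistribution analysis of [KLS, Section~9]. You correctly identify the equidistribution of $\Theta_j \bmod 2\pi$, and the role of the hypothesis $\sqrt{E}\notin\pi\Z$, as the quantitatively hard step; your treatment of the first-order martingale and the replacement of $\omega_j^2$ by its mean are standard and in line with the source.

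There is, however, one concrete gap in the reduction from transfer matrices to Pr\"ufer radii. For a single initial angle $\theta_0$, the quantity $\max(R_n/R_0,\,R_0/R_n)$ is only a \emph{lower} bound for $\|\mathbf{T}_\omega(n,0;E)\|$: since the one-step matrices are unimodular, the singular values of $\mathbf{T}_\omega(n,0;E)$ are $\sigma_n$ and $\sigma_n^{-1}$ with $\sigma_n\geq 1$, and for a generic $\theta_0$ one has $\sigma_n^{-1}\leq R_n/R_0\leq\sigma_n$ without $\max(R_n/R_0,R_0/R_n)$ ever being forced to reach $\sigma_n=\|\mathbf{T}_\omega(n,0;E)\|$. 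The correct two-sided comparison, recorded in the paper as Lemma~\ref{thm:comparison} (quoting [KLS, Lemma~2.1]), requires tracking the Pr\"ufer radius for \emph{two} distinct initial angles $\vartheta_1\neq\vartheta_2$: then $\max\{R_x(y;\vartheta_1),\,R_x(y;\vartheta_2)\}$ is comparable to $\|\mathbf{T}(y,x;E)\|$ with deterministic constants uniform over compact energy intervals. Your subsequent analysis of $\log(R_n^2/R_0^2)$ is fine as it stands, but it should be carried out for two choices of $\theta_0$ and combined via that lemma; otherwise the conclusion gives the growth rate of a particular solution, not of the operator norm.
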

%%%%%%%%%%%%%%%%%%%%%%%%%%%%%%%%%%%%%%%%%%%%%%%%%%%%%%%%%%%%
%%%%%%%%%%%%%%%%%%%%%%%%%%%%%%%%%%%%%%%%%%%%%%%%%%%%%%%%%%%%
%%%%%%%%%%%%%%%%%%%%%%%%%%%%%%%%%%%%%%%%%%%%%%%%%%%%%%%%%%%%
%%%%%%%%%%%%%%%%%%%%%%%%%%%%%%%%%%%%%%%%%%%%%%%%%%%%%%%%%%%%
%%%%%%%%%%%%%%%%%%%%%%%%%%%%%%%%%%%%%%%%%%%%%%%%%%%%%%%%%%%%
This result follows from the asymptotic analysis of the Pr\"ufer transform associated to the system. Following \cite{KLS},
we denote $k=\sqrt{E}$ and define the modified Pr\"ufer coordinates $R$ and $\theta$ such that
\begin{eqnarray}\label{prufer variables}
	\vp(x) &=& k R(x) \cos\theta(x),\notag\\
	\vp'(x) &=& R(x) \sin\theta(x).
\end{eqnarray}
Note that, if $V=0$, then we have $\theta(x)=\theta_0 + kx$. These satisfy the equations
\begin{eqnarray}
	%\label{eq:ODE-phase}
	\frac{d}{dx} \theta(x)
	&=&
	k - \frac{V_{\omega}(x)}{k} \sin^2 \theta(x),
	\\
	\label{eq:ODE-prufer}
	%\label{eq:ODE-radius}
	\frac{d}{dx} \log R(x)
	&=&
	\frac{1}{2k} V_{\omega}(x) \sin(2\theta(x)).
\end{eqnarray}
Note that the functions $R$ and $\theta$ depend on the energy $E$, which we removed from the notation as no confusion will arise. Nonetheless, we sometimes denote $R_{x_0}(\cdot;\theta_0)$ and $\theta_{x_0}(\cdot;\theta_0)$ to stress that the system is considered with initial conditions $\vp(x_0)=\sin \theta_0$ and $\vp'(x_0)=\cos \theta_0$.
%%%%%%%%%%%%%%%%%%%%%%%%%%%%%%%%%%%%%%%%%%%%%%%%%%%%%%%%%%%%
%%%%%%%%%%%%%%%%%%%%%%%%%%%%%%%%%%%%%%%%%%%%%%%%%%%%%%%%%%%%

We quote the following lemma from \cite{KLS}.
%%%%%%%%%%%%%%%%%%%%%%%%%%%%%%%%%%%%%%%%%%%%%%%%%%%%%%%%%%%%
%%%%%%%%%%%%%%%%%%%%%%%%%%%%%%%%%%%%%%%%%%%%%%%%%%%%%%%%%%%%
\begin{lemma}[\cite{KLS}, Lemma 2.1]\label{thm:comparison}
	For all compact energy interval $[a,b]\subset (0,\infty)$ and  all $\vartheta_1 \neq \vartheta_2$, there exists positive deterministic constants $C_1=C_1(\vartheta_1,\vartheta_2,I)$ and $C_2=C_2(\vartheta_1,\vartheta_2,I)$ such that
\begin{eqnarray}\label{eq:prufer-and-norms}
	C_1 \max\{ R_x(y,\vartheta_1),R_x(y,\vartheta_2)\}
	\leq 
	\| \T(E;y,x) \|
	\leq C_2 \max \{R_x(y,\vartheta_1),R_x(y,\vartheta_2)\}, 
\end{eqnarray}
for all $x,y\in\R$, $E\in I$ and all $\omega$. 
\end{lemma}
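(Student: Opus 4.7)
The plan is to reduce both inequalities to an elementary $2\times 2$ linear-algebra argument, relying on the fact that for $E$ in a compact subset of $(0,\infty)$ the wave number $k=\sqrt{E}$ is bounded above and away from $0$, so that the Pr\"ufer radius is uniformly comparable to the Euclidean norm of the Cauchy data $(\varphi,\varphi')$. First I would translate $R_x(y;\vartheta)$ into transfer-matrix language. By \eqref{prufer variables}, the choice of initial data $(\varphi(x),\varphi'(x))=(\sin\vartheta,\cos\vartheta)$ corresponds to the unit vector $v_\vartheta:=(\sin\vartheta,\cos\vartheta)^T\in\R^2$, and for every $z\in\R$ one has $R_x(z;\vartheta)^2=k^{-2}\varphi(z)^2+\varphi'(z)^2$. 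Since $k$ ranges over a compact subinterval of $(0,\infty)$, this is uniformly comparable, with constants depending only on $I=[a,b]$, to $\varphi(z)^2+\varphi'(z)^2=\norm{\T(z,x;E)v_\vartheta}^2$. It thus suffices to prove
\[
  c\max_{i=1,2}\norm{\T(y,x;E)v_{\vartheta_i}}\leq\norm{\T(y,x;E)}\leq C\max_{i=1,2}\norm{\T(y,x;E)v_{\vartheta_i}}
\]
with constants depending only on the fixed angles $\vartheta_1\neq\vartheta_2$.

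The left-hand inequality is immediate from $\norm{\T v_{\vartheta_i}}\leq \norm{\T}\,\norm{v_{\vartheta_i}}=\norm{\T}$, since $v_{\vartheta_i}$ is a unit vector. For the right-hand inequality I would use that $v_{\vartheta_1}$ and $v_{\vartheta_2}$ are linearly independent whenever $\vartheta_1\neq\vartheta_2\pmod\pi$: indeed $\det(v_{\vartheta_1}\mid v_{\vartheta_2})=\sin(\vartheta_1-\vartheta_2)\neq 0$, so they form a basis of $\R^2$ whose inverse change-of-basis matrix has operator norm bounded by a constant $K=K(\vartheta_1,\vartheta_2)$. Consequently every unit $u\in\R^2$ decomposes as $u=c_1v_{\vartheta_1}+c_2v_{\vartheta_2}$ with $|c_i|\leq K$, giving
\[
  \norm{\T u}\leq K\bigl(\norm{\T v_{\vartheta_1}}+\norm{\T v_{\vartheta_2}}\bigr)\leq 2K\max_{i=1,2}\norm{\T v_{\vartheta_i}}.
\]
Taking the supremum over unit $u$ yields the required upper bound on $\norm{\T}$.

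There is no serious obstacle; the only point worth checking is uniformity of the constants in $(x,y,E,\omega)$. This is automatic: the linear-algebra step depends only on $\vartheta_1,\vartheta_2$; the Pr\"ufer/Euclidean equivalence depends only on the compact range of $k=\sqrt{E}$; and the randomness $\omega$ enters only through $\T$ itself, whose norm is precisely the quantity being compared. Strictly speaking the statement requires $\vartheta_1\neq\vartheta_2\pmod\pi$, which is clearly the intended reading since otherwise $v_{\vartheta_1}$ and $v_{\vartheta_2}$ are parallel and the upper bound would fail as soon as $\T$ contracts this common direction.
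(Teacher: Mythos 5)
Your argument is correct. The paper quotes this result as [KLS, Lemma 2.1] and does not re-prove it, so there is no in-paper proof to compare against; your reduction---uniform comparability of the Pr\"ufer radius $R_x(y;\vartheta)$ with $\|\T(y,x;E)v_\vartheta\|$ over a compact range of $k=\sqrt{E}$, followed by the change-of-basis estimate coming from the linear independence of $v_{\vartheta_1}$ and $v_{\vartheta_2}$---is precisely the standard route and matches the idea of [KLS]. You are also right that the hypothesis must be read as $\vartheta_1\not\equiv\vartheta_2\pmod{\pi}$, so that $\sin(\vartheta_1-\vartheta_2)\neq 0$; this is consistent with the declared dependence of $C_1,C_2$ on $(\vartheta_1,\vartheta_2)$, since your change-of-basis constant $K$ grows like $|\sin(\vartheta_1-\vartheta_2)|^{-1}$.
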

%%%%%%%%%%%%%%%%%%%%%%%%%%%%%%%%%%%%%%%%%%%%%%%%%%%%%%%%%%%%
%%%%%%%%%%%%%%%%%%%%%%%%%%%%%%%%%%%%%%%%%%%%%%%%%%%%%%%%%%%%
This allows to reduce the asymptotics of transfer matrices \eqref{transfer matrix} to the ones of the Pr\"ufer radii \eqref{prufer variables}. The analysis outlined in \cite[Section 9]{KLS} leads to
\begin{eqnarray}\label{eq:asymptotics-prufer}
	\esp\left[
		\log \frac{R(n)}{R(m)}
	\right]
	&=&
	\frac{\lambda^2 }{8k^2}
	\left|
	\int^1_0 u(y)\ 
	\e^{2iky}\, dy
	\right|^2
	\sum^n_{j=m} j^{-2\alpha}
	+
	K_{m,n},
\end{eqnarray}
for $m\leq n$, where $|K_{m,n}|=o(\sum^n_{j=m} j^{-2\alpha})$, uniformly on values of $\sqrt{E}\notin \pi \Z$ ranging over compact energy intervals. The same estimate holds for $n\leq m \leq 0$. By \eqref{eq:prufer-and-norms}, the same asymptotics holds for the norms of the transfer matrices. We detail the above estimate in Appendix \ref{app:martingale} and summarise it in a form that will suit our purposes in the next lemma.
%%%%%%%%%%%%%%%%%%%%%%%%%%%%%%%%%%%%%%%%%%%%%%%%%%%%%%%%%%%%
%%%%%%%%%%%%%%%%%%%%%%%%%%%%%%%%%%%%%%%%%%%%%%%%%%%%%%%%%%%%
For simplicity, we denote $T_{\omega,n}(E)=\T(n+1,n;E)$, dropping the dependence in $\lambda$.

%%%%%%%%%%%%%%%%%%%%%%%%%%%%%%%%%%%%%%%%%%%%%%%%%%%%%%%%%%%%
%%%%%%%%%%%%%%%%%%%%%%%%%%%%%%%%%%%%%%%%%%%%%%%%%%%%%%%%%%%%
\begin{lemma}\label{thm:bounds-on-Tmn}
	Let $I\subset(0,\infty)$ be a compact interval. Then for all $\beta'$ such that $0<\beta'<\displaystyle\inf_{E\in I}\beta(\lambda,E)$, there exists $n_0=n_0(I)\geq 1$ such that
	\begin{eqnarray}
		\label{eq:lower-bound-Tmn}
		\esp\left[ \log \|T_{\omega,ln_0 }(E)\cdots T_{\omega,(l-1)n_0+1}(E)\psi_0\|\right]
		&\geq&
		%\beta' \sum^{kn_0}_{j=(k-1)n_0+1} \frac{1}{j^{2\alpha}}
%		%+
%		%o\left(\frac{n_0^{1-2\alpha}}{k^{2\alpha}}\right).
		%\geq 
		\beta'
		\sum^{ln_0}_{j=(l-1)n_0+1} \frac{1}{j^{2\alpha}}
		 %\frac{n_0^{1-2\alpha}}{(1-2\alpha)l^{2\alpha}},
	\end{eqnarray}
	for all $l\geq 1$, $\|\psi_0\|=1$ and $E\in I$.
	\newline
	Furthermore, there exists a constant $C=C(I)$ such that
	\begin{eqnarray}
		\label{eq:upper-bound-Tmn}
		\esp\left[\left( \log \|T_{\omega,ln_0 }(E)\cdots T_{\omega,(l-1)n_0+1}(E)\psi_0\| \right)^2\right]
		&\leq&
		C
		\sum^{ln_0}_{j=(l-1)n_0+1} \frac{1}{j^{2\alpha}},
	\end{eqnarray}
	for all $l\geq 1$, $\|\psi_0\|=1$ and $E\in I$.
\end{lemma}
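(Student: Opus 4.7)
The plan is to reduce both estimates to statements about the Prüfer radius $R$ via the substitution \eqref{prufer variables} and then invoke the asymptotic expansion \eqref{eq:asymptotics-prufer} together with the martingale decomposition from Appendix \ref{app:martingale}. Given a unit vector $\psi_0=(\varphi_0,\varphi_0')$, its Prüfer coordinates at $x=(l-1)n_0$ are some $(R_0,\theta_0)$ with $R_0\asymp 1$ uniformly on $E\in I$. Propagating the Schrödinger equation to $ln_0$ and comparing the Euclidean norm of $(\varphi,\varphi')$ with the Prüfer radius $R$ gives
\[
\log \|T_{\omega,ln_0}\cdots T_{\omega,(l-1)n_0+1}\psi_0\|
= \log\frac{R_{(l-1)n_0}(ln_0;\theta_0)}{R_{(l-1)n_0}((l-1)n_0;\theta_0)} + b_l,
\]
where $|b_l|\leq C(I)$ is a bounded boundary term depending on $\theta_0$, on $\theta_{ln_0}$ and on $E$.

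For the lower bound \eqref{eq:lower-bound-Tmn}, I would apply \eqref{eq:asymptotics-prufer} over the block $[(l-1)n_0+1,ln_0]$ to obtain $\esp[\log R/R_0]=\beta(\lambda,E)\sum_{j=(l-1)n_0+1}^{ln_0}j^{-2\alpha}+K_l$, where $K_l/\sum j^{-2\alpha}$ is small provided $n_0$ is large, uniformly in $l\geq 1$, $E\in I$ and $\theta_0$. Compactness of $I$ and continuity/positivity of $E\mapsto \beta(\lambda,E)$ on $\{E>0:\sqrt E\notin \pi\Z\}$ give $\inf_{E\in I}\beta(\lambda,E)>0$. Fixing $\beta'<\inf_{E\in I}\beta(\lambda,E)$ and taking $n_0$ large enough so that $|K_l|+|\esp[b_l]|$ is dominated by $(\inf_I\beta-\beta')\sum j^{-2\alpha}$ would yield the claim.

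For the upper bound \eqref{eq:upper-bound-Tmn}, I would use the martingale decomposition $\log R/R_0=M_l+A_l$ from Appendix \ref{app:martingale}: $M_l$ is a square-integrable martingale with predictable quadratic variation $\langle M\rangle_l\leq C\sum_{j=(l-1)n_0+1}^{ln_0} j^{-2\alpha}$, and the compensator satisfies $|A_l|\leq C\sum j^{-2\alpha}$. Expanding $(M_l+A_l)^2\leq 2M_l^2+2A_l^2$ and taking expectations leads to $\esp[(\log R/R_0)^2]\leq C\sum j^{-2\alpha}+C(\sum j^{-2\alpha})^2$, and since the block sum is bounded above by a constant depending only on $n_0$ and $\alpha$, the quadratic term is of the same order as the linear one, giving $\esp[(\log R/R_0)^2]\leq C'\sum j^{-2\alpha}$; the bounded boundary contribution $\esp[b_l^2]=O(1)$ is then absorbed into the constant $C'$.

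The main obstacle lies in the uniformity in the block index $l$ for both bounds: the error $K_l$ in \eqref{eq:asymptotics-prufer} and the expectation of the boundary term $b_l$ must be controlled as $l$ varies, not merely asymptotically as the total index tends to infinity. This requires a quantitative form of Weyl equidistribution of the Prüfer phase $\theta$ modulo $\pi$ along each block, exploiting the Diophantine condition $\sqrt E\notin \pi\Z$ together with compactness of $I$, and is precisely where the second-order expansion of the Prüfer ODE \eqref{eq:ODE-prufer} around the free dynamics comes into play.
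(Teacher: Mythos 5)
Your proposal follows essentially the same route as the paper: the lower bound is read off the expansion \eqref{eq:asymptotics-prufer}, the upper bound is obtained by squaring the martingale decomposition \eqref{eq:martingale-decomposition} and using that each block sum $\sum_{j=(l-1)n_0+1}^{ln_0}j^{-2\alpha}$ is bounded uniformly in $l$, and the excluded energies $\sqrt E\in\pi\Z$ are recovered by continuity of $E\mapsto\esp[\log\|T_{\omega,ln_0}\cdots T_{\omega,(l-1)n_0+1}\psi_0\|]$. The uniformity in $l$ that you flag as the main obstacle is real but is not something the paper attempts to prove inside this lemma either: the statement that $|K_{m,n}|=o(\sum_{j=m}^n j^{-2\alpha})$ uniformly, which implicitly includes uniformity over the position of the block for block length $n_0$ large, is delegated to the deterministic oscillatory-sum analysis of \cite[Section 9]{KLS}, and the same citation covers both your proof and the paper's. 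Your bounded boundary term $b_l$ coming from comparing the Euclidean norm of $(\vp,\vp')$ with the Pr\"ufer radius is correct and harmless, since $k$ is bounded away from $0$ and $\infty$ on $I$.
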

%%%%%%%%%%%%%%%%%%%%%%%%%%%%%%%%%%%%%%%%%%%%%%%%%%%%%%%%%%%%
%%%%%%%%%%%%%%%%%%%%%%%%%%%%%%%%%%%%%%%%%%%%%%%%%%%%%%%%%%%%
\begin{proof}
	From \eqref{eq:asymptotics-prufer}, we can find $n_0$ large enough such that
	\begin{eqnarray*}
		\esp\left[ \log \|T_{\omega,ln_0 }(E)\cdots T_{\omega,(l-1)n_0+1}(E)\psi_0\|\right]
		&\geq&
		\beta' \sum^{ln_0}_{j=(l-1)n_0+1} \frac{1}{j^{2\alpha}},
	\end{eqnarray*}
	for all $l\geq 1$, $\|\psi_0\|=1$ and all $E\in I$ corresponding to values of $k\notin \pi \Z$. The bound for all energies in $I$ then follows by continuity of the left-hand-side above with respect to $E$. This proves \eqref{eq:lower-bound-Tmn}. The upper bound \eqref{eq:upper-bound-Tmn} follows by an inspection of the martingale decomposition \eqref{eq:martingale-decomposition} in Appendix \ref{app:martingale}.
\end{proof}
%%%%%%%%%%%%%%%%%%%%%%%%%%%%%%%%%%%%%%%%%%%%%%%%%%%%%%%%%%%%
%%%%%%%%%%%%%%%%%%%%%%%%%%%%%%%%%%%%%%%%%%%%%%%%%%%%%%%%%%%%

%%%%%%%%%%%%%%%%%%%%%%%%%%%%%%%%%%%%%%%%%%%%%%%%%%%%%%%%%%%%
%%%%%%%%%%%%%%%%%%%%%%%%%%%%%%%%%%%%%%%%%%%%%%%%%%%%%%%%%%%%
%%%%%%%%%%%%%%%%%%%%%%%%%%%%%%%%%%%%%%%%%%%%%%%%%%%%%%%%%%%%
%%%%%%%%%%%%%%%%%%%%%%%%%%%%%%%%%%%%%%%%%%%%%%%%%%%%%%%%%%%%
%%%%%%%%%%%%%%%%%%%%%%%%%%%%%%%%%%%%%%%%%%%%%%%%%%%%%%%%%%%%
%%%%%%%%%%%%%%%%%%%%%%%%%%%%%%%%%%%%%%%%%%%%%%%%%%%%%%%%%%%%
%%%%%%%%%%%%%%%%%%%%%%%%%%%%%%%%%%%%%%%%%%%%%%%%%%%%%%%%%%%%
%%%%%%%%%%%%%%%%%%%%%%%%%%%%%%%%%%%%%%%%%%%%%%%%%%%%%%%%%%%%
%%%%%%%%%%%%%%%%%%%%%%%%%%%%%%%%%%%%%%%%%%%%%%%%%%%%%%%%%%%%
%%%%%%%%%%%%%%%%%%%%%%%%%%%%%%%%%%%%%%%%%%%%%%%%%%%%%%%%%%%%
%%%%%%%%%%%%%%%%%%%%%%%%%%%%%%%%%%%%%%%%%%%%%%%%%%%%%%%%%%%%
%%%%%%%%%%%%%%%%%%%%%%%%%%%%%%%%%%%%%%%%%%%%%%%%%%%%%%%%%%%%
%%%%%%%%%%%%%%%%%%%%%%%%%%%%%%%%%%%%%%%%%%%%%%%%%%%%%%%%%%%%
%%%%%%%%%%%%%%%%%%%%%%%%%%%%%%%%%%%%%%%%%%%%%%%%%%%%%%%%%%%%
%%%%%%%%%%%%%%%%%%%%%%%%%%%%%%%%%%%%%%%%%%%%%%%%%%%%%%%%%%%%
 
\section{Fractional moments estimates}\label{sec:FMM}

%%%%%%%%%%%%%%%%%%%%%%%%%%%%%%%%%%%%%%%%%%%%%%%%%%%%%%%%%%%%
%%%%%%%%%%%%%%%%%%%%%%%%%%%%%%%%%%%%%%%%%%%%%%%%%%%%%%%%%%%%
%%%%%%%%%%%%%%%%%%%%%%%%%%%%%%%%%%%%%%%%%%%%%%%%%%%%%%%%%%%%
%%%%%%%%%%%%%%%%%%%%%%%%%%%%%%%%%%%%%%%%%%%%%%%%%%%%%%%%%%%%
For $\Lambda\subset\R$, we denote by $H_{\omega,\Lambda}$ the restriction of $\opH$ to $L^2(\Lambda)$ and its resolvent by $G_{\omega,\Lambda}(E)=(H_{\omega,\Lambda}-E)^{-1}$, where we hid the explicit dependence on $\lambda$ to lighten the notation.
The following is the main result of this section.
%%%%%%%%%%%%%%%%%%%%%%%%%%%%%%%%%%%%%%%%%%%%%%%%%%%%%%%%%%%%
%%%%%%%%%%%%%%%%%%%%%%%%%%%%%%%%%%%%%%%%%%%%%%%%%%%%%%%%%%%%
\begin{theorem}\label{thm:FM}
	Let $0<\alpha<\frac12$ and $I\subset(0,\infty)$ be a compact interval.
	For each $y\in\Z$, there exists constants $c=c(I,y),\, C=C(I,y) \in (0,\infty)$ such that
	\begin{eqnarray*}
		\esp\left[ \norm{  \chi_x G_{\omega,[a,b]}(E) \chi_y}^s \right]
		\leq
		C (\lambda a_x)^{-1/2}\ \e^{-c|x|^{1-2\alpha}},
	\end{eqnarray*}
	for all $x\in\R, E\in I$ and all $a<b$.
\end{theorem}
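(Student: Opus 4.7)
The plan is to follow the one-dimensional continuum fractional moment scheme of \cite{HSS}, using the Pr\"ufer/transfer-matrix analysis of Section \ref{sec:transfer-matrices} to extract exponential decay at rate $|x|^{1-2\alpha}$ in the decaying-potential regime. First, I would reduce the Green's function norm to a negative moment of a transfer matrix norm: for $y<x$ (the case $y>x$ is symmetric), the standard Weyl-solution representation of the resolvent kernel on $[a,b]$, combined with Lemma \ref{thm:comparison} applied to these solutions, yields a pointwise bound
\[
\norm{\chi_x G_{\omega,[a,b]}(E)\chi_y}\le \frac{C}{\norm{\T(x,y;E)\psi_y}},
\]
for a unit vector $\psi_y$ encoding the boundary condition at $a$, up to a deterministic factor. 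This is the standard reduction in the 1D continuum FMM framework.

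Next, I would establish a single-site Wegner-type fractional moment estimate. Conditioning on $\{\omega_k:k\neq x\}$, the operator $\opH$ restricted to $[a,b]$ is a rank-one perturbation in the variable $\omega_x$ with coupling strength $\lambda a_x$. The classical Aizenman-type spectral averaging argument (\cite{AENSS,HSS}) then yields
\[
\esp\big[\norm{\chi_x G_{\omega,[a,b]}(E)\chi_y}^{s}\,\big|\,\omega_{\neq x}\big]\le C_s\, (\lambda a_x)^{-s},
\]
for any $s\in(0,1)$; the choice $s=\tfrac12$ produces exactly the $(\lambda a_x)^{-1/2}$ prefactor of the statement.

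Third, I would establish decoupled exponential decay of $\esp[\norm{\T(x,y;E)\psi_y}^{-s}]$ across independent blocks. Factoring $\T(x,y;E)=T_{\omega,x-1}(E)\cdots T_{\omega,y}(E)$ and grouping the one-step matrices into blocks of length $n_0$ as given by Lemma \ref{thm:bounds-on-Tmn}, disjoint blocks are independent. Combining the lower bound \eqref{eq:lower-bound-Tmn} with the second-moment bound \eqref{eq:upper-bound-Tmn} through a Chernoff-type concentration argument applied to $\log\|\cdot\|$ yields a block estimate of the form
\[
\esp\!\left[\norm{T_{\omega,ln_0}(E)\cdots T_{\omega,(l-1)n_0+1}(E)\psi}^{-s}\right]\le \exp\!\Big(-\tau_s\!\!\sum_{j=(l-1)n_0+1}^{ln_0}\!\!j^{-2\alpha}\Big),
\]
uniform in unit $\psi$ and $E\in I$. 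Iterating across blocks via the tower property and using $\sum_{j=1}^n j^{-2\alpha}\sim n^{1-2\alpha}/(1-2\alpha)$ for $0<\alpha<\tfrac12$ delivers the exponential factor $\e^{-c|x|^{1-2\alpha}}$; combining the three steps through a conditional expectation argument completes the proof.

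The hard part will be the uniformity in $\psi$ of the block fractional moment bound: $\norm{M\psi}^{-s}$ can be arbitrarily large for $\psi$ close to contracting directions of $M$. The remedy, imported from \cite{HSS}, is to exploit the absolute continuity of the Pr\"ufer angle distribution driven by \eqref{eq:ODE-prufer} and the bounded density of the $\omega_j$'s, which smooths the directional dependence and converts the expected logarithmic growth of Lemma \ref{thm:bounds-on-Tmn} into a genuine uniform-in-$\psi$ negative-moment bound. This directional averaging at the starting block is also the reason the constants $c,C$ in the statement depend on the base site $y$, which enters only through the initial direction $\psi_y$.
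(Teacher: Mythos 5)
Your three-step outline has the right ingredients, but two of the three steps are misstated in ways that would not survive being written out.

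\textbf{The pointwise reduction in Step 1 is false.} You assert a deterministic bound $\norm{\chi_x G_{\omega,[a,b]}(E)\chi_y}\le C/\norm{\T(x,y;E)\psi_y}$. This cannot hold: the Green's function has poles whenever $E$ is an eigenvalue of $H_{\omega,[a,b]}$, while the transfer matrix norm stays bounded above and below on compact intervals. In the paper the kernel is written as $\vp_a(s)\vp_b(t)/W(\vp_a,\vp_b)$, and the Wronskian is expressed in Pr\"ufer variables as $kR_a(x)R_b(x)\sin(\theta_a(x)-\theta_b(x))$. This produces a \emph{random} factorisation
\[
\norm{\chi_x G_{\omega,[a,b]}(E)\chi_y}^{s}\lesssim \Big(\tfrac{R_b(y)}{R_b(x)}\Big)^{s}\,\big|\sin(\theta_a(x)-\theta_b(x))\big|^{-s},
\]
where the first (transfer-matrix) factor carries the decay but the second (Wronskian) factor is singular and must be integrated out separately. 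Your version absorbs the singular factor into a constant, which is exactly what goes wrong.

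\textbf{Step 2 misidentifies the origin of the $(\lambda a_x)^{-1/2}$ prefactor.} It does not come from taking $s=\tfrac12$ in a Wegner bound; the relevant exponent $s$ must stay in $(0,\tfrac12)$ throughout. In the paper one applies Cauchy--Schwarz in $\omega$ to the factorisation above, giving the product $\esp[R_y(x,\theta_b(y))^{-2s}]^{1/2}\,\esp[|\sin(\theta_a(x)-\theta_b(x))|^{-2s}]^{1/2}$. The second expectation is controlled by a change of variables $\omega_x\mapsto\theta_b(x)$ (Lemma~\ref{thm:from-green-to-transfer-matrices} and the lemma following it), where the Jacobian is of size $\lambda a_x$ on account of the envelope; this yields a factor $(\lambda a_x)^{-1}$, and the outer square root gives $(\lambda a_x)^{-1/2}$. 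This computation, and in particular the explicit derivative $\partial\theta_a(x)/\partial\omega_x\asymp\lambda a_x$ from \eqref{eq:ODE-prufer}, is the crucial place where the decaying envelope enters, and it is not captured by a generic rank-one Wegner argument.

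Because Steps 1 and 2 are not correct as stated, the "conditional expectation argument" you gesture at to combine them is not a proof; the actual combination is the Cauchy--Schwarz split described above, which performs the separation \emph{inside} the expectation rather than sequentially. Step 3 (block iteration of negative fractional moments of transfer matrices via Lemma~\ref{thm:bounds-on-Tmn} and the tower property) is essentially the paper's Lemmas~\ref{thm:first-bound-Tmn} and~\ref{thm:key-lemma} and is broadly right, though the uniformity in the initial direction $\psi$ is already built into the Pr\"ufer martingale estimate of Lemma~\ref{thm:bounds-on-Tmn} rather than requiring a separate argument about absolute continuity of the angle distribution.
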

%%%%%%%%%%%%%%%%%%%%%%%%%%%%%%%%%%%%%%%%%%%%%%%%%%%%%%%%%%%%
%%%%%%%%%%%%%%%%%%%%%%%%%%%%%%%%%%%%%%%%%%%%%%%%%%%%%%%%%%%%
The proof is given at the end of Section \ref{sec:estimates-transfer-matrices}. In Section \ref{sec:from-green-to-transfer-matrices}, we relate the fractional moments of the Green's function to negative fractional moments of the norm of transfer matrices which are then estimated in Section \ref{sec:estimates-transfer-matrices}.
%%%%%%%%%%%%%%%%%%%%%%%%%%%%%%%%%%%%%%%%%%%%%%%%%%%%%%%%%%%%
%%%%%%%%%%%%%%%%%%%%%%%%%%%%%%%%%%%%%%%%%%%%%%%%%%%%%%%%%%%%
%%%%%%%%%%%%%%%%%%%%%%%%%%%%%%%%%%%%%%%%%%%%%%%%%%%%%%%%%%%%
%%%%%%%%%%%%%%%%%%%%%%%%%%%%%%%%%%%%%%%%%%%%%%%%%%%%%%%%%%%%
%%%%%%%%%%%%%%%%%%%%%%%%%%%%%%%%%%%%%%%%%%%%%%%%%%%%%%%%%%%%
%%%%%%%%%%%%%%%%%%%%%%%%%%%%%%%%%%%%%%%%%%%%%%%%%%%%%%%%%%%%

\subsection{From Green's function to transfer matrices}\label{sec:from-green-to-transfer-matrices}

%%%%%%%%%%%%%%%%%%%%%%%%%%%%%%%%%%%%%%%%%%%%%%%%%%%%%%%%%%%%
%%%%%%%%%%%%%%%%%%%%%%%%%%%%%%%%%%%%%%%%%%%%%%%%%%%%%%%%%%%%
The following analysis is a direct adaptation of \cite[Section 3]{HSS}. 
%and allows us to estimate the fractional moments of the Green's function by negative moments of the norm of the transfer matrices. 
We provide the details for the sake of completeness and to carefully identify the dependence on the envelope $a_x$.

%%%%%%%%%%%%%%%%%%%%%%%%%%%%%%%%%%%%%%%%%%%%%%%%%%%%%%%%%%%%
%%%%%%%%%%%%%%%%%%%%%%%%%%%%%%%%%%%%%%%%%%%%%%%%%%%%%%%%%%%%
Fix $E\geq 0$ and let $I \subset \R$. For $c\in[a,b]$ and $\theta\in[0,2\pi)$, we define $\vp_c(\cdot;\theta)$ the solution of $H_{\omega,[a,b]} \vp = E\vp$ such that $\vp(c)=\sin \theta$ and $\vp'(c)=\cos \theta$. This way, we can define Pr\"ufer coordinates $R_c(x;\theta)$ and $\theta_c(x;\theta)$ with the convention that $\theta_c(c;\theta) = \theta$ and imposing continuity. We will eliminate $\theta$ from the notation whenever $\theta=0$.
%%%%%%%%%%%%%%%%%%%%%%%%%%%%%%%%%%%%%%%%%%%%%%%%%%%%%%%%%%%%
%%%%%%%%%%%%%%%%%%%%%%%%%%%%%%%%%%%%%%%%%%%%%%%%%%%%%%%%%%%%
\begin{lemma}\label{thm:from-green-to-transfer-matrices}
	Let $I\subset (0,\infty)$ be a compact inerval.
	For all $s\in[0,\frac12)$, there exists $C=C(s,I)\in(0,\infty)$ such that, for all $a<b$,
	\begin{eqnarray}
		\label{eq:from-green-to-transfer-matrices-1}
		\esp\left[ \norm{  \chi_x G_{\omega,[a,b]}(E) \chi_y }^s \right]
		\leq C \, (\lambda a_x)^{-1/2}
		\esp \left[ \left\| 
			\T(E;x,y)
			\begin{pmatrix}
				\sin \theta_b(y)
				\\
				\cos \theta_b(y)
			\end{pmatrix}
		\right\|^{-2s} \right]^{1/2},
	\end{eqnarray}
	for all integers $a \leq x < y \leq b$ and
	\begin{eqnarray}
		\label{eq:from-green-to-transfer-matrices-2}
		\esp\left[ \norm{ \chi_x G_{\omega,[a,b]}(E) \chi_y}^s \right]
		\leq C \, (\lambda a_x)^{-1/2}
		\esp \left[ \left\| 
			\T(E;x,y)
			\begin{pmatrix}
				\sin \theta_a(y)
				\\
				\cos \theta_a(y)
			\end{pmatrix}
		\right\|^{-2s} \right]^{1/2},
	\end{eqnarray}
	for all integers $a \leq y < x \leq b$.
\end{lemma}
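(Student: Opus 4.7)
Plan. The plan is to adapt the argument of [HSS, Section 3], tracking how the decaying envelope $a_x$ enters. The starting point is the representation of the resolvent kernel on $[a,b]$ via the two Dirichlet solutions $u_- = \vp_a(\cdot;0)$ and $u_+ = \vp_b(\cdot;0)$: for $\xi<\eta$ one has $G_{\omega,[a,b]}(\xi,\eta;E) = u_-(\xi)\,u_+(\eta) / W(u_-,u_+)$. Writing $u_\pm$ in the modified Prüfer variables of Section \ref{sec:transfer-matrices} and evaluating the (constant) Wronskian at $\xi$, the factor $R_a(\xi)$ cancels and one obtains
$$G_{\omega,[a,b]}(\xi,\eta;E) = \frac{k\cos\theta_a(\xi)\cos\theta_b(\eta)}{\sin(\theta_b(\xi)-\theta_a(\xi))}\cdot\frac{R_b(\eta)}{R_b(\xi)}.$$
A Schur-type estimate over the rectangle $[x,x+1]\times[y,y+1]$, together with the continuity of the Prüfer data on each unit interval and the boundedness of $k$ on $I$, gives
$$\|\chi_x G_{\omega,[a,b]}(E)\chi_y\| \le \frac{C(I)}{|\sin(\theta_b(x)-\theta_a(x))|}\cdot\frac{R_b(y)}{R_b(x)}.$$

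I would then convert the Prüfer-radii ratio into a transfer-matrix term. The Cauchy data of $u_+$ at $y$ is the vector $R_b(y)(k\cos\theta_b(y),\sin\theta_b(y))$, and $\T(x,y)$ maps it to the Cauchy data at $x$ whose Euclidean norm is comparable, uniformly on $I$, to $R_b(x)$. Hence $R_b(y)/R_b(x)\le C\,\|\T(x,y)(k\cos\theta_b(y),\sin\theta_b(y))\|^{-1}$. On the compact interval $I$, the unit directions $(k\cos\theta,\sin\theta)/\|\cdot\|$ and $(\sin\theta,\cos\theta)$ differ by a uniformly bounded rotation, so up to a multiplicative constant this can be replaced by $\|\T(x,y)(\sin\theta_b(y),\cos\theta_b(y))\|^{-1}$, matching the right-hand side of \eqref{eq:from-green-to-transfer-matrices-1}. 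Raising to the $s$-th power and applying Cauchy--Schwarz then yields
$$\esp\bigl[\|\chi_x G\chi_y\|^s\bigr] \le C\,\esp\bigl[|\sin(\theta_b(x)-\theta_a(x))|^{-2s}\bigr]^{1/2}\,\esp\bigl[\|\T(x,y)(\sin\theta_b(y),\cos\theta_b(y))\|^{-2s}\bigr]^{1/2}.$$

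The first expectation is controlled by a spectral-averaging step in the random variable $\omega_x$. Conditioning on $\{\omega_j : j\neq x\}$ and integrating the Prüfer ODE \eqref{eq:ODE-prufer} across the single-site support $[x,x+1]$ shows that the map $\omega_x\mapsto \theta_a(x+1)\bmod\pi$ is a diffeomorphism whose Jacobian is comparable, to leading order, to $\lambda a_x\int_x^{x+1} u(t-x)\sin^2\theta_a(t)\,dt$; the nondegeneracy assumption $u\ge c_u\chi_J$ from \eqref{eq:single-site-potential} then forces this to be bounded below by $c\lambda a_x$, uniformly in $E\in I$ and in the other disorder variables. Using the bounded density $\rho$ of $\omega_x$ and the integrability of $|\sin|^{-2s}$ on $[0,\pi]$ for $s<\tfrac12$, one obtains $\esp[|\sin(\theta_b(x)-\theta_a(x))|^{-2s}]\le C(\lambda a_x)^{-1}$, whose square root is exactly the claimed prefactor $(\lambda a_x)^{-1/2}$.

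The main technical obstacle is precisely this uniform lower bound on the Jacobian: one must rule out the possibility that $\theta_a$ spends the whole unit interval near a zero of $\sin^2$, which requires a quantitative continuity argument for the Prüfer ODE using the boundedness of $V_\omega$ across a single site, together with \eqref{eq:single-site-potential}. Once this is in place, inequality \eqref{eq:from-green-to-transfer-matrices-2} follows by the mirror-image argument: for $y<x$ one swaps the roles of $u_-$ and $u_+$ in the Green's-function representation, and the spectral-averaging performed at $x$ now produces the transfer-matrix factor $\|\T(x,y)(\sin\theta_a(y),\cos\theta_a(y))\|^{-1}$, consistent with the form stated in the lemma.
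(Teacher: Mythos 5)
Your outline follows the paper's argument closely: same Green's-function representation through $\vp_a,\vp_b$, same evaluation of the (constant) Wronskian at $x$ so that the $R_a$-factor cancels, same Cauchy--Schwarz split, same conversion of $R_b(y)/R_b(x)$ to a transfer-matrix quantity, and same spectral-averaging step in $\omega_x$ (which the paper carries out in the unnumbered lemma immediately following). So this is the paper's proof, not a new route. One step, however, is not correct as written: you pass from $\big\|\T(x,y)(k\cos\theta_b(y),\sin\theta_b(y))\big\|^{-1}$ to $\big\|\T(x,y)(\sin\theta_b(y),\cos\theta_b(y))\big\|^{-1}$ by saying the two unit directions ``differ by a uniformly bounded rotation.'' That is true but does not give the comparison you need: for a fixed matrix $M$ (here $\T(x,y)$, which can be arbitrarily ill-conditioned) and two unit vectors $v_1,v_2$, boundedness of a rotation carrying $v_1$ to $v_2$ gives no control of $\|Mv_2\|$ by $\|Mv_1\|$. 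The paper side-steps this by switching, at the start of Section~\ref{sec:from-green-to-transfer-matrices}, to Pr\"ufer coordinates normalized by $\vp_c(c;\theta)=\sin\theta$, $\vp_c'(c;\theta)=\cos\theta$; with that convention the Cauchy data of $R_b(y)^{-1}\vp_b$ at $y$ is \emph{exactly} $(\sin\theta_b(y),\cos\theta_b(y))$, so no post hoc direction change is needed. (In the end this is harmless for Theorem~\ref{thm:FM}, since Lemma~\ref{thm:key-lemma} gives a bound uniform over all $\psi_0$, but your proof as stated does not establish the lemma with the precise vector appearing in \eqref{eq:from-green-to-transfer-matrices-1}.) Finally, a small slip: since $\theta_a(x)$ is independent of $\omega_x$ and the integrand is $|\sin(\theta_b(x)-\theta_a(x))|^{-2s}$, the change of variables should be $\omega_x\mapsto\theta_b(x)$ (propagated backward from $x+1$, where $\theta_b$ is $\omega_x$-independent), not $\omega_x\mapsto\theta_a(x+1)$; the Jacobian computation is structurally the same either way, but the correct variable is needed to match the quantity being averaged.
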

%%%%%%%%%%%%%%%%%%%%%%%%%%%%%%%%%%%%%%%%%%%%%%%%%%%%%%%%%%%%
%%%%%%%%%%%%%%%%%%%%%%%%%%%%%%%%%%%%%%%%%%%%%%%%%%%%%%%%%%%%
\begin{proof}
	We start from the identity
	\begin{eqnarray*}
		G_{\omega,[a,b]}(s,t;E)
		=
		\frac{1}{W(\vp_a,\vp_b)}
		\left\{
			\begin{array}{ll}
				\vp_a(s)\ \vp_b(t) & \text{if} \, s\leq t, 
				\\
				\vp_a(t)\ \vp_b(s) & \text{if} \, s > t,
			\end{array}
		\right.
	\end{eqnarray*}
	where $W(f,g)=fg'-f'g$ is the Wronskian of the functions $f$ and $g$. We consider $a \leq x < y \leq b$ as the opposite case follows by symmetry.
	Note that $W(\vp_a,\vp_b)(x) = k R_a(x) R_b(x) \sin(\theta_a(x)-\theta_b(x))$. Hence, by definition of the Pr\"ufer transform, we can find a constant $C=C(I)\in(0,\infty)$ such that
	\begin{eqnarray*}
		\esp\left[ \norm{  \chi_x G_{\omega,[a,b]}(E) \chi_y }^s \right]
		&\leq&
		C \,
		\esp\left[
			\frac{R_b(y)^s}{R_b(x)^s} 
			| \sin\left( \theta_a(x)-\theta_b(x)\right)|^{-s}
		\right]
		\\
		&\leq&
		C \,
		\esp\left[
			\frac{R_b(y)^{2s}}{R_b(x)^{2s}} 
		\right]^{1/2}
		\esp\left[
			| \sin\left( \theta_a(x)-\theta_b(x)\right)|^{-2s}
		\right]^{1/2}.
	\end{eqnarray*}
	From \eqref{eq:single-site-potential}, we infer that $R_b(x)=R_b(y)R_y(x;\theta_b(y))$. Hence,
\begin{eqnarray*}
		\esp\left[ \norm{  \chi_x G_{\omega,[a,b]}(E) \chi_y }^s \right]
%		&\leq&
%		C
%		\esp\left[
%			R_y(x, \theta_b(y))^{-s}
%			| \sin\left( \theta_a(x)-\theta_b(x)\right)|^{-s}
%		\right]
%		\\
		&\leq&
		C \,
		\esp\left[
			R_y(x, \theta_b(y))^{-2s}  
		\right]^{1/2}
		\esp\left[
			| \sin\left( \theta_a(x)-\theta_b(x)\right)|^{-2s}
		\right]^{1/2}.
	\end{eqnarray*}
	The first expected value above is bounded by the expected value on the right hand side of \eqref{eq:from-green-to-transfer-matrices-1}. The bound on the second one is given in the next lemma. 
%	An inspection of Lemma 3.2 in \cite{HSS} shows that
%	\begin{eqnarray}
%		\esp\left[
%			| \sin\left( \theta_a(x)-\theta_b(x)\right)|^{-2s}
%		\right]
%		\leq
%		C \| \rho_x\|_{\infty},
%	\end{eqnarray}
%	where $\rho_x$ is the density of $\lambda a_x \omega_x$. As $\rho_x(y)=\lambda^{-1} a_x^{-1} \rho(\lambda^{-1}a_x^{-1} y)$, the result follows.
\end{proof}
%%%%%%%%%%%%%%%%%%%%%%%%%%%%%%%%%%%%%%%%%%%%%%%%%%%%%%%%%%%%
%%%%%%%%%%%%%%%%%%%%%%%%%%%%%%%%%%%%%%%%%%%%%%%%%%%%%%%%%%%%

%%%%%%%%%%%%%%%%%%%%%%%%%%%%%%%%%%%%%%%%%%%%%%%%%%%%%%%%%%%%
%%%%%%%%%%%%%%%%%%%%%%%%%%%%%%%%%%%%%%%%%%%%%%%%%%%%%%%%%%%%
\begin{lemma}
	For all compact energy interval $[a,b]\subset (0,\infty)$
	and for all $s\in[0,\frac12)$, there exists $C=C(s,I)\in(0,\infty)$ such that
	\begin{eqnarray*}
		\int^{\omega^+}_{\omega^-} \frac{\rho(\omega_x)d\omega_x}{|\sin(\theta_b(x)-\theta_a(x))|^{2s}}
		\leq 
		C (\lambda a_x)^{-1} \| \rho \|,
	\end{eqnarray*}
	for all $a<x<b$ and all realization of $\{\omega_y:\, y\neq x\}$.
\end{lemma}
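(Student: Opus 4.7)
The plan is to compute $\partial\theta_b(x)/\partial\omega_x$ explicitly from the Pr\"ufer ODE, establish a uniform lower bound of the form $c\lambda a_x$ with $c>0$ independent of the other disorder variables, and then change variables. Since $u$ is supported in $(0,1)$, each shifted bump $u_n$ has support in $(n,n+1)$, so $V_\omega$ restricted to $[a,x]$ does not depend on $\omega_x$; hence $\vp_a$ and $\theta_a(x)$ are independent of $\omega_x$. In contrast, $\theta_b(x)$ depends on $\omega_x$ only through the Pr\"ufer flow on $[x,x+1]$, where the potential reads $V(s)=\lambda a_x\omega_x u(s-x)$ and the value $\theta_b(x+1)$ at the right endpoint is $\omega_x$-independent.

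Differentiating the Pr\"ufer ODE $\tau'(s)=k-(V(s)/k)\sin^2\tau(s)$ in $\omega_x$ yields a linear first-order equation for $g(s):=\partial_{\omega_x}\tau(s)$ with boundary condition $g(x+1)=0$. Solving it with the integrating factor $\mu(s)=(R(s)/R(x+1))^2$ (which comes naturally from $(\log R)'=V\sin(2\tau)/(2k)$) gives
\begin{equation*}
	\frac{\partial\theta_b(x)}{\partial\omega_x}=\frac{\lambda a_x}{k}\int_x^{x+1}\left(\frac{R(s)}{R(x)}\right)^2 u(s-x)\sin^2\tau(s)\,ds\ge 0,
\end{equation*}
so the map $\omega_x\mapsto\theta_b(x)$ is monotone non-decreasing.

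The technical heart is to show that this derivative is uniformly bounded below by $c\lambda a_x$ for some $c=c(I)>0$ independent of all $\omega_y$, $y\neq x$. The ratio $R(s)/R(x)$ lies in a compact subset of $(0,\infty)$, because $\|V\|_\infty\le M_0:=|\lambda|a_x|\omega^+|\|u\|_\infty$ is uniformly bounded, and \eqref{eq:single-site-potential} gives $u\ge c_u\chi_J$. It thus suffices to prove the uniform bound $\int_J\sin^2\tau(x+s)\,ds\ge c(I,J)>0$. This is a quantitative property of the Pr\"ufer flow: at any $s$ with $|\sin\tau(s)|\le\delta$ one has $\tau'(s)\ge k-M_0\delta^2/k$, so choosing $\delta=\delta(I,M_0)$ small enough yields $\tau'\ge k/2>0$ on $\{|\sin\tau|\le\delta\}$, forcing $\tau$ to cross each multiple of $\pi$ transversally. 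Hence each connected component of this set has length at most $4\arcsin\delta/k$, while the number of such components is controlled by the total variation of $\tau$ on $[x,x+1]$, itself bounded by $k+M_0/k$. Choosing $\delta$ small but uniform in the disorder yields the claim.

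Granted the lower bound, the monotonicity of $\omega_x\mapsto\theta_b(x)$ licenses the change of variables $\eta=\theta_b(x;\omega_x)-\theta_a(x)$ with $d\eta\ge c\lambda a_x\,d\omega_x$, so that
\begin{equation*}
	\int_{\omega^-}^{\omega^+}\frac{\rho(\omega_x)\,d\omega_x}{|\sin(\theta_b(x)-\theta_a(x))|^{2s}}\le\frac{\|\rho\|}{c\lambda a_x}\int_0^{2\pi}\frac{d\eta}{|\sin\eta|^{2s}}\le\frac{C\|\rho\|}{\lambda a_x},
\end{equation*}
the last integral being finite since $s<1/2$. The main obstacle is precisely this uniform lower bound on $\int_J\sin^2\tau$, which requires quantitative control of how fast the Pr\"ufer angle sweeps past the zeros of $\sin$ under a bounded, possibly sign-indefinite, random potential.
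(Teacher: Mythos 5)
Your proposal is correct and follows the same overall route as the paper: differentiate the Pr\"ufer angle $\theta_b(x)$ in $\omega_x$ via the integrating factor $R^2$, get a closed formula $\partial_{\omega_x}\theta_b(x)=\frac{\lambda a_x}{kR_b(x)^2}\int_x^{x+1}u_x(s)\,R_b(s)^2\sin^2\theta_b(s)\,ds$ (a fixed sign, so $\omega_x\mapsto\theta_b(x)$ is monotone), show it is comparable to $\lambda a_x$ uniformly in the remaining disorder, and change variables against the density $\rho$. Where you diverge is in how the uniform lower bound on this derivative is obtained. The paper rewrites $R_b^2\sin^2\theta_b$ as $\vp_b^2$ (or $\vp_b'^2$, depending on the Pr\"ufer convention; the paper's displayed coordinates have a typo) and then invokes the two deterministic Schr\"odinger-solution estimates quoted from \cite{HSS} as Lemmas \ref{thm:general-bound-1} and \ref{thm:general-bound-2}, which bound $\int_J\vp_b^2$ from below by $|\vp_b(x)|^2+|\vp_b'(x)|^2\asymp R_b(x)^2$; combined with \eqref{eq:single-site-potential} this gives $C_1\lambda a_x\le\partial_{\omega_x}\theta_b(x)\le C_2\lambda a_x$. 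You instead keep the expression in Pr\"ufer form and prove the lower bound on $\int_J\sin^2\theta_b$ directly by a transversality argument: $\theta_b'=k-(V/k)\sin^2\theta_b\ge k/2$ on $\{|\sin\theta_b|\le\delta\}$ for $\delta$ small, so this set is a union of at most $O(1+(k+M_0/k)/\pi)$ intervals each of length $O(\arcsin\delta/k)$, and shrinking $\delta$ leaves at least half of $J$ where $\sin^2\theta_b\ge\delta^2$. Both arguments are sound; yours is self-contained and more visibly "Pr\"ufer-flavoured," while the paper's reuses general lemmas it already needs elsewhere (e.g.\ in Lemma \ref{thm:from-correlator-to-1-correlator} and Lemma \ref{thm:transfer-matrices-a-priori}). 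One minor imprecision in your write-up: writing $\int_0^{2\pi}d\eta/|\sin\eta|^{2s}$ presupposes the $\eta$-range has length at most $2\pi$; in general the range has length at most $C\lambda a_x(\omega^+-\omega^-)$ (which your bound on $R(s)/R(x)$ already yields as the matching \emph{upper} bound on the derivative), and one should note that $\eta\mapsto|\sin\eta|^{-2s}$ is $\pi$-periodic and locally integrable for $s<1/2$, so the integral over a uniformly bounded range is controlled.
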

%%%%%%%%%%%%%%%%%%%%%%%%%%%%%%%%%%%%%%%%%%%%%%%%%%%%%%%%%%%%
%%%%%%%%%%%%%%%%%%%%%%%%%%%%%%%%%%%%%%%%%%%%%%%%%%%%%%%%%%%%
\begin{proof}
	Observe that $\theta_a(x)$ is independent of $\omega_x$. We will change variables to $t=\theta_b(x)$. Now, from \eqref{eq:ODE-prufer}, we see that, for $s\in[x,x+1)$, we have
	\begin{eqnarray*}
		\frac{d}{ds}\theta_a(s)
		&=&
		k - \frac{\lambda a_x \omega_x}{k} u_x(s)\sin \theta_a(s),
		\\
		\frac{d}{ds} \log R_a(s)
		&=&
		\frac{\lambda a_x \omega_x}{2k}u_x(s)\sin 2\theta_a(s).
	\end{eqnarray*}
	Hence,
	\begin{eqnarray}\label{deriv y theta}
		\frac{\partial}{\partial y} 
		\left(
			R_a(s)^2 \frac{\partial}{\partial \omega_x} \theta_a(s)
		\right)
		=
		- \frac{\lambda a_x}{k}u_x(s) R_a(s)^2 \sin^2\theta_a(s)
		=
		- \frac{\lambda a_x}{k^2}u_x(s)\vp_a(s)^2.
	\end{eqnarray}
	Since $\theta_b(x+1)$ is independent of $\omega_x$, \eqref{deriv y theta} yields
	\begin{eqnarray}
		\frac{\partial}{\partial \omega_x} \theta_a(x)
		=
		\frac{\lambda a_x}{k^2 R_a(x)^2}
		\int^{x+1}_x
		u_x(s)\vp_a(s)^2
		ds.
	\end{eqnarray}
	From \eqref{eq:single-site-potential}, Lemma \ref{thm:general-bound-1} and \ref{thm:general-bound-2}, we can then find two positive constants $C_1$ and $C_2$ such that
	\begin{eqnarray}
		C_1 \lambda a_x
		\leq
		\left| \frac{\partial}{\partial \omega_x} \theta_a(x) \right|
		\leq
		C_2 \lambda a_x.
	\end{eqnarray}
	This change of variables leads to the estimate.
\end{proof}
%%%%%%%%%%%%%%%%%%%%%%%%%%%%%%%%%%%%%%%%%%%%%%%%%%%%%%%%%%%%
%%%%%%%%%%%%%%%%%%%%%%%%%%%%%%%%%%%%%%%%%%%%%%%%%%%%%%%%%%%%

%%%%%%%%%%%%%%%%%%%%%%%%%%%%%%%%%%%%%%%%%%%%%%%%%%%%%%%%%%%%
%%%%%%%%%%%%%%%%%%%%%%%%%%%%%%%%%%%%%%%%%%%%%%%%%%%%%%%%%%%%
%%%%%%%%%%%%%%%%%%%%%%%%%%%%%%%%%%%%%%%%%%%%%%%%%%%%%%%%%%%%
%%%%%%%%%%%%%%%%%%%%%%%%%%%%%%%%%%%%%%%%%%%%%%%%%%%%%%%%%%%%
%%%%%%%%%%%%%%%%%%%%%%%%%%%%%%%%%%%%%%%%%%%%%%%%%%%%%%%%%%%%
%%%%%%%%%%%%%%%%%%%%%%%%%%%%%%%%%%%%%%%%%%%%%%%%%%%%%%%%%%%%

\subsection{Estimates on transfer matrices}\label{sec:estimates-transfer-matrices}

%%%%%%%%%%%%%%%%%%%%%%%%%%%%%%%%%%%%%%%%%%%%%%%%%%%%%%%%%%%%
%%%%%%%%%%%%%%%%%%%%%%%%%%%%%%%%%%%%%%%%%%%%%%%%%%%%%%%%%%%%
We start with an a priori estimate on the norm of transfer matrices.
%%%%%%%%%%%%%%%%%%%%%%%%%%%%%%%%%%%%%%%%%%%%%%%%%%%%%%%%%%%%
%%%%%%%%%%%%%%%%%%%%%%%%%%%%%%%%%%%%%%%%%%%%%%%%%%%%%%%%%%%%
\begin{lemma}\label{thm:transfer-matrices-a-priori}
	For all compact interval $I\subset (0,\infty)$ and for all $[a,b]\subset \R$, there exists $M=M(I,a,b) \in (0,\infty)$ such that
	\begin{eqnarray*}
		M^{-1}
		\leq
		\| \T(x,y;E)\|
		\leq
		M,
	\end{eqnarray*}
	for all $E\in I$ and $x,y \in [a,b]$.
\end{lemma}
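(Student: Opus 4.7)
The plan is to reduce the eigenvalue equation $\opH\varphi = E\varphi$ on $[a,b]$ to a first-order linear ODE system and then apply two classical facts: Gronwall's inequality and Liouville's formula for the determinant. Concretely, setting $\Psi(s) = \bigl(\varphi(s),\varphi'(s)\bigr)^{\top}$, the equation becomes
\begin{equation*}
\frac{d}{ds}\Psi(s) = A_{\omega}(s;E)\Psi(s), \qquad A_{\omega}(s;E) = \begin{pmatrix} 0 & 1 \\ \lambda V_{\omega}(s) - E & 0 \end{pmatrix},
\end{equation*}
and $\T(y,x;E)$ is by definition the propagator of this system from $x$ to $y$.

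The next step is to observe that the coefficient matrix is uniformly bounded on $[a,b]\times I \times \Omega$. This is where the assumptions on the model enter: the random variables $\omega_n$ are deterministically bounded by $\max(|\omega^+|,|\omega^-|)$ thanks to the choice of probability space; the sequence $(a_n)$ is bounded since $a_n|n|^{\alpha}\to 1$; and the single-site profile $u$ satisfies $u\leq C_u\chi_{[0,1]}$. Consequently $\|V_{\omega}\|_{\infty}<\infty$ with a bound independent of $\omega$, and together with $\sup I<\infty$ this yields a deterministic constant $K = K(I,a,b)$ such that $\|A_{\omega}(s;E)\|\leq K$ for all admissible $s,E,\omega$. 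Gronwall applied to $\|\Psi(s)\|$ (or equivalently to the matrix solution with identity initial data) then gives
\begin{equation*}
\|\T(y,x;E)\| \leq \e^{K|y-x|} \leq \e^{K(b-a)},
\end{equation*}
which is the upper bound.

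For the lower bound I would use that $\tr A_{\omega}(s;E)=0$, so Liouville's formula yields $\det \T(y,x;E)=1$. For a $2\times 2$ matrix of unit determinant the singular values $\sigma_1\geq \sigma_2>0$ satisfy $\sigma_1\sigma_2=1$, hence $\|\T(y,x;E)\|=\sigma_1\geq 1$. Choosing $M:=\e^{K(b-a)}\geq 1$ therefore gives both inequalities simultaneously. There is no genuine obstacle here, since the lemma is essentially a quantitative form of continuous dependence for a linear ODE with bounded coefficients; the only point worth emphasising is that the uniform-in-$\omega$ control on $V_{\omega}$ comes from the deterministic (not merely almost sure) boundedness of the $\omega_n$, so $M$ can indeed be taken independent of $\omega$.
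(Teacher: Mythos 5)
Your proof is correct and follows the same strategy as the paper's. The upper bound in both cases is a Gronwall estimate for the linear first-order system with coefficient matrix $A_\omega(s;E)$, resting on the deterministic uniform bound for $V_\omega$ that the boundedness of the $\omega_n$, of the sequence $(a_n)$, and of the single-site profile $u$ provides; the paper packages this as an appeal to Lemma~\ref{thm:general-bound-1} from Appendix~\ref{app:general}, which gives
\begin{equation*}
\exp\Bigl(-\tfrac12\int_a^b|1+V(t)-E|\,dt\Bigr)\leq\|\T(x,y;E)\|\leq\exp\Bigl(\tfrac12\int_a^b|1+V(t)-E|\,dt\Bigr).
\end{equation*}
The one point of divergence is the lower bound: the paper reads it off the same two-sided estimate of Lemma~\ref{thm:general-bound-1}, whereas you observe that $A_\omega$ is traceless, so Liouville's formula yields $\det\T(y,x;E)=1$ and hence $\|\T(y,x;E)\|\geq 1$. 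That is a legitimate and slightly cleaner shortcut (a universal fact about $SL(2,\R)$ matrices, requiring no integral estimate, and giving a marginally sharper constant). The same determinant identity also explains the relation $\|\T^{-1}\|=\|\T\|$, which the paper uses later in the proof of Lemma~\ref{thm:first-bound-Tmn}, so your observation is not wasted.
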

%%%%%%%%%%%%%%%%%%%%%%%%%%%%%%%%%%%%%%%%%%%%%%%%%%%%%%%%%%%%
%%%%%%%%%%%%%%%%%%%%%%%%%%%%%%%%%%%%%%%%%%%%%%%%%%%%%%%%%%%%
\begin{proof}
	The estimates of Lemma \ref{thm:general-bound-1} from Appendix \ref{app:general} imply that
	\begin{eqnarray*}
		\exp \left( -\frac12 \int^b_a |1+V(t)-E|\, dt \right)
		\leq
		\| \T(x,y;E)\| 
		\leq
		\exp \left( \frac12 \int^b_a |1+V(t)-E|\, dt \right).
	\end{eqnarray*}
\end{proof}

The proofs of the next two lemmas are strongly inspired by \cite{CKM} but we provide them in full details as our non ergodic situation requires finer estimates. For applications of this argument in the continuum ergodic setting, see \cite{HSS, DaSiSt}. Recall the notation $T_{\omega,n}(E)=\T(n+1,n;E)$.
%%%%%%%%%%%%%%%%%%%%%%%%%%%%%%%%%%%%%%%%%%%%%%%%%%%%%%%%%%%%
%%%%%%%%%%%%%%%%%%%%%%%%%%%%%%%%%%%%%%%%%%%%%%%%%%%%%%%%%%%%
\begin{lemma}\label{thm:first-bound-Tmn}
	Let $0<\alpha< \frac12$ and $\lambda\neq 0$. For all compact interval $I\subset (0,\infty)$, there exist
	$n_0=n_0(I)\geq 1$, $s_0=s_0(I)\in(0,1)$ and $c=c(I)>0$ such that
	\begin{eqnarray*}
		\esp\left[ \|T_{\omega,ln_0 }(E)\cdots T_{\omega,(l-1)n_0+1}(E)\psi_0\|^{-s} \right]
		\leq
		1 - \frac{c}{l^{2\alpha}},
	\end{eqnarray*}
	for all $s\in(0,s_0]$, $l\geq 1$, $\|\psi_0\|=1$ and $E\in I$.
\end{lemma}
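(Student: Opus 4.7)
My plan follows the Taylor-expansion scheme of \cite{CKM}: combine the linear lower bound \eqref{eq:lower-bound-Tmn} on the first moment of $\log\|T_{\omega,ln_0}\cdots T_{\omega,(l-1)n_0+1}\psi_0\|$ with the quadratic upper bound \eqref{eq:upper-bound-Tmn} on its second moment and a deterministic a priori control on the same quantity, so as to deduce that the negative fractional moment of this norm lies below $1$ by an amount of order $l^{-2\alpha}$. Concretely, fix $\beta'\in\bigl(0,\inf_{E\in I}\beta(\lambda,E)\bigr)$, let $n_0=n_0(I)$ be the integer provided by Lemma \ref{thm:bounds-on-Tmn}, and set
\begin{equation*}
  Y_l \,:=\, \log\bigl\|T_{\omega,ln_0}(E)\cdots T_{\omega,(l-1)n_0+1}(E)\,\psi_0\bigr\|,
  \qquad
  A_l \,:=\, \sum_{j=(l-1)n_0+1}^{ln_0}\frac{1}{j^{2\alpha}}.
\end{equation*}
Lemma \ref{thm:bounds-on-Tmn} then gives $\esp[Y_l]\geq \beta' A_l$ and $\esp[Y_l^2]\leq C(I)\,A_l$, uniformly in $\|\psi_0\|=1$, $E\in I$ and $l\geq 1$.

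The first preparatory step is to establish that $|Y_l|\leq M_0$ almost surely, where $M_0=M_0(I,n_0,\lambda)$ is a deterministic constant \emph{independent of $l$}. Lemma \ref{thm:transfer-matrices-a-priori} applied on $[(l-1)n_0+1,\,ln_0+1]$ yields
\begin{equation*}
  |Y_l|\,\leq\,\tfrac{1}{2}\int_{(l-1)n_0+1}^{ln_0+1}\bigl|1+\lambda V_\omega(t)-E\bigr|\,dt.
\end{equation*}
Because the single-site potentials $u_n$ have disjoint supports with $u_n\leq C_u\chi_{[n,n+1]}$, one has the global bound $\|V_\omega\|_\infty\leq C_u\,\omega^+\,\sup_n a_n<\infty$, so the integral above is bounded by some $M_0$ uniform in $l$, $\omega$ and $E\in I$. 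This is the only place where the decay of $a_n$ really enters the argument.

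Next, the elementary inequality $\e^{-x}\leq 1-x+\tfrac{x^2}{2}\e^{|x|}$ (Taylor's formula with integral remainder), applied to $x=sY_l$ together with $|Y_l|\leq M_0$, yields
\begin{equation*}
  \e^{-sY_l}\,\leq\,1-sY_l+\tfrac{s^2}{2}\,\e^{sM_0}\,Y_l^2.
\end{equation*}
Taking expectations and inserting the two bounds from Lemma \ref{thm:bounds-on-Tmn},
\begin{equation*}
  \esp\bigl[\e^{-sY_l}\bigr]\,\leq\,1-s\,A_l\left(\beta'-\tfrac{sC\e^{sM_0}}{2}\right).
\end{equation*}
I now pick $s_0=s_0(I)\in(0,1)$ small enough that $s_0C\e^{s_0M_0}\leq \beta'$; for every $s\in(0,s_0]$ the parenthesis is $\geq \beta'/2$, so $\esp[\e^{-sY_l}]\leq 1-\tfrac{s\beta'}{2} A_l$. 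The trivial lower bound $A_l\geq n_0(ln_0)^{-2\alpha}=n_0^{1-2\alpha}\,l^{-2\alpha}$ then produces the announced inequality with $c:=s_0\beta' n_0^{1-2\alpha}/2$; by shrinking $s_0$ once more if necessary I can also ensure $c<1$, so that the bound is non-trivial already at $l=1$.

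The main point of care is precisely the uniform-in-$l$ a priori bound $|Y_l|\leq M_0$: were $M_0$ to grow with $l$, the factor $\e^{sM_0}$ in the Taylor remainder would blow up along $l$ and prevent me from absorbing the quadratic term into the linear one with a single $s_0$ valid for every block index. The rest is a straightforward combination of the first and second moment estimates already at our disposal.
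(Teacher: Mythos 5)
Your proposal is correct and follows essentially the same route as the paper: use the first- and second-moment estimates of Lemma \ref{thm:bounds-on-Tmn}, a deterministic $l$-uniform a priori bound on $|Y_l|$, and the elementary Taylor inequality $\e^y\le 1+y+y^2\e^{|y|}$ (yours has the harmless extra $\tfrac12$) to absorb the quadratic term into the linear one for small $s$. The only presentational difference is in how the $l$-uniform a priori bound is obtained: the paper decomposes the block into unit steps, bounds $\log\|T_{\omega,j}\|$ via submultiplicativity and $\|T_{\omega,j}^{-1}\|=\|T_{\omega,j}\|$, and sums $n_0$ terms, while you apply the integral bound underlying Lemma \ref{thm:transfer-matrices-a-priori} to the whole block at once and argue directly that $\|V_\omega\|_\infty$ is finite (note the tiny slip: the bound should involve $\max\{|\omega^-|,|\omega^+|\}$ rather than $\omega^+$, since $\omega^-$ may be negative). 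Making this uniformity explicit is a worthwhile emphasis, since the literal statement of Lemma \ref{thm:transfer-matrices-a-priori} gives a constant $M(I,a,b)$ that is not obviously independent of the window.
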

%%%%%%%%%%%%%%%%%%%%%%%%%%%%%%%%%%%%%%%%%%%%%%%%%%%%%%%%%%%%
%%%%%%%%%%%%%%%%%%%%%%%%%%%%%%%%%%%%%%%%%%%%%%%%%%%%%%%%%%%%
\begin{proof}
	We drop the dependence on $E$ to lighten the notation.
	From Lemma \ref{thm:bounds-on-Tmn}, we obtain $n_0=n_0(I)\geq 1$, $c_1=c_1(I)>0$ and $c_2=c_2(I)>0$ such that
	\begin{eqnarray*}
		\esp\left[ \log \|T_{\omega,ln_0 }\cdots T_{\omega,(l-1)n_0+1}\psi_0\|\right]
		\geq c_1
		 \frac{n_0^{1-2\alpha}}{l^{2\alpha}},
	\end{eqnarray*}
	and
	\begin{eqnarray*}
		\esp\left[
			\left(
				\log \|T_{\omega,ln_0 }\cdots T_{\omega,(l-1)n_0+1}\psi_0\|
			\right)^2
		\right]
		\leq c_2
		\frac{n_0^{1-2\alpha}}{l^{2\alpha}},
	\end{eqnarray*}
	for all $l\geq 1$, $\|\psi_0\|=1$ and $E\in I$.
	Now, we apply the inequality $\e^y\leq 1 + y + y^2 \e^{|y|}$ to $y=-s \log \|T_{\omega,ln_0} \cdots T_{\omega,(l-1)n_0+1}\psi_0\|$ with $s$ to be fixed later, so that
	\begin{eqnarray*}
		&&\|T_{\omega,ln_0 }\cdots T_{\omega,(l-1)n_0+1}\psi_0\|^{-s} 
		\leq 
		1- s \log \|T_{\omega,ln_0 }\cdots T_{\omega,(l-1)n_0+1}\psi_0\|
		\\
		&&\quad + \quad
		s^2 \left( \log \|T_{\omega,ln_0 }\cdots T_{\omega,(l-1)n_0+1}\psi_0\|\right)^2
		\e^{s \left|\log \|T_{\omega,ln_0 }\cdots T_{\omega,(l-1)n_0+1}\psi_0\| \right|}.
	\end{eqnarray*}
	Now,
	\begin{eqnarray*}
		\log \norm{T_{\omega,ln_0 }\cdots T_{\omega,(l-1)n_0+1}\psi_0} \leq \sum^{ln_0}_{j=(l-1)n_0+1}\log \norm{ T_{\omega,j}}.
	\end{eqnarray*}
	On the other hand,
	\begin{eqnarray*}
		1 &=& \| T_{\omega,(j-1)n_0+1}^{-1}\cdots T_{\omega,ln_0}^{-1} T_{\omega,ln_0}\cdots T_{\omega,(l-1)n_0+1} \psi_0\| \\
			&\leq&
			\| T_{\omega,(l-1)n_0+1} \| \cdots \| T_{\omega,ln_0}\| \| T_{\omega,ln_0}\cdots T_{\omega,(l-1)n_0+1} \psi_0\|,
	\end{eqnarray*}
	since $\| T_{\omega,j}^{-1}\| = \| T_{\omega,j}\|$, so that we have
	\begin{eqnarray*}
		\log \|T_{\omega,ln_0 }\cdots T_{\omega,(l-1)n_0+1}\psi_0\| 
		\geq 
		-\sum^{ln_0}_{j=(l-1)n_0+1}\log \| T_{\omega,j}\|.
	\end{eqnarray*}
	Piecing these bounds together and remembering Lemma \ref{thm:transfer-matrices-a-priori}, we obtain
	\begin{eqnarray*}
		%\esp\left[
		\left| \log \norm{ T_{\omega,ln_0 }\cdots T_{\omega,(l-1)n_0+1}\psi_0} \right| 
		%\right]
		\leq 
		\sum^{ln_0}_{j=(l-1)n_0+1} 
		%\esp\left[
		\log \| T_{\omega,j}\| %\right] 
		\leq c_3 n_0,
	\end{eqnarray*}
	for some $c_3=c_3(I)>0$.
	Hence, 
	\begin{eqnarray*}
		\esp\left[ \norm{T_{\omega,ln_0 }\cdots T_{\omega,(l-1)n_0+1}\psi_0}^{-s} \right]
		\leq
		1 - c_1 s \frac{n_0^{1-2\alpha}}{l^{2\alpha}} 
		+
		c_2 s^2 \e^{c_3 n_0} \frac{n_0^{1-2\alpha}}{l^{2\alpha}},
	\end{eqnarray*}
	for all $l\ge 1$, $\norm{\psi_0}=1$ and $E\in I$. We can now find  $s_0=s_0(I)>0$ small enough such that
	\begin{eqnarray*}
		\esp\left[ \norm{T_{\omega,ln_0 }\cdots T_{\omega,(l-1)n_0+1}\psi_0}^{-s} \right]
		\leq
		1 - \frac{c_4}{l^{2\alpha}},
	\end{eqnarray*}
	for some $c_4>0$, for all $s\in(0,s_0]$, $l\geq 1$, $\norm{\psi_0}=1$ and $E\in I$. 
\end{proof}
%%%%%%%%%%%%%%%%%%%%%%%%%%%%%%%%%%%%%%%%%%%%%%%%%%%%%%%%%%%%
%%%%%%%%%%%%%%%%%%%%%%%%%%%%%%%%%%%%%%%%%%%%%%%%%%%%%%%%%%%%
The following lemma contains the key estimate to Theorem \ref{thm:FM}.
%%%%%%%%%%%%%%%%%%%%%%%%%%%%%%%%%%%%%%%%%%%%%%%%%%%%%%%%%%%%
%%%%%%%%%%%%%%%%%%%%%%%%%%%%%%%%%%%%%%%%%%%%%%%%%%%%%%%%%%%%
\begin{lemma}\label{thm:key-lemma}
Let $0<\alpha< \frac12$ and $\lambda\neq 0$. For each $m\in\Z$ and each compact interval $I\subset  (0,\infty)$, there exist  $s_0=s_0(m,I)\in(0,1)$, $C=C(m,I)$ and $c=c(I)>0$ such that
	\begin{eqnarray}\label{eq:bound-key-lemma}
		\esp\left[
			\norm{
			{\bf T}_{\omega}(n,m;E)
			\psi_0}^{-s} 
			\right] 
			\leq 
			C\e^{-c|n|^{1-2\alpha}},
	\end{eqnarray}
	for all $s\in(0,s_0]$, $\|\psi_0\|=1$ and $n\in\Z$.
\end{lemma}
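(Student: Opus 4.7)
By the symmetry of the asymptotics in \eqref{eq:asymptotics-prufer} (which hold both as $n\to+\infty$ and $n\to-\infty$), it suffices to treat the case $n>m$. For $|n-m|$ bounded, the claim follows from Lemma \ref{thm:transfer-matrices-a-priori} after enlarging $C$, so we focus on $|n-m|$ large. Set $L:=\lfloor (n-m)/n_0\rfloor$ and decompose
\[
\T(n,m;E)=R(E)\,\Pi_L(E)\cdots\Pi_1(E),\qquad \Pi_l(E):=T_{\omega,m+ln_0-1}(E)\cdots T_{\omega,m+(l-1)n_0}(E),
\]
where $R(E)$ gathers at most $n_0-1$ remaining one-step transfer matrices. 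Since $u_j$ is supported in $(j,j+1)$, each $T_{\omega,j}(E)$ depends on $\omega$ only through $\omega_j$, hence $\Pi_1,\ldots,\Pi_L,R$ are mutually independent. Uniform boundedness of $(a_j)$ and $(\omega_j)$ combined with Lemma \ref{thm:transfer-matrices-a-priori} applied to unit intervals gives a deterministic bound $\|T_{\omega,j}\|\leq M=M(I)$, and $\det T_{\omega,j}=1$ implies $\|T_{\omega,j}^{-1}\|=\|T_{\omega,j}\|\leq M$. Consequently
\[
\|\T(n,m;E)\psi_0\|\;\geq\; M^{-n_0}\,\|\Pi_L(E)\cdots\Pi_1(E)\psi_0\|.
\]

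The heart of the argument is a shifted version of Lemma \ref{thm:first-bound-Tmn}: repeating its proof with Lemma \ref{thm:bounds-on-Tmn} applied to the index range $[m+(l-1)n_0+1,m+ln_0]$---whose weight $\sum_j j^{-2\alpha}$ is comparable to $n_0^{1-2\alpha}l^{-2\alpha}$ once $l$ is large enough that the block lies well inside $\{j\geq 1\}$---produces constants $s_0=s_0(m,I)\in(0,1)$, $l_0=l_0(m,I)\in\n$ and $c=c(m,I)>0$ such that
\[
\esp\bigl[\|\Pi_l(E)\phi\|^{-s}\bigr]\;\leq\; 1-\frac{c}{l^{2\alpha}}
\]
for every unit vector $\phi$, every $l\geq l_0$, every $s\in(0,s_0]$ and every $E\in I$. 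Conditioning on $\mathcal F_{l-1}:=\sigma(\Pi_1,\ldots,\Pi_{l-1})$ and applying this estimate to the random unit vector $\Pi_{l-1}\cdots\Pi_1\psi_0/\|\cdot\|$, which is $\mathcal F_{l-1}$-measurable while $\Pi_l$ is independent of $\mathcal F_{l-1}$, yields the one-step recursion
\[
\esp\bigl[\|\Pi_l\cdots\Pi_1\psi_0\|^{-s}\bigr]\;\leq\;\bigl(1-c\,l^{-2\alpha}\bigr)\,\esp\bigl[\|\Pi_{l-1}\cdots\Pi_1\psi_0\|^{-s}\bigr],\qquad l\geq l_0.
\]
For $l<l_0$ the crude deterministic bound $\|\Pi_l^{-1}\|=\|\Pi_l\|\leq M^{n_0}$ gives $\|\Pi_{l_0-1}\cdots\Pi_1\psi_0\|^{-s}\leq M^{sn_0(l_0-1)}$.

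Iterating the recursion from $L$ down to $l_0$ and combining with the remainder contribution $M^{sn_0}$ produces
\[
\esp\bigl[\|\T(n,m;E)\psi_0\|^{-s}\bigr]\;\leq\; M^{sn_0 l_0}\prod_{l=l_0}^{L}\Bigl(1-\frac{c}{l^{2\alpha}}\Bigr)\;\leq\; C_1\exp\!\Bigl(-c'\sum_{l=l_0}^{L}l^{-2\alpha}\Bigr).
\]
Since $\alpha<\tfrac12$, the last sum is bounded below by a constant multiple of $L^{1-2\alpha}$, and $L\asymp (n-m)/n_0$; for $|n|$ large (say $|n|\geq 2|m|$) one has $|n-m|^{1-2\alpha}\geq c|n|^{1-2\alpha}$. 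Absorbing the small-$|n|$ regime into the constant $C$ via Lemma \ref{thm:transfer-matrices-a-priori}, the bound \eqref{eq:bound-key-lemma} follows. The main technical obstacle lies in the first step: Lemma \ref{thm:first-bound-Tmn} is calibrated to blocks indexed from $(l-1)n_0+1$, and shifting the anchor to $m$ modifies the weights in the low-$l$ regime; this is precisely why the constants $s_0$ and $C$ in the statement of Lemma \ref{thm:key-lemma} are allowed to depend on $m$.
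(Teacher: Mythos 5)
Your block decomposition is anchored at $m$ (blocks $\Pi_l$ covering indices $[m+(l-1)n_0,\,m+ln_0-1]$), whereas the paper anchors its blocks at the multiples of $n_0$, writing $m=l_1n_0-r_1$ and $n=l_2n_0+r_2$ and peeling off the origin-aligned blocks $T_{\omega,ln_0}\cdots T_{\omega,(l-1)n_0+1}$ after disposing of the partial initial block $T_{\omega,m+r_1}\cdots T_{\omega,m}$ and the partial final block deterministically. The paper's choice is not cosmetic: it lets Lemma \ref{thm:first-bound-Tmn} be applied verbatim, since that lemma is stated and proved precisely for origin-aligned blocks, and it delivers the product $\prod_{j=l_1}^{l_2}(1-c_4/j^{2\alpha})$ with $c_4=c_4(I)$ automatically $m$-independent; the $m$-dependence lands entirely in the multiplicative constant $C$ via the factor $\e^{cm^{1-2\alpha}}$. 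Your approach instead requires the ``shifted Lemma \ref{thm:first-bound-Tmn}'' that you state but do not prove; establishing it means re-running both Lemma \ref{thm:bounds-on-Tmn} and the martingale estimate of Appendix \ref{app:martingale} on the windows $[m+(l-1)n_0+1,\,m+ln_0]$, and this is genuine additional work, not a one-line remark.

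There is also a quantitative slip that matters for the statement you are proving: you announce the shifted block estimate with constant $c=c(m,I)$. If that $c$ were genuinely $m$-dependent, it would propagate through $\prod_{l\geq l_0}(1-c/l^{2\alpha})\leq\exp(-c\sum l^{-2\alpha})$ into the exponent of the final bound, contradicting the requirement $c=c(I)$ in \eqref{eq:bound-key-lemma}. What saves the argument is that for $l\geq l_0(m,I)$ the block weight $\sum_{j=m+(l-1)n_0+1}^{m+ln_0}j^{-2\alpha}$ converges to $n_0^{1-2\alpha}l^{-2\alpha}$ with an $m$-independent limit, so the per-block constant can be taken as $c(I)$ once $l$ is large enough, with all $m$-dependence pushed into $l_0$ and hence into $C(m,I)$ and $s_0(m,I)$; you gesture at this (``this is precisely why the constants $s_0$ and $C$ \dots are allowed to depend on $m$'') but you attach the $m$-dependence to the wrong constant. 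Finally, the reduction ``by symmetry it suffices to treat $n>m$'' glosses over the mixed-sign cases; the paper treats $m\leq 0,\,n\geq n_0$ (and its mirror) separately by peeling off $T_{\omega,m}^{-1}\cdots T_{\omega,-1}^{-1}$ to reduce to $\T(n,0;E)$, which is the clean way to get the uniform $c(I)$ there as well. So: same conditioning machinery as the paper, but a different (legitimate) block anchoring that forces you to re-derive the per-block estimate, plus a bookkeeping error on where the $m$-dependence lives.
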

%%%%%%%%%%%%%%%%%%%%%%%%%%%%%%%%%%%%%%%%%%%%%%%%%%%%%%%%%%%%
%%%%%%%%%%%%%%%%%%%%%%%%%%%%%%%%%%%%%%%%%%%%%%%%%%%%%%%%%%%%
\begin{proof}
	Once again, we drop the dependence on $E$ to lighten the notation.
	We start proving the bound \eqref{eq:bound-key-lemma} for $m\geq 0$, $n\geq m + n_0$ and $s\in(0,s_0]$ where $n_0=n_0(I)\geq 1$ and $s_0=s_0(I)>0$ are taken from the previous lemma.
	Write $m= l_1n_0 - r_1$ and $n=l_2n_0 + r_2$ with $0\leq r_1,\, r_2 < n_0$.
	By Lemma \ref{thm:transfer-matrices-a-priori},
	\begin{eqnarray*}
		\norm{ T_{\omega,l_2n_0} \cdots T_{\omega,m} \psi_0 }
		&=&
		\|T_{\omega,l_2n_0+1}^{-1} T_{\omega,n}^{-1}T_{\omega,n}\cdots T_{\omega,m} \psi_0\|\\
		&\leq&
		\prod^n_{j=l_2n_0+1}\norm{ T_{\omega,j}} 
		\cdot
		\norm{ T_{\omega,n} \cdots T_{\omega,m} \psi_0}\\
		&\leq&
		C_1 \norm{ T_{\omega,n} \cdots T_{\omega,m} \psi_0},
	\end{eqnarray*}
	for some $C_1=C_1(I)>0$. 
	The rest of the proof is  based on a careful conditioning that we now detail.	
	Let
	\begin{eqnarray*}
		\psi_{l} = \frac{T_{ln_0} \cdots T_{m} \psi_0}{\norm{ T_{ln_0} \cdots T_{m} \psi_0}},
	\end{eqnarray*}
	and observe that $\psi_{l-1}$ is measurable with respect to $\mathcal{F}_{l-1}$.
	Hence, Lemma \ref{thm:first-bound-Tmn} can be applied to obtain
	\begin{eqnarray*}
		\esp
		\left[\norm{T_{\omega,ln_0} \cdots T_{\omega,(l-1)n_0 +1}  \psi_{l-1}}^{-s}
		\Big{|}
		\mathcal{F}_{l-1}
		\right]
		\leq
		1-\frac{c_4}{l^{2\alpha}},
	\end{eqnarray*}
	where $c_4=c_4(I)>0$ is the constant from Lemma \ref{thm:first-bound-Tmn}. Hence,
	\begin{eqnarray*}
		&&
		\esp\left[\norm{ T_{\omega,n} \cdots T_{\omega,m} \psi_0}^{-s}
		\right]
		\leq
		C_1
		\esp\left[\norm{T_{\omega,l_2n_0} \cdots T_{\omega,m} \psi_0}^{-s}
		\right]
		\\
		&&
		=
		C_1
		\esp\left[\norm{ T_{\omega,(l_2-1)n_0} \cdots T_{\omega,m} \psi_0}^{-s}
			\norm{ T_{\omega,l_2n_0} \cdots T_{\omega,(l_2-1)n_0+1} \psi_{l_2-1}}^{-s} 
		\right]
		\\
		&&
		=
		C_1
		\esp\left[ \esp\left[\norm{T_{\omega,(l_2-1)n_0} \cdots T_{\omega,m} \psi_0}^{-s}
			\norm{ T_{\omega,l_2n_0} \cdots T_{(l_2-1)n_0+1} \psi_{l_2-1}}^{-s} 
		\Big{|}
		\mathcal{F}_{l_2-1}
		\right]\right]
		\\
		&&
		=
		C_1
		\esp\left[ [\norm{ T_{\omega,(l_2-1)n_0} \cdots T_{\omega,m} \psi_0}^{-s}
			 \esp\left[\norm{T_{\omega,l_2n_0} \cdots T_{\omega,(l_2-1)n_0+1} \psi_{l_2-1}}^{-s} 
		\Big{|}
		\mathcal{F}_{l_2-1}
		\right]\right]
		\\
		&&
		\leq
		C_1 
		\left( 1-\frac{c_4}{l_2^{2\alpha}}\right)
		\esp\left[\norm{T_{\omega,(l_2-1)n_0} \cdots T_{\omega,m} \psi_0}^{-s}
		\right].
	\end{eqnarray*}
	Iterating, we get
	\begin{eqnarray*}
		\esp\left[\norm{ T_{\omega,n} \cdots T_{\omega,m} \psi_0}^{-s}
		\right]
		&\leq&
		C_1
		\prod^{l_2}_{j=l_1} \left( 1-\frac{c_4}{j^{2\alpha}}\right)
		\esp\left[\norm{ T_{\omega,m+r_1} \cdots T_{\omega,m} \psi_0}^{-s}
		\right].
	\end{eqnarray*}
	Just as we did in the previous lemma, we have
	\begin{eqnarray*}
		1 = \norm{T_{\omega,m}^{-1} \cdots T_{\omega,m+r_1}^{-1} T_{\omega,m+r_1} \cdots T_{\omega,m}\psi_0}
		\leq 
		\prod^{m+r_1}_{j=m} \norm{ T_{\omega,j} \| \cdot  \|T_{\omega,m+r_1} \cdots T_{\omega,m}\psi_0},
	\end{eqnarray*}
	so that, by Lemma \ref{thm:transfer-matrices-a-priori},
	\begin{eqnarray*}
		\esp\left[\norm{T_{\omega,m+r_1} \cdots T_{\omega,m}\psi_0}^{-s}
		\right]
		\leq C_2,
	\end{eqnarray*}
	for some $C_2=C_2(I)>0$.
	Hence,
	\begin{eqnarray*}
		\esp\left[\norm{ T_{\omega,n} \cdots T_{\omega,m} \psi_0}^{-s}
		\right]
		\leq
		C_3
		\prod^{l_2}_{j=l_1} \left( 1-\frac{c_4}{j^{2\alpha}}\right)
		\leq
		C_3\
		\e^{cm^{1-2\alpha}-cn^{1-2\alpha}},
	\end{eqnarray*}
	for some suitable $C_3=C_3(I)>0$ and $c=c(I)>0$.
	\newline
	The symmetric situation where $m\leq 0$ and $n\leq m-n_0$ is treated in the exact same way. If $m\leq 0 $ and $n\geq n_0$, the analysis is essentially reduced to estimate $\T(n,0)$. Indeed, we notice that
	\begin{eqnarray*}
		\T(n,0)=T^{-1}_{\omega,m}\cdots T^{-1}_{\omega,-1}\T(n,m),
	\end{eqnarray*}
	so that, by Lemma \ref{thm:transfer-matrices-a-priori},
	\begin{eqnarray*}
		\norm{ \T(n,m)\|^{-s} \leq C_4 \| \T(n,0)}^{-s},
	\end{eqnarray*}
	for some $C_4=C_4(I)>0$. The right-hand-side is covered by our previous discussion.
	The case $m\geq 0$ and $n\leq -n_0$ is of course similar. 
	In all the remaining cases, we simply use the a priori bound of Lemma \ref{thm:transfer-matrices-a-priori} so that
	\begin{eqnarray*}
		\esp\left[
			\norm{	\T(n,m)
			\psi_0
			}^{-s} 
			\right] 
		\leq 
		\esp\left[
			\norm{\T(n,m)}^{s} 
			\right] 
		\leq
		M^{|n-m|},
	\end{eqnarray*}
	where $M=M(I)$ was defined in Lemma \ref{thm:transfer-matrices-a-priori}.	
\end{proof}
%%%%%%%%%%%%%%%%%%%%%%%%%%%%%%%%%%%%%%%%%%%%%%%%%%%%%%%%%%%%
%%%%%%%%%%%%%%%%%%%%%%%%%%%%%%%%%%%%%%%%%%%%%%%%%%%%%%%%%%%%
%%%%%%%%%%%%%%%%%%%%%%%%%%%%%%%%%%%%%%%%%%%%%%%%%%%%%%%%%%%%
%%%%%%%%%%%%%%%%%%%%%%%%%%%%%%%%%%%%%%%%%%%%%%%%%%%%%%%%%%%%
\begin{proof}[Proof of Theorem \ref{thm:FM}]
	Use Lemma \ref{thm:from-green-to-transfer-matrices} to bound the fractional moments of the Green's function by the negative fractional moments of the norm of transfer matrices. These can be estimated by Lemma \ref{thm:key-lemma}.
\end{proof}
%%%%%%%%%%%%%%%%%%%%%%%%%%%%%%%%%%%%%%%%%%%%%%%%%%%%%%%%%%%%
%%%%%%%%%%%%%%%%%%%%%%%%%%%%%%%%%%%%%%%%%%%%%%%%%%%%%%%%%%%%

%%%%%%%%%%%%%%%%%%%%%%%%%%%%%%%%%%%%%%%%%%%%%%%%%%%%%%%%%%%%
%%%%%%%%%%%%%%%%%%%%%%%%%%%%%%%%%%%%%%%%%%%%%%%%%%%%%%%%%%%%
%%%%%%%%%%%%%%%%%%%%%%%%%%%%%%%%%%%%%%%%%%%%%%%%%%%%%%%%%%%%
%%%%%%%%%%%%%%%%%%%%%%%%%%%%%%%%%%%%%%%%%%%%%%%%%%%%%%%%%%%%
%%%%%%%%%%%%%%%%%%%%%%%%%%%%%%%%%%%%%%%%%%%%%%%%%%%%%%%%%%%%
%%%%%%%%%%%%%%%%%%%%%%%%%%%%%%%%%%%%%%%%%%%%%%%%%%%%%%%%%%%%
%%%%%%%%%%%%%%%%%%%%%%%%%%%%%%%%%%%%%%%%%%%%%%%%%%%%%%%%%%%%
%%%%%%%%%%%%%%%%%%%%%%%%%%%%%%%%%%%%%%%%%%%%%%%%%%%%%%%%%%%%
%%%%%%%%%%%%%%%%%%%%%%%%%%%%%%%%%%%%%%%%%%%%%%%%%%%%%%%%%%%%
%%%%%%%%%%%%%%%%%%%%%%%%%%%%%%%%%%%%%%%%%%%%%%%%%%%%%%%%%%%%
%%%%%%%%%%%%%%%%%%%%%%%%%%%%%%%%%%%%%%%%%%%%%%%%%%%%%%%%%%%%
%%%%%%%%%%%%%%%%%%%%%%%%%%%%%%%%%%%%%%%%%%%%%%%%%%%%%%%%%%%%
%%%%%%%%%%%%%%%%%%%%%%%%%%%%%%%%%%%%%%%%%%%%%%%%%%%%%%%%%%%%
%%%%%%%%%%%%%%%%%%%%%%%%%%%%%%%%%%%%%%%%%%%%%%%%%%%%%%%%%%%%
%%%%%%%%%%%%%%%%%%%%%%%%%%%%%%%%%%%%%%%%%%%%%%%%%%%%%%%%%%%%
%%%%%%%%%%%%%%%%%%%%%%%%%%%%%%%%%%%%%%%%%%%%%%%%%%%%%%%%%%%%
%%%%%%%%%%%%%%%%%%%%%%%%%%%%%%%%%%%%%%%%%%%%%%%%%%%%%%%%%%%%
%%%%%%%%%%%%%%%%%%%%%%%%%%%%%%%%%%%%%%%%%%%%%%%%%%%%%%%%%%%%

\section{Proof of Dynamical Localization}\label{sec:DL}

%%%%%%%%%%%%%%%%%%%%%%%%%%%%%%%%%%%%%%%%%%%%%%%%%%%%%%%%%%%%
%%%%%%%%%%%%%%%%%%%%%%%%%%%%%%%%%%%%%%%%%%%%%%%%%%%%%%%%%%%%
We outline the theory developed in \cite[Section 2]{AENSS} to relate the fractional moments of the Green's function to the correlator \eqref{eq:correlator}, with some one-dimensional adaptations from \cite{HSS}. Since our potential is not ergodic, some care has to be taken to insure that the estimates remain uniform enough. We provide details when this is required. The proof of Theorem \ref{thm:DL} is given at the end of the section.

%%%%%%%%%%%%%%%%%%%%%%%%%%%%%%%%%%%%%%%%%%%%%%%%%%%%%%%%%%%%
%%%%%%%%%%%%%%%%%%%%%%%%%%%%%%%%%%%%%%%%%%%%%%%%%%%%%%%%%%%%
To simplify the notation, we denote $H_{\omega,L}=H_{\omega,[-L,L]}$ the restriction of $\opH$ to the box $[-L,L]$, and we let
$G_{\omega,L}(E)=(H_{\omega,L}-E)^{-1}$ its resolvent.
%Recall that, for $\Lambda\subset\R$, we denote by $H_{\omega,\Lambda}$ the restriction of $\opH$ to $L^2(\Lambda)$ and we let $G_{\omega,\Lambda}(E)=(H_{\omega,\Lambda}-E)^{-1}$.
For an energy interval $I$, we consider the restricted correlator to the box $[-L,L]$ defined by
\begin{eqnarray}\label{eq:correlator-box}
		Q_{\omega,L}(x,y;I)
	=\sup_{\substack{f\in \mathcal{C}_{c}\\ \norm{f}\le1}}
	\norm{ \chi_x f(H_{\omega,L}) \chi_y},
\end{eqnarray}
where, once again, we dropped the dependence on $\lambda$ to lighten the notation.
%%%%%%%%%%%%%%%%%%%%%%%%%%%%%%%%%%%%%%%%%%%%%%%%%%%%%%%%%%%%
%%%%%%%%%%%%%%%%%%%%%%%%%%%%%%%%%%%%%%%%%%%%%%%%%%%%%%%%%%%%
The following is the main result of this section.
%%%%%%%%%%%%%%%%%%%%%%%%%%%%%%%%%%%%%%%%%%%%%%%%%%%%%%%%%%%%
%%%%%%%%%%%%%%%%%%%%%%%%%%%%%%%%%%%%%%%%%%%%%%%%%%%%%%%%%%%%
\begin{theorem}\label{thm:from-green-to-correlator}
	Let $0<v<s<1$ and let $I\subset (0,\infty)$ be a compact interval. Then, there exists a constant $C=C(v,s,I)$ such that
	\begin{eqnarray}
		\esp\left[ Q_{\omega,L}(x,m;I)\right]
		\leq 
		C (\lambda a_m)^{\frac{v-1}{2-v}} \| \rho \|_{\infty}^{\frac{1}{2-v}}
		\esp\left[ \int_I dE
			\| \chi_x G_{\omega,L}(E) \chi_m\|^s
		\right]^{\frac{v}{s(2-v)}},
	\end{eqnarray}	
%	\begin{eqnarray}
%		\esp\left[ Q_{\omega,\Lambda}(x,m;I)\right]
%		\leq 
%		C (\lambda a_m)^{\frac{v-1}{2-v}} \| \rho \|_{\infty}^{\frac{1}{2-v}}
%		\esp\left[ \int_I dE
%			\| \chi_x (H^{\xi}_{\omega,m}-E)^{-1} \chi_m\|^s
%		\right]^{\frac{v}{s(2-v)}},
%	\end{eqnarray}			
	for all $m\in \Z$, $x\in\R$ and $L>0$.
\end{theorem}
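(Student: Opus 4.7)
The argument adapts the continuum fractional-moment strategy of \cite{AENSS} (with the one-dimensional refinements of \cite{HSS}) to the present non-ergodic setting, where the envelope $a_m$ has to be tracked explicitly through every estimate. Overall, one bounds the correlator $Q_{\omega,L}$ by an integral of the resolvent over $I$ and then exploits the (essentially) rank-one dependence of $G_{\omega,L}(z)$ on the single random variable $\omega_m$ to run a spectral averaging that produces the correct $\lambda a_m$ and $\|\rho\|_\infty$ weights.

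First, for any $f \in \mathcal{C}_c(I)$ with $\|f\|_\infty \le 1$, Stone's formula gives
\[
	\chi_x f(H_{\omega,L}) \chi_m = \frac{1}{\pi}\int_I f(E)\, \chi_x\, \mathrm{Im}\, G_{\omega,L}(E+i0)\, \chi_m\, dE
\]
(to be understood as a limit $\eta\to 0^+$ inside the discrete spectral decomposition of $H_{\omega,L}$), from which one reads off the preliminary pointwise inequality
\[
	Q_{\omega,L}(x,m;I) \le \frac{1}{\pi}\int_I \|\chi_x G_{\omega,L}(E+i0)\chi_m\|\, dE.
\]
Second, writing $H_{\omega,L} = H^{(m)}_{\omega,L} + \lambda a_m \omega_m u_m$, where $H^{(m)}_{\omega,L}$ is obtained by setting $\omega_m=0$, and applying Krein's formula in its operator-valued form, one factorises $\chi_x G_{\omega,L}(z)\chi_m$ as a quantity depending only on $\omega^{(m)} = (\omega_n)_{n\ne m}$ divided by a denominator of the form $1 + \lambda a_m \omega_m \tau_m(z)$, linear in $\omega_m$. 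Third, the bounded-density spectral-averaging estimate yields, for any $v \in (0,1)$,
\[
	\int \frac{\rho(\omega_m)\, d\omega_m}{|1+\lambda a_m \omega_m \tau_m(z)|^{v}} \,\leq\, \frac{C_v\,\|\rho\|_\infty}{(\lambda a_m)^{v}},
\]
uniformly in $z$ and in $\omega^{(m)}$. Combining the three ingredients one obtains the key intermediate bound
\[
	\esp\bigl[Q_{\omega,L}(x,m;I)^{2-v}\bigr] \,\leq\, C\,\|\rho\|_\infty\, (\lambda a_m)^{v-1}\, \esp\Bigl[\int_I \|\chi_x G_{\omega,L}(E+i0)\chi_m\|^{v}\, dE\Bigr],
\]
for any $v \in (0,1)$.

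The exponents in the statement then come from two elementary interpolations. Jensen's inequality (with $2-v \geq 1$) gives $\esp[Q_{\omega,L}]^{2-v} \leq \esp[Q_{\omega,L}^{2-v}]$, while Hölder's inequality in $E$, together with $v < s$, converts the $v$-moment into the $s$-moment:
\[
	\int_I \|\chi_x G_{\omega,L}(E+i0)\chi_m\|^{v}\, dE \,\leq\, |I|^{1-v/s}\,\Bigl(\int_I \|\chi_x G_{\omega,L}(E+i0)\chi_m\|^{s}\, dE\Bigr)^{v/s},
\]
and a further Jensen step for the concave map $x\mapsto x^{v/s}$ pulls this through $\esp$. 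Taking the $1/(2-v)$ power of the resulting bound produces precisely the exponents $(v-1)/(2-v)$, $1/(2-v)$ and $v/(s(2-v))$ claimed in the theorem (the $|I|$-dependent factor being absorbed into $C$).

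The main obstacle is the proof of the intermediate $v$-moment inequality: the final Hölder/Jensen bookkeeping is routine once this is in hand. One has to verify that the Krein factorisation indeed isolates $\omega_m$ cleanly even though $u_m$ is only bounded (as opposed to strictly rank-one), and that the spectral-averaging bound can be integrated in $E$ with no loss of the $(\lambda a_m)^{-v}\|\rho\|_\infty$ factor. A secondary point is the rigorous $\eta \to 0^+$ limit of the Stone formula in the presence of the embedded eigenvalue singularities of $G_{\omega,L}$; this is handled by proving all bounds at $\eta>0$ first, where everything is finite, and then passing to the limit using the uniform-in-$\eta$ control provided by the fractional-moment estimates themselves.
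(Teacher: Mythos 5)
Your sketch diverges substantially from the paper's actual proof, and the part it glosses over is precisely the crux. The preliminary bound $Q_{\omega,L}(x,m;I)\le\frac{1}{\pi}\int_I\|\chi_x G_{\omega,L}(E+i0)\chi_m\|\,dE$ does not hold: the finite-volume operator $H_{\omega,L}$ has simple real poles, so near each eigenvalue $E_n$ the norm $\|\chi_x G_{\omega,L}(E+i0)\chi_m\|$ behaves like $|E-E_n|^{-1}$ and the $E$-integral diverges. The correct Stone-formula estimate involves only $\chi_x\,\mathrm{Im}\,G_{\omega,L}(E+i0)\,\chi_m$, which is a sum of $\delta$-masses at the eigenvalues; after integration over $I$ you are back to the eigenvalue sum $\sum_{E_n\in I}\|\chi_x P_n\chi_m\|$, not an absolutely continuous integral. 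Converting that sum into an $L^1(dE)$-integral weighted by a \emph{fractional} power of the resolvent, while extracting the $(\lambda a_m)^{v-1}\|\rho\|_{\infty}$ prefactor, is exactly the job of the fractional eigenfunction correlator machinery of \cite{AENSS}, and you do not explain how Stone plus Krein replaces it.

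Concretely, your ``key intermediate bound'' $\esp[Q_{\omega,L}^{2-v}]\le C\|\rho\|_\infty(\lambda a_m)^{v-1}\,\esp[\int_I\|\chi_x G_{\omega,L}(E)\chi_m\|^v\,dE]$ is asserted, not derived; the ``three ingredients'' you list do not produce a convergent $v$-power resolvent integral from a first-power Stone representation. The paper instead proceeds through: (i) the covering bound $Q_{\omega,L}\le c\,Q_{\omega,L}(x,m;I,1)$ of Lemma~\ref{thm:from-correlator-to-1-correlator}, using $\|u_m^{1/2}\varphi_n\|\gtrsim\|\chi_m\varphi_n\|$; (ii) the interpolation inequality \eqref{eq:interpolation} among the correlators $Q(\cdot;I,v)$, $Q(\cdot;I,2)$, with the deterministic trace bound \eqref{eq:uniform-bound-Q2} controlling $Q(\cdot;I,2)$; (iii) the \cite{AENSS} change-of-variables identity \eqref{eq:identity-eigenfunction-correlator}--\eqref{eq:delta-to-spectral-shift}, which is what actually turns the eigenvalue sum into an $E$-integral against the spectral shift function $S_{\omega,m}(L,E)$ and produces the $(\lambda a_m)^{v-1}$ factor; and (iv) the uniform $L^p(dE)$ bound on $S_{\omega,m}$ (Lemma~\ref{thm:p-moments-spectral-shift}, via \cite{CHN}), which is used inside the H\"older step. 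Items (iii) and (iv) have no counterpart in your sketch and cannot be subsumed under a generic ``spectral averaging'' estimate. Your final Jensen/H\"older bookkeeping to arrive at the exponents $\tfrac{v-1}{2-v}$, $\tfrac{1}{2-v}$, $\tfrac{v}{s(2-v)}$ is fine, but only once the intermediate bound is in hand; the path you propose to reach it does not close. As a minor additional point, the spectral-averaging display you write omits the Birman--Schwinger quantity $\tau_m(z)$ in the denominator, whose uniform control in $E$ and $\omega^{(m)}$ would also need an argument.
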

%%%%%%%%%%%%%%%%%%%%%%%%%%%%%%%%%%%%%%%%%%%%%%%%%%%%%%%%%%%%
%%%%%%%%%%%%%%%%%%%%%%%%%%%%%%%%%%%%%%%%%%%%%%%%%%%%%%%%%%%%
The proof, given at the end of the section, will use several reduction steps discussed below.
%%%%%%%%%%%%%%%%%%%%%%%%%%%%%%%%%%%%%%%%%%%%%%%%%%%%%%%%%%%%
%%%%%%%%%%%%%%%%%%%%%%%%%%%%%%%%%%%%%%%%%%%%%%%%%%%%%%%%%%%%
We will need to work with the fractional eigenfunction correlator for which we introduce a family of perturbations of the finite volume operator $H_{\omega,L}$ so that
\begin{eqnarray}
	H_{\omega,L}^{m,\xi}
	=
	H_{\omega,L} + \lambda a_m( \xi - \omega_m)u_m.
\end{eqnarray}
This corresponds to setting the value of $\omega_m$ to $\xi$.
By the general theory summarized in \cite[Appendix B]{AENSS}, we know that the eigenvalues $(E_n)_n$ of these operators and their corresponding normalized eigenfunctions $(\varphi_n)_n$ can be chosen analytically in the parameter $z= \lambda a_m( \xi - \omega_m)$. We will sometimes denote $E_n=E_n(z)$ and $\varphi_n(\cdot)=\varphi_n(z)(\cdot)$ to stress this dependence. We also call $\Gamma_n$ the inverse of the function $z\mapsto E_n(z)$ which is shown to be well defined \cite{AENSS}. Note that $|\Gamma_n(E)| \leq 2 \lambda a_m M$ where $M=\max\{|\omega^-|,|\omega^+|\}$.

%%%%%%%%%%%%%%%%%%%%%%%%%%%%%%%%%%%%%%%%%%%%%%%%%%%%%%%%%%%%
%%%%%%%%%%%%%%%%%%%%%%%%%%%%%%%%%%%%%%%%%%%%%%%%%%%%%%%%%%%%
For $v\in[0,2]$, we define the $v$-fractional eigenfunction correlator as
\begin{eqnarray}\label{eq:eigenfunction-correlator}
	Q_{\omega,L}(x,m;I,v)
	=
	\sum_{n: E_n \in I}
	\bra \chi_x \varphi_n, \varphi_n \ket^{v/2}
	\bra u_m \varphi_n, \varphi_n \ket^{1-v/2},
\end{eqnarray}
where $(E_n)_n$ and $(\varphi_n)_n$ are chosen with the conventions above.
%%%%%%%%%%%%%%%%%%%%%%%%%%%%%%%%%%%%%%%%%%%%%%%%%%%%%%%%%%%%
%%%%%%%%%%%%%%%%%%%%%%%%%%%%%%%%%%%%%%%%%%%%%%%%%%%%%%%%%%%%
Note that
\begin{eqnarray}
	Q_{\omega,L}(x,m;I,0)
	&=&
	\text{Tr}  \left(  u_m P_I(H_{\omega,L}) \right),\label{v=0}
	\\
	Q_{\omega,L}(x,m;I,2)
	&=&
	\text{Tr}  \left( \chi_x P_I(H_{\omega,L}) \right).\label{v=2}
\end{eqnarray}
%%%%%%%%%%%%%%%%%%%%%%%%%%%%%%%%%%%%%%%%%%%%%%%%%%%%%%%%%%%%
%%%%%%%%%%%%%%%%%%%%%%%%%%%%%%%%%%%%%%%%%%%%%%%%%%%%%%%%%%%%
The next lemma allows us to control the correlator in \eqref{eq:correlator-box} through the fractional correlator \eqref{eq:eigenfunction-correlator} with $v=1$. We use the arguments of \cite{HSS} to by-pass a certain covering condition imposed on the single-site potential in \cite{AENSS}.
%%%%%%%%%%%%%%%%%%%%%%%%%%%%%%%%%%%%%%%%%%%%%%%%%%%%%%%%%%%%
%%%%%%%%%%%%%%%%%%%%%%%%%%%%%%%%%%%%%%%%%%%%%%%%%%%%%%%%%%%%
\begin{lemma}\label{thm:from-correlator-to-1-correlator}
	There exists a finite constant $c$ such that
	\begin{eqnarray}
		Q_{\omega,L}(x,m;I)
		\leq
		c\
		Q_{\omega,L}(x,m;I,1),
	\end{eqnarray}
	for all $x\in\R$ and $m\in\Z$.
\end{lemma}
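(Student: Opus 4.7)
The plan is to diagonalize $H_{\omega,L}$ and bound the operator norm eigenfunction by eigenfunction. Since $H_{\omega,L}$ has pure point spectrum on $L^2([-L,L])$ with real eigenvalues $(E_n)_n$ and an orthonormal eigenbasis $(\varphi_n)_n$, the spectral theorem gives $f(H_{\omega,L}) = \sum_n f(E_n)\,|\varphi_n\ket\bra \varphi_n|$. For any $f\in \mathcal{C}_c(I)$ with $\norm{f}_\infty \le 1$, the triangle inequality for the operator norm yields
\begin{equation*}
\norm{\chi_x f(H_{\omega,L})\chi_m} \le \sum_{n:\,E_n\in I}|f(E_n)|\,\norm{\chi_x\varphi_n}\norm{\chi_m\varphi_n} \le \sum_{n:\,E_n\in I}\norm{\chi_x\varphi_n}\norm{\chi_m\varphi_n}.
\end{equation*}
Since $\norm{\chi_x \varphi_n}^2 = \bra \chi_x \varphi_n,\varphi_n\ket$, matching the form of \eqref{eq:eigenfunction-correlator} at $v=1$ reduces the task to proving a uniform single-site bound $\norm{\chi_m\varphi_n}^2 \le c^2\, \bra u_m\varphi_n,\varphi_n\ket$ with a deterministic $c$.

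The main obstacle lies in this last inequality, since $u_m$ is only bounded below on the proper subinterval $J+m \subsetneq [m,m+1]$ while $\chi_m$ lives on all of $[m,m+1]$. Using \eqref{eq:single-site-potential} in the form $u_m \ge c_u \chi_{J+m}$, it suffices to establish the interior ODE estimate
\begin{equation*}
\int_m^{m+1}|\varphi_n|^2 \le C_I \int_{J+m}|\varphi_n|^2 \quad \text{for all }E_n\in I,
\end{equation*}
with a deterministic $C_I$. The key input is that $V_\omega$ is uniformly bounded (since $\omega_n$, $a_n$ and $u$ are all bounded), so the coefficient $\lambda V_\omega - E_n$ in the eigenvalue equation $-\varphi_n'' + (\lambda V_\omega - E_n)\varphi_n = 0$ is uniformly bounded in $\omega$, $n$ and $m$. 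Lemma \ref{thm:transfer-matrices-a-priori} then yields a deterministic $M$, uniform in $m$, such that $\norm{\T(s,t;E_n)} \le M$ for every $s,t\in[m,m+1]$, which gives the pointwise comparison $|\varphi_n(s)|^2 \le M^2 \bigl(|\varphi_n(t)|^2 + |\varphi_n'(t)|^2\bigr)$. Integrating $s$ over $[m,m+1]$ and averaging $t$ over a subinterval $[a,b]\subset J+m$ of positive length bounds $\int_m^{m+1}|\varphi_n|^2$ by a multiple of $\int_a^b(|\varphi_n|^2 + |\varphi_n'|^2)$.

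What remains is to absorb the derivative term into an $L^2$ norm of $\varphi_n$ alone, which I would do by a standard Caccioppoli computation: multiplying the eigenvalue equation by $\eta^2 \bar\varphi_n$ for a smooth cutoff $\eta$ supported in $J+m$ with $\eta \equiv 1$ on $[a,b]$, integrating by parts, and using Cauchy--Schwarz to absorb the cross term $2\int \eta\eta'\bar\varphi_n \varphi_n'$ into $\int \eta^2 |\varphi_n'|^2$ yields $\int_a^b|\varphi_n'|^2 \le C \int_{J+m}|\varphi_n|^2$ with $C$ depending only on $\eta$ and $\norm{\lambda V_\omega}_\infty + \sup I$. Combining these pieces and using $\int_{J+m}|\varphi_n|^2 \le c_u^{-1}\bra u_m\varphi_n,\varphi_n\ket$ delivers the single-site bound with deterministic $c$. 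Inserting this into the operator-norm estimate and taking the supremum over $f$ concludes the proof.
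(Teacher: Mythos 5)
Your proof is correct and follows the same high-level strategy as the paper: expand $f(H_{\omega,L})$ in the eigenbasis, bound the operator norm by $\sum_{n:E_n\in I}\norm{\chi_x\varphi_n}\norm{\chi_m\varphi_n}$, and then reduce everything to the single-site comparison $\norm{\chi_m\varphi_n}\le c\,\norm{u_m^{1/2}\varphi_n}$ with a constant uniform in $m$, $n$, $\omega$. Where you diverge is in how that single-site inequality is established. The paper invokes its two appendix lemmas directly: Lemma \ref{thm:general-bound-2} (a quantitative unique continuation estimate from \cite{HSS}) gives $\norm{\chi_{J+m}\varphi_n}^2\ge c_1\bigl(|\varphi_n(m)|^2+|\varphi_n'(m)|^2\bigr)$, and Lemma \ref{thm:general-bound-1} (a Gronwall-type bound) gives $\norm{\chi_m\varphi_n}^2\le c\bigl(|\varphi_n(m)|^2+|\varphi_n'(m)|^2\bigr)$; combining these with \eqref{eq:single-site-potential} finishes in three lines. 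You instead re-derive the content of Lemma \ref{thm:general-bound-2} from scratch: the transfer-matrix a priori bound of Lemma \ref{thm:transfer-matrices-a-priori} (which the paper itself proves from Lemma \ref{thm:general-bound-1}) reduces matters to controlling $\int_a^b(|\varphi_n|^2+|\varphi_n'|^2)$, and you then absorb the derivative term via a Caccioppoli estimate using the eigenvalue equation, the uniform boundedness of $V_\omega$, and the fact that $J$ has nonempty interior so a cutoff fits strictly inside $J+m$. Both routes are sound; the paper's is more compact because it reuses tools already in its toolbox, while yours is more self-contained and makes the role of the uniform potential bound and of the nontriviality of $J$ explicit, at the cost of essentially re-proving an appendix lemma.
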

%%%%%%%%%%%%%%%%%%%%%%%%%%%%%%%%%%%%%%%%%%%%%%%%%%%%%%%%%%%%
%%%%%%%%%%%%%%%%%%%%%%%%%%%%%%%%%%%%%%%%%%%%%%%%%%%%%%%%%%%%
\begin{proof}
	Note that
	\begin{eqnarray*}
		Q_{\omega,L}(x,m;I)
		&\leq&
		\sum_{n: E_n \in I} \| \chi_x P_{\varphi_n} \chi_m \|
		=
		\sum_{n: E_n \in I} \| \chi_x \varphi_n\| \, \| \varphi_n \chi_m \|,
	\end{eqnarray*}
	where $P_{\varphi_n}:=|\varphi_n\ket\bra\varphi_n|$ denotes the projector on the subspace spanned by $\varphi_n$.
	It follows from the hypothesis on the single-site potential together with Lemmas \ref{thm:general-bound-1} and \ref{thm:general-bound-2} that
	\begin{eqnarray*}
		\| u_m^{1/2} \varphi_n \|^2 
		\geq 
		c_u^2 \| \chi_{J+m} \varphi_n \|^2
		\geq
		c_1 \left( |\varphi_n(m)|^2+|\varphi'_n(m)|^2\right)
		\geq
		c_2
		\| \chi_m \varphi_n \|^2,
	\end{eqnarray*}
	for some constants $c_1$ and $c_2$ which are bounded away from $0$, uniformly in $m$.
	Hence,
	\begin{eqnarray*}
		Q_{\omega,L}(x,m;I)
		&\leq&
		c
		\sum_{n: E_n \in I} \| \chi_x \varphi_n\| \, \| u_m^{1/2} \varphi_n \|
		=
		c\
		Q_{\omega,L}(x,m;I,1),
	\end{eqnarray*}
	for some finite $c$ and for all $m$ and $x$.
\end{proof}
%%%%%%%%%%%%%%%%%%%%%%%%%%%%%%%%%%%%%%%%%%%%%%%%%%%%%%%%%%%%
%%%%%%%%%%%%%%%%%%%%%%%%%%%%%%%%%%%%%%%%%%%%%%%%%%%%%%%%%%%%
%With some considerations on the support of $U$ and estimates on solutions, we can prove that $\| \chi_m \varphi_n \| \leq C \| U_m^{1/2} \varphi_n \|$. Hence,
%\begin{eqnarray}
%	Q_{\omega,\Lambda}(x,m;I)
%	&\leq&
%	\sum_{n: E_n \in I} \| \chi_x P_{\varphi_n} \chi_m \|
%	=
%	\sum_{n: E_n \in I} \| \chi_x \varphi_n\| \, \| \varphi_n \chi_m \|
%	\\
%	&\leq&
%	C
%	\sum_{n: E_n \in I} \| \chi_x \varphi_n\| \, \| U_m^{1/2} \varphi_n \|
%	\\
%	&=&
%	C
%	Q_{\omega,\Lambda}(x,m;I,1).
%\end{eqnarray}
%%%%%%%%%%%%%%%%%%%%%%%%%%%%%%%%%%%%%%%%%%%%%%%%%%%%%%%%%%%%
%%%%%%%%%%%%%%%%%%%%%%%%%%%%%%%%%%%%%%%%%%%%%%%%%%%%%%%%%%%%
By the interpolation bound \cite[Lemma 2.1]{AENSS}, we have
\begin{eqnarray}\label{eq:interpolation}
	\esp \left[
		Q_{\omega,L}(x,m;I,1)
	\right]
	\leq
	\esp \left[
		Q_{\omega,L}(x,m;I,v)
	\right]^{1/(2-v)}
	\esp \left[
		Q_{\omega,L}(x,m;I,2)
	\right]^{(1-v)/(2-v)}.
\end{eqnarray}
%From \cite{AENSS}, equation (2.8), we also know that $Q_{\omega,\Lambda}(x,m;I,2)$ is bounded for each bounded $I\subset \R$, uniformly in $m$ and $\omega$. \gnote{MORE DETAILS?}
%
We notice that the correlator $Q_{\omega,L}(x,m;I,2)$ in \eqref{v=2} can be bounded deterministically. Indeed, if $I \subset (-\infty,E]$ then we have
\begin{eqnarray}
	Q_{\omega,L}(x,m;I,2)
	&\leq&
	\text{Tr} \, \left( \chi_x P_{(-\infty,E]}(H_{\omega,L}) \right)
	\\
	\label{eq:uniform-bound-Q2}
	&\leq&
	(|E|+B)^p \, \text{Tr} \left( \chi_x (H_{\omega,L}+B)^{-p} \right),
\end{eqnarray}
which is finite for $B$ and $p$ large enough, uniformly in $x$.
%%%%%%%%%%%%%%%%%%%%%%%%%%%%%%%%%%%%%%%%%%%%%%%%%%%%%%%%%%%%
%%%%%%%%%%%%%%%%%%%%%%%%%%%%%%%%%%%%%%%%%%%%%%%%%%%%%%%%%%%%

%At this point, we need to clarify the resampling. I think we have to take
%\begin{eqnarray}
%	H_{\omega,m}^{\xi}
%	=
%	H_{\omega} + \lambda a_m( \xi - \omega_m)U_m.
%\end{eqnarray}
%We can then define the eigenvalues and eigenfunctions analytically as $E_n=E_n(\xi)$ and $\psi_n(\cdot)=\phi_n(\xi)(\cdot)$, and let $\Gamma_n(E)$  be the inverse function of $E_n$. Then,
%%%%%%%%%%%%%%%%%%%%%%%%%%%%%%%%%%%%%%%%%%%%%%%%%%%%%%%%%%%%
%%%%%%%%%%%%%%%%%%%%%%%%%%%%%%%%%%%%%%%%%%%%%%%%%%%%%%%%%%%%
The following is the key identity to relate the correlator to the fractional moments of the resolvent. 
%%%%%%%%%%%%%%%%%%%%%%%%%%%%%%%%%%%%%%%%%%%%%%%%%%%%%%%%%%%%
%%%%%%%%%%%%%%%%%%%%%%%%%%%%%%%%%%%%%%%%%%%%%%%%%%%%%%%%%%%%
\begin{theorem}[\cite{AENSS}, Thm 2.1]
		If $\xi \neq \omega_m$ and $\sigma(H_{\omega,L}) \cap \sigma(H^{m,\xi}_{\omega,L}) \cap I = \emptyset$, then
		\begin{eqnarray}
			Q_{\omega,L}(x,m;I,v)
			&=&
			\sum_n \int_I dE \,
			\delta(\Gamma_n(E)+\lambda a_m(\xi-\omega_m)) |\Gamma_n(E)|^v
			\\
			\label{eq:identity-eigenfunction-correlator}
			&&
			\times 
			\| \chi_x (H^{m,\xi}_{\omega,L}-E)^{-1} u_m^{1/2}\psi_n(E) \|^v \, \| \psi_n(E) \|^{-v},
		\end{eqnarray}
		where
		\begin{eqnarray}
			\psi_n(E)(\cdot) = u_m^{1/2} \varphi_n(\Gamma_n(E))(\cdot).
		\end{eqnarray}
		Furthermore, for any $E$ and $a<b$ such that $E$ is not an eigenvalue of $H^{m,a}_{\omega,L}$ or $H^{m,b}_{\omega,L}$,
		\begin{eqnarray}
		\label{eq:delta-to-spectral-shift}
			\int^b_a d\omega_m \sum_n \delta(\Gamma_n(E)+\lambda a_m(\xi-\omega_m))
			=
			\lambda^{-1}a_m^{-1}
			\left(
				\text{Tr}P_E(H^{m,a}_{\omega,L})-\text{Tr}P_E(H^{m,b}_{\omega,L})
			\right),
		\end{eqnarray}
		where $P_E$ is the projection on $(-\infty,E]$.
\end{theorem}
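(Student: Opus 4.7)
The plan is to follow \cite[Theorem 2.1]{AENSS}, adapted to our non-ergodic setting. The core observation is that $u_m \geq 0$, so by the Hellmann--Feynman theorem $E_n'(z) = \langle \varphi_n(z), u_m \varphi_n(z)\rangle = \|\psi_n(E_n(z))\|^2 > 0$. Consequently $\Gamma_n$ is a strictly increasing real-analytic diffeomorphism on its image, with $\Gamma_n'(E) = \|\psi_n(E)\|^{-2}$. This identity, combined with the distributional change of variables
\begin{eqnarray*}
\sum_{n : E_n(z_*) \in I} F(E_n(z_*)) = \sum_n \int_I F(E)\, \|\psi_n(E)\|^2\, \delta(\Gamma_n(E)-z_*)\, dE,
\end{eqnarray*}
converts the sum defining $Q_{\omega,L}(x,m;I,v)$ into the integral form of \eqref{eq:identity-eigenfunction-correlator}, up to the replacement of the eigenfunction $\chi_x \varphi_n$ by a resolvent expression.

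That replacement is the heart of the argument and relies on rank-one perturbation theory. If $\varphi = \varphi_n(z)$ is an eigenfunction of $A(z) = H_{\omega,L} + z u_m$ at energy $E$ (so $z = \Gamma_n(E)$), then a direct manipulation of $(H^{m,\xi}_{\omega,L} - E)\varphi$ via the resolvent identity yields, on the support of the Dirac mass in \eqref{eq:identity-eigenfunction-correlator},
\begin{eqnarray*}
\varphi = c(\xi,E,n)\, (H^{m,\xi}_{\omega,L} - E)^{-1} u_m^{1/2}\, \psi_n(E),
\end{eqnarray*}
where $|c(\xi,E,n)|$ is proportional to $|\Gamma_n(E)|/\|\psi_n(E)\|^2$. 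Raising $\|\chi_x \varphi\|$ to the power $v$ and multiplying by the Jacobian factor $\|\psi_n(E)\|^2$ produces precisely the weights $|\Gamma_n(E)|^v$ and $\|\psi_n(E)\|^{-v}$ appearing in the statement. The transversality hypothesis $\sigma(H_{\omega,L}) \cap \sigma(H^{m,\xi}_{\omega,L}) \cap I = \emptyset$ is precisely what ensures the resolvent in the display above is well defined.

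For \eqref{eq:delta-to-spectral-shift}, the key point is the monotonicity $E_n'(z)>0$: as $\omega_m$ ranges over $[a,b]$ with $\xi, E$ fixed, each eigenvalue curve of the rank-one family crosses the value $E$ at most once, and the Dirac mass $\delta(\Gamma_n(E)+\lambda a_m(\xi-\omega_m))$ fires exactly at those crossings. Integrating in $\omega_m$ and summing in $n$ counts these crossings, weighted by the Jacobian $(\lambda a_m)^{-1}$ coming from the affine change of variables inside the delta. By monotonicity, the crossing count equals the net change in the number of eigenvalues lying in $(-\infty,E]$ as $\omega_m$ moves from $a$ to $b$, which is $\mathrm{Tr}\, P_E(H^{m,a}_{\omega,L}) - \mathrm{Tr}\, P_E(H^{m,b}_{\omega,L})$.

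The principal technical obstacle is the rigorous bookkeeping of the rank-one identity and the Dirac masses at the resonant points where $E$ coincides with an eigenvalue of $H^{m,\xi}_{\omega,L}$. Concretely, one would approximate the Dirac mass by a mollifier and pass to the limit, using simplicity of the eigenvalues of $A(z)$ and analyticity of $z \mapsto \varphi_n(z)$ to extract the correct normalization $\|\psi_n(E)\|^{-v}$ from the Krein--Birman--Schwinger formula. The non-ergodicity of our model introduces no additional difficulty, as the entire statement is deterministic at fixed $\omega$ and finite $L$.
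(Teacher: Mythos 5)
The paper does not supply its own proof of this statement: it is quoted verbatim from \cite{AENSS} (Theorem 2.1), with the only paper-specific remark being that the genericity hypotheses are almost surely satisfied since the single-site distribution is absolutely continuous. Your reconstruction faithfully captures the structure of the AENSS argument --- Hellmann--Feynman monotonicity of $z\mapsto E_n(z)$, a distributional change of variables, a rank-one resolvent identity, and eigenvalue counting for the spectral shift --- so conceptually you are on the right track.

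That said, two bookkeeping errors would prevent your intermediate formulas from reproducing \eqref{eq:identity-eigenfunction-correlator}. First, the Jacobian in your change-of-variables display has the wrong sign in the exponent: since $E_n'(z)=\|\psi_n(E_n(z))\|^2$, one has $\Gamma_n'(E)=\|\psi_n(E)\|^{-2}$, so the correct identity is
\begin{equation*}
\sum_{n:\,E_n(z_*)\in I} F(E_n(z_*)) \;=\; \sum_n \int_I F(E)\,\|\psi_n(E)\|^{-2}\,\delta\bigl(\Gamma_n(E)-z_*\bigr)\,dE ,
\end{equation*}
with $\|\psi_n(E)\|^{-2}$ rather than $\|\psi_n(E)\|^{2}$. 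Only with this sign does the algebra $\|\psi_n(E)\|^{2-v}\cdot\|\psi_n(E)\|^{-2}=\|\psi_n(E)\|^{-v}$ deliver the weight appearing in the theorem. Second, the rank-one coefficient is simply $|c|=|\Gamma_n(E)|$: writing $B(z)=H^{m,\xi}_{\omega,L}+z u_m$ so that $H_{\omega,L}=B(z_*)$ with $z_*=\lambda a_m(\omega_m-\xi)$, the eigenvalue equation gives $\bigl(H^{m,\xi}_{\omega,L}-E\bigr)\varphi_n(z_*)=-z_*\,u_m\varphi_n(z_*)$, hence $\varphi_n(z_*)=-z_*\bigl(H^{m,\xi}_{\omega,L}-E\bigr)^{-1}u_m^{1/2}\psi_n(E)$ and $|c|=|z_*|=|\Gamma_n(E)|$ on the support of the delta, without any extra factor $\|\psi_n(E)\|^{-2}$. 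Your argument for \eqref{eq:delta-to-spectral-shift} via monotone eigenvalue-crossing counting is correct.
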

%%%%%%%%%%%%%%%%%%%%%%%%%%%%%%%%%%%%%%%%%%%%%%%%%%%%%%%%%%%%
%%%%%%%%%%%%%%%%%%%%%%%%%%%%%%%%%%%%%%%%%%%%%%%%%%%%%%%%%%%%
Note that, as we consider absolutely continuous environments, the hypothesis of the theorem are almost surely satisfied for each choice of the parameters (see for instance \cite[Lemma B.2]{AENSS}).
%%%%%%%%%%%%%%%%%%%%%%%%%%%%%%%%%%%%%%%%%%%%%%%%%%%%%%%%%%%%
%%%%%%%%%%%%%%%%%%%%%%%%%%%%%%%%%%%%%%%%%%%%%%%%%%%%%%%%%%%%
If we take $a=\omega^-$ and $b=\omega^+$ in the right-hand-side of \eqref{eq:delta-to-spectral-shift}, we recover the spectral shift
\begin{eqnarray}
	S_{\omega,m}(L,E) = \text{Tr}P_E(H^{m,\omega^+}_{\omega,L})-\text{Tr}P_E(H^{m,\omega^-}_{\omega,L}).
\end{eqnarray}
By a combination \cite[Theorem 2.1 and Proposition 5.1]{CHN}, we know that $S_{\omega,m}(L,E)$ has finite moments of order $p\geq 1$ with respect to the Lebesgue measure, uniformly in $\omega$ and $m$. The uniformity in $m$ can be seen from  \cite[Formula (5.6)]{CHN} where the dependence on the single-site potential is made explicit. We state this as a lemma.
%%%%%%%%%%%%%%%%%%%%%%%%%%%%%%%%%%%%%%%%%%%%%%%%%%%%%%%%%%%%
%%%%%%%%%%%%%%%%%%%%%%%%%%%%%%%%%%%%%%%%%%%%%%%%%%%%%%%%%%%%
\begin{lemma}\label{thm:p-moments-spectral-shift}
	For all $p\geq 1$ and every compact interval $I \subset (0,\infty)$,
	there exists a constant $C=C(p,I)$ such that
	\begin{eqnarray*}
		\int_I dE \left| S_{\omega,m}(L,E) \right|^p \leq C,
	\end{eqnarray*}
	for all $L>0$, $m\in\Z$ and $\omega\in\Omega$.
\end{lemma}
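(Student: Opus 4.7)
The plan is to reduce the statement to a direct application of the $L^p$ bounds for the Krein spectral shift function established in \cite[Theorem~2.1 and Proposition~5.1]{CHN}, as the authors themselves indicate; the work therefore consists in verifying the applicability of those results in our non-ergodic setting and, crucially, in checking the uniformity of the resulting constant in the three parameters $L$, $m$, and $\omega$.

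First, I would identify $S_{\omega,m}(L,E)$ as (up to a sign) the Krein spectral shift function for the pair
\begin{equation*}
\bigl(H^{m,\omega^-}_{\omega,L},\, H^{m,\omega^+}_{\omega,L}\bigr),
\end{equation*}
whose difference is the non-negative, compactly supported multiplication operator $\lambda a_m(\omega^+-\omega^-)u_m$. The quantity $S_{\omega,m}(L,E)$ coincides with the difference of the eigenvalue counting functions of these two operators on $(-\infty,E]$, which matches the standard formulation used in \cite{CHN}.

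Second, I would invoke \cite[Theorem~2.1]{CHN} to obtain $L^p$-integrated bounds for the spectral shift function of a compactly supported multiplicative perturbation of a Schrödinger operator on a box, valid on any compact subinterval $I\subset(0,\infty)$ staying away from the bottom of the spectrum. The bound produced there depends only on $p$, $I$, the size of $\supp u_m$, the $L^\infty$-norm of the perturbation, and, implicitly, on a uniform $L^\infty$-bound on the background potential. In our setting the background $\lambda\sum_{n\ne m}a_n\omega_n u_n$ is bounded in $L^\infty$ uniformly in $\omega$ (since the $\omega_n$ are bounded and the $u_n$ have disjoint supports), and the perturbation has $\bigl\|\lambda a_m(\omega^+-\omega^-)u_m\bigr\|_\infty\le \lambda a_m(\omega^+-\omega^-)\|u\|_\infty$ which, because $a_m\to 0$, is bounded uniformly in $m$ and $\omega$.

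Third, the key point is uniformity in the base site $m$ and in the volume $L$. Uniformity in $m$ follows from translation invariance of $-\Delta$ and from the fact that $\|u_m\|_\infty=\|u\|_\infty$ and $|\supp u_m|=1$ are independent of $m$; as the authors note, formula (5.6) in \cite{CHN} displays the explicit dependence of the constant on the single-site potential, so this uniformity can be read off directly. Uniformity in $L$ is already built into the CHN argument, which is carried out for finite box restrictions. I expect the main (indeed only) obstacle to be the careful bookkeeping of these dependencies — in particular ensuring that neither the background potential term nor the $L\to\infty$ limit degrade the constant — but no new analytic ingredient beyond \cite{CHN} should be required.
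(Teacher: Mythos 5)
Your proposal is correct and takes essentially the same approach as the paper: both reduce the claim to \cite[Theorem~2.1 and Proposition~5.1]{CHN} for the spectral shift of the pair $H^{m,\omega^\pm}_{\omega,L}$ and read off uniformity in $m$ from the explicit dependence on the single-site potential in formula~(5.6) of \cite{CHN}. The paper's own treatment is even terser — a one-sentence citation immediately preceding the lemma — and your elaboration of the uniform $L^\infty$ bounds on the background and on the perturbation $\lambda a_m(\omega^+-\omega^-)u_m$ is accurate (the aside about "translation invariance of $-\Delta$" is not really the operative point, since the background is not translation invariant; it is the uniform $L^\infty$ bound and formula~(5.6) that carry the argument, as you correctly say).
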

%%%%%%%%%%%%%%%%%%%%%%%%%%%%%%%%%%%%%%%%%%%%%%%%%%%%%%%%%%%%
%%%%%%%%%%%%%%%%%%%%%%%%%%%%%%%%%%%%%%%%%%%%%%%%%%%%%%%%%%%%
The next lemma brings the relation between the fractional eigenfunction correlators and the fractional moments of the Green's function. This finishes the proof of Theorem \ref{thm:from-green-to-correlator}.
%%%%%%%%%%%%%%%%%%%%%%%%%%%%%%%%%%%%%%%%%%%%%%%%%%%%%%%%%%%%
%%%%%%%%%%%%%%%%%%%%%%%%%%%%%%%%%%%%%%%%%%%%%%%%%%%%%%%%%%%%
\begin{lemma}\label{thm:from-green-to-eigenfunction-correlator}
	For all $v\in(0,1)$, $s\in(v,1)$ and all compact interval $I\subset (0,\infty)$, there exists a constant $C=C(v,s,I)$ such that
%	\begin{eqnarray}
%		\esp_m\left[
%			Q_{\omega,\Lambda}(x,m;I,v)
%		\right]
%		&\leq&
%		C (\lambda a_m)^{v-1}\| \rho \|_{\infty} \,
%		\esp\left[ \int_I dE
%			\| \chi_x (H_{\omega,\Lambda}-E)^{-1} \chi_m\|^s
%		\right]^{v/s},
%	\end{eqnarray}
\begin{eqnarray*}
		\esp\left[
			Q_{\omega,L}(x,m;I,v)
		\right]
		&\leq&
		C (\lambda a_m)^{v-1}\| \rho \|_{\infty} \,
		\esp\left[ \int_I dE
			\| \chi_x G_{\omega,L}(E) \chi_m\|^s
		\right]^{v/s},
	\end{eqnarray*}
	for all $x\in\R$, $m\in\Z$ and $L>0$.
\end{lemma}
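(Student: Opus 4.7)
The plan is to unravel the AENSS identity \eqref{eq:identity-eigenfunction-correlator} by carefully performing the integrations and appealing to the uniform spectral shift bound of Lemma \ref{thm:p-moments-spectral-shift}. As a first step, since $u_m$ is supported on $[m,m+1]$, the vector $u_m^{1/2}\psi_n(E)$ is supported there, so
\begin{equation*}
\|\chi_x (H^{m,\xi}_{\omega,L}-E)^{-1} u_m^{1/2}\psi_n(E)\| \leq \|\chi_x (H^{m,\xi}_{\omega,L}-E)^{-1}\chi_m\|\, \|u_m^{1/2}\|_\infty\, \|\psi_n(E)\|.
\end{equation*}
Combined with $\|u_m^{1/2}\|_\infty\leq C_u^{1/2}$ and the a priori bound $|\Gamma_n(E)|\leq 2\lambda a_m M$, the $n$-dependent factors in \eqref{eq:identity-eigenfunction-correlator} collapse, yielding
\begin{equation*}
Q_{\omega,L}(x,m;I,v) \leq C_u^{v/2}(2\lambda a_m M)^v \sum_n \int_I dE\, \delta\bigl(\Gamma_n(E)+\lambda a_m(\xi-\omega_m)\bigr)\, \|\chi_x (H^{m,\xi}_{\omega,L}-E)^{-1}\chi_m\|^v,
\end{equation*}
where the remaining norm no longer depends on $n$.

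Since the left-hand side is independent of $\xi$, I would multiply by $\rho(\xi)$, integrate $\xi\in[\omega^-,\omega^+]$, and then take $\esp$. Bounding the density of $\omega_m$ by $\|\rho\|_\infty$ replaces the expectation in $\omega_m$ by $\|\rho\|_\infty$ times $\int d\omega_m$. The crucial observation is that $H^{m,\xi}_{\omega,L}$ does not depend on $\omega_m$, so the $\omega_m$-integral acts only on the sum of delta functions. Applying \eqref{eq:delta-to-spectral-shift} with $a=\omega^-$ and $b=\omega^+$ produces the factor $(\lambda a_m)^{-1}|S_{\omega,m}(L,E)|$. The remaining average against $\rho(\xi)\,d\xi$, combined with the expectation in the $\omega_j$ for $j\neq m$, reassembles the full law of $\omega$ (with $\xi$ now playing the role of $\omega_m$ in $H^{m,\xi}_{\omega,L}$), so the perturbed operator becomes $H_{\omega,L}$ itself. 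This gives
\begin{equation*}
\esp[Q_{\omega,L}(x,m;I,v)] \leq C_1 (\lambda a_m)^{v-1}\|\rho\|_\infty\, \esp\!\left[\int_I \|\chi_x G_{\omega,L}(E)\chi_m\|^v\, |S_{\omega,m}(L,E)|\, dE\right],
\end{equation*}
for some constant $C_1=C_1(v)$ absorbing $C_u^{v/2}(2M)^v$.

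To conclude, I would apply H\"older's inequality on the product space $(\Omega,\p)\times (I,dE)$ with conjugate exponents $p=s/v>1$ and $q=s/(s-v)$, which is admissible because $v<s<1$:
\begin{equation*}
\esp\!\left[\int_I \|\chi_x G\chi_m\|^v\, |S|\, dE\right] \leq \esp\!\left[\int_I \|\chi_x G\chi_m\|^s\, dE\right]^{v/s} \esp\!\left[\int_I |S|^{s/(s-v)}\, dE\right]^{(s-v)/s}.
\end{equation*}
Lemma \ref{thm:p-moments-spectral-shift} bounds the second factor by a finite constant depending only on $s$, $v$, $I$, uniformly in $\omega$, $L$ and (crucially) $m$. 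Inserting this bound recovers the claimed inequality.

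The delicate point is the chaining of integrations: the spectral shift identity can only be invoked because $H^{m,\xi}_{\omega,L}$ is genuinely independent of $\omega_m$, and the $\xi$-average reconstructs the original distribution of $\omega_m$, which is why the resolvent on the right-hand side corresponds to the original operator $H_{\omega,L}$. The uniformity in $m$ of the spectral shift moments is essential; without it the prefactor would inherit an extra $m$-dependence and the exponential decay needed in the later dynamical bounds would be destroyed.
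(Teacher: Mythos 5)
Your proposal is correct and follows essentially the same route as the paper: average the AENSS identity over $\omega_m$ to produce the spectral shift via \eqref{eq:delta-to-spectral-shift}, bound $|\Gamma_n(E)|\leq 2\lambda a_m M$ and the $\omega_m$-density by $\|\rho\|_\infty$, integrate against $\rho(\xi)\,d\xi$ and the remaining $\omega_j$'s to reconstruct $G_{\omega,L}$, then apply H\"older on $\p\times dE$ with exponents $s/v$ and $s/(s-v)$ followed by Lemma \ref{thm:p-moments-spectral-shift}. The only superficial difference is that you replace $u_m^{1/2}$ by $C_u^{1/2}\chi_m$ at the outset rather than at the final step, which is harmless.
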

%%%%%%%%%%%%%%%%%%%%%%%%%%%%%%%%%%%%%%%%%%%%%%%%%%%%%%%%%%%%
%%%%%%%%%%%%%%%%%%%%%%%%%%%%%%%%%%%%%%%%%%%%%%%%%%%%%%%%%%%%
\begin{proof}
	Let us denote by $\esp_m$ the expected value with respect to $\omega_m$, which corresponds to integration against $\rho(\omega_m)d\omega_m$ over the interval $[\omega^-,\omega^+]$. Remember that $H^{m,\xi}_{\omega,L}$ is independent of $\omega_m$. Averaging \eqref{eq:identity-eigenfunction-correlator} with respect to $\esp_m$, recalling that $|\Gamma_n(E)| \leq 2 \lambda a_m M$ with $M=\max\{|\omega^-|,|\omega^+|\}$ and using \eqref{eq:delta-to-spectral-shift},
	\begin{eqnarray}\label{representation 1}
		\esp_m\left[
			Q_{\omega,L}(x,m;I,v)
		\right]
		&\leq&
		(2 \lambda a_m M)^v \int_I dE \, 
		\left(
		\int d\omega_m \rho(\omega_m)
			\sum_n \delta(\Gamma_n(E)+\lambda a_m(\xi-\omega_m))
		\right)\notag
		\\
		&&
		\phantom{blablablablablabla}
		\times
		\| \chi_x (H^{m,\xi}_{\omega,L}-E)^{-1} u_m^{1/2}\|^v
		\\
		&\leq& 
		2^v M^v(\lambda a_m)^{v-1} \| \rho \|_{\infty}\int_I dE \, 
		S_m(L,E)
		\| \chi_x (H^{m,\xi}_{\omega,L}-E)^{-1} u_m^{1/2}\|^v.\notag
	\end{eqnarray}
	Integrating the inequality \eqref{representation 1} with respect to $\rho(\xi)d\xi$ and then with respect to $\{\omega_n:\, n\neq m\}$, we obtain
	\begin{eqnarray}\label{representation 2}
		%\nonumber
		\esp\left[
			Q_{\omega,L}(x,m;I,v)
		\right]
		&\leq&
		2^v M^v(\lambda a_m)^{v-1} \| \rho \|_{\infty}
		\esp\left[
		\int_I dE \, 
		S_m(L,E)
		\| \chi_x G_{\omega,L}(E) u_m^{1/2}\|^v
		\right].
	\end{eqnarray}
	Next, we apply H\"older's inequality to \eqref{representation 2} with respect to $\p \times dE$ to get
	\begin{eqnarray*}
		\esp\left[
			Q_{\omega,L}(x,m;I,v)
		\right]
		&\leq&
		2^v M^v(\lambda a_m)^{v-1}\| \rho \|_{\infty} \,
		\esp\left[
			\int_I dE S_m(L,E)^{s/(s-v)}
		\right]^{(s-v)/s}
		\\
		&&
		\phantom{blablablabla}
		\times
		\esp\left[ \int_I dE
			\| \chi_x G_{\omega,L}(E) u_m^{1/2}\|^s
		\right]^{v/s}
		\\
		&&
		\leq
		C (\lambda a_m)^{v-1}\| \rho \|_{\infty} \,
		\esp\left[ \int_I dE
			\| \chi_x G_{\omega,L}(E) \chi_m\|^s
		\right]^{v/s},
	\end{eqnarray*}
	where we used Lemma \ref{thm:p-moments-spectral-shift} in the last step.
\end{proof}
%%%%%%%%%%%%%%%%%%%%%%%%%%%%%%%%%%%%%%%%%%%%%%%%%%%%%%%%%%%%
%%%%%%%%%%%%%%%%%%%%%%%%%%%%%%%%%%%%%%%%%%%%%%%%%%%%%%%%%%%%
We finish the proof of Theorem \ref{thm:from-green-to-correlator}:
%%%%%%%%%%%%%%%%%%%%%%%%%%%%%%%%%%%%%%%%%%%%%%%%%%%%%%%%%%%%
%%%%%%%%%%%%%%%%%%%%%%%%%%%%%%%%%%%%%%%%%%%%%%%%%%%%%%%%%%%%
\begin{proof}[Proof of Theorem \ref{thm:from-green-to-correlator}]
	The result follows from Lemma \ref{thm:from-correlator-to-1-correlator} together with the interpolation bound \eqref{eq:interpolation} and the uniform bound \eqref{eq:uniform-bound-Q2} on $Q_{\omega,L}(x,m;I,2)$, and Lemma \ref{thm:from-green-to-eigenfunction-correlator}.
\end{proof}
%%%%%%%%%%%%%%%%%%%%%%%%%%%%%%%%%%%%%%%%%%%%%%%%%%%%%%%%%%%%
%%%%%%%%%%%%%%%%%%%%%%%%%%%%%%%%%%%%%%%%%%%%%%%%%%%%%%%%%%%%
Finally, we give the proof of our main result Theorem \ref{thm:DL}.
%%%%%%%%%%%%%%%%%%%%%%%%%%%%%%%%%%%%%%%%%%%%%%%%%%%%%%%%%%%%
%%%%%%%%%%%%%%%%%%%%%%%%%%%%%%%%%%%%%%%%%%%%%%%%%%%%%%%%%%%%
\begin{proof}[Proof of Theorem \ref{thm:DL}]
	Since $H_{\omega,L}$ converges to $\opH$ in the strong resolvent sense as $L\to\infty$, we have
	\begin{eqnarray*}
		Q_{\omega,\lambda}(x,y;I)
		\leq
		\liminf_L
		Q_{\omega,L}(x,y;I).
	\end{eqnarray*}
	By Fatou's lemma and using that $Q_{\omega,\lambda}(m,n;I)\leq 1$, we have
	\begin{eqnarray*}
		\esp\left[ 
			Q_{\omega,\lambda}(x,y;I)^2
		\right]
		\leq
		\liminf_L
		\esp\left[
			Q_{\omega,L}(x,y;I)
		\right].
	\end{eqnarray*}
	The result follows from Theorem \ref{thm:from-green-to-correlator} and the uniform bound of Theorem \ref{thm:FM}.
\end{proof}
%%%%%%%%%%%%%%%%%%%%%%%%%%%%%%%%%%%%%%%%%%%%%%%%%%%%%%%%%%%%
%%%%%%%%%%%%%%%%%%%%%%%%%%%%%%%%%%%%%%%%%%%%%%%%%%%%%%%%%%%%

%%%%%%%%%%%%%%%%%%%%%%%%%%%%%%%%%%%%%%%%%%%%%%%%%%%%%%%%%%%%
%%%%%%%%%%%%%%%%%%%%%%%%%%%%%%%%%%%%%%%%%%%%%%%%%%%%%%%%%%%%
%%%%%%%%%%%%%%%%%%%%%%%%%%%%%%%%%%%%%%%%%%%%%%%%%%%%%%%%%%%%
%%%%%%%%%%%%%%%%%%%%%%%%%%%%%%%%%%%%%%%%%%%%%%%%%%%%%%%%%%%%
%%%%%%%%%%%%%%%%%%%%%%%%%%%%%%%%%%%%%%%%%%%%%%%%%%%%%%%%%%%%
%%%%%%%%%%%%%%%%%%%%%%%%%%%%%%%%%%%%%%%%%%%%%%%%%%%%%%%%%%%%
%%%%%%%%%%%%%%%%%%%%%%%%%%%%%%%%%%%%%%%%%%%%%%%%%%%%%%%%%%%%
%%%%%%%%%%%%%%%%%%%%%%%%%%%%%%%%%%%%%%%%%%%%%%%%%%%%%%%%%%%%

\section{Proof of Theorem \ref{thm:decay-eigenfunctions} and \ref{thm:lower-bound-DL}}\label{sec:consequences}

%%%%%%%%%%%%%%%%%%%%%%%%%%%%%%%%%%%%%%%%%%%%%%%%%%%%%%%%%%%%
%%%%%%%%%%%%%%%%%%%%%%%%%%%%%%%%%%%%%%%%%%%%%%%%%%%%%%%%%%%%

%%%%%%%%%%%%%%%%%%%%%%%%%%%%%%%%%%%%%%%%%%%%%%%%%%%%%%%%%%%%
%%%%%%%%%%%%%%%%%%%%%%%%%%%%%%%%%%%%%%%%%%%%%%%%%%%%%%%%%%%%
\begin{proof}[Proof of Theorem \ref{thm:decay-eigenfunctions}]
	We will establish the lower bound for $\sqrt{|\vp_{\omega,E}(x)|^2 + |\vp'_{\omega,E}(x)|^2}$ since the bound for $\| \chi_x \vp_{\omega,E} \|$ will then follow from Lemma \ref{thm:general-bound-2}.
	Recall that we can reconstruct 
	$\Psi_{\omega,E}
	=
	\begin{pmatrix}
		\vp \\ \vp'
	\end{pmatrix}
	$ 
	using the transfer matrices as
	$\displaystyle
		\Psi_{\omega,E}(x)			
	=
	\T(x,0;E)
	\psi_0
	$
	for some possibly random $\| \psi_0\|=1$.
	This implies in particular that
	\begin{eqnarray*}
		\| \Psi_{\omega,E}(x)\|
		\geq 
		\| \T(x,0;E)\|^{-1}.
	\end{eqnarray*}	
	Assume $x>0$, the opposite case being analogous.
	Using Lemma \ref{thm:comparison} with some $\vartheta_1 \neq \vartheta_2$, 
	\begin{eqnarray*}
		\nonumber
		\p\left[ \| \Psi_{\omega,E}(x)\|  \leq \e^{-c_1 |x|^{1-2\alpha}}\right]
		&\leq&
		\p\left[ \| \T(x,0;E) \| \geq \e^{c_1 |x|^{1-2\alpha}}\right]\\
		&\leq& 
		\e^{-2c_1 |x|^{1-2\alpha}}\esp\left[ \|  \T(x,0;E)\|^2\right]\\
		&\leq& 
		C_1(\vartheta_1,\vartheta_2) \e^{-2c_1 |x|^{1-2\alpha}}
		\left\{
			 \esp\left[ R^2(x,\vartheta_1)\right]+\esp\left[ R^2(x,\vartheta_2)\right]
		\right\},
	\end{eqnarray*}
	for some $C_1(\vartheta_1,\vartheta_2) >0$. From the martingale decomposition \eqref{eq:martingale-decomposition}, one has
	\begin{eqnarray}
		R(x,\vartheta_1)
		=
		\prod^{\lfloor x \rfloor}_{j=1}
			\exp\left\{
				\frac{\omega_j}{j^{\alpha}}A_j + \frac{\omega_j^2}{j^{2\alpha}}B_j + E_j)
			\right\},
	\end{eqnarray}
	where $A_j$ and $B_j$ are independent of $\omega_j$ and bounded uniformly in $E\in I$, and $E_j=o(j^{-2\alpha})$, uniformly in  $E\in I$. Hence, from standard estimates on the exponential moments of bounded centered random variables,
	\begin{eqnarray*}
		\esp[R(x,\vartheta_1)^2]
		\leq
		\prod^{\lfloor x \rfloor}_{j=1}
		\left(
			1 + \frac{C_2}{j^{2\alpha}} \esp[\omega_j^2] + o(j^{-2\alpha})
		\right)
		\leq
		\e^{C_3x^{1-2\alpha}},
	\end{eqnarray*}
	for some finite constants $C_2=C_2(I)$ and $C_3=C_3(I)$. The bound for $R^2(x,\vartheta_2)$ is of course similar. The result follows by Borel-Cantelli choosing $2c_1>C_3$.
	
%	From \eqref{eq:ODE-prufer}, we obtain
%	\begin{eqnarray}
%		R(x,\vartheta_1)\leq \exp\left( C_1 \sum^{\lfloor x \rfloor +1}_{j=0} \frac{\omega_j}{j^{\alpha}} \right),
%	\end{eqnarray}
%	for some $C_1=C_1(I)$. By standard estimates on the exponential moments of bounded centered random variables,
%	\begin{eqnarray}
%		\esp\left[
%			\exp\left( C \sum^{\lfloor x \rfloor +1}_{j=0} \frac{\omega_j}{j^{\alpha}} \right)
%		\right]
%		=
%		\prod^{\lfloor x \rfloor +1}_{j=0}
%		\left(
%			1 + \frac{C^2}{j^{2\alpha}} \esp[\omega_j^2] + o(j^{-2\alpha})
%		\right)
%		\leq
%		e^{C_2x^{1-2\alpha}},
%	\end{eqnarray}
%	for some $C_2>0$. 
	The upper bound is quite standard. Following for instance the proof of \cite[Theorem 9.22]{CFKS}, we obtain
	\begin{eqnarray*}
		\| \chi_x \vp_{\omega,E} \| \| \chi_0 \vp_{\omega,E} \| \leq C_{\omega} \e^{-c_2|x|^{1-2\alpha}},
	\end{eqnarray*}
	for some random almost surely finite $C_{\omega}>0$ and deterministic $c_2>0$. We can use the lower bound we just proved to get a lower bound on $\| \chi_0 \vp_{\omega,E} \|$ uniformly in $E\in I$. 
\end{proof}
%%%%%%%%%%%%%%%%%%%%%%%%%%%%%%%%%%%%%%%%%%%%%%%%%%%%%%%%%%%%
%%%%%%%%%%%%%%%%%%%%%%%%%%%%%%%%%%%%%%%%%%%%%%%%%%%%%%%%%%%%
\begin{proof}[Proof of Theorem \ref{thm:lower-bound-DL}]
	For the first statement, let $\kappa<1-2\alpha$. Then,
	\begin{eqnarray*}
		&&\esp\left[\sup_t
					\left\|
						\e^{ \frac12 |X|^{\kappa}}\e^{-it\opH} P_I(\opH)\chi_m
					\right\|^2
				\right]\\
		&& \phantom{blablablab} =
		\esp\left[\sup_t
					\langle
						e^{  |X|^{\kappa}}\e^{-it\opH}P_I(\opH)\chi_m,
						\e^{-it\opH}\chi_m
					\rangle
				\right]\\
		&& \phantom{blablablab} \leq
		\esp\left[\sup_t
					\sum_n
					\left|
					\langle
						\e^{  |X|^{\kappa}}\e^{-it\opH}P_I(\opH)\chi_m, \chi_n
					\rangle
					\right|
					\left|
					\langle
						\chi_n,
						\e^{-it\opH}P_I(\opH)\chi_m
					\rangle
					\right|
				\right]\\
		&& \phantom{blablablab} 
		\leq
		C\
		\esp\left[\sup_t
					\sum_n
					\e^{|n|^{\kappa}}
					\left|
					\langle
						\chi_n,
						\e^{-it\opH}P_I(\opH)\chi_m
					\rangle
					\right|^2
				\right]\\
		&& \phantom{blablablab} 
		=
		C\
		\esp\left[
					\sum_n
					\e^{|n|^{\kappa}}
					Q_{\omega,\lambda}(m,n;I)^2
				\right],
	\end{eqnarray*}
	for some finite $C>0$. The last sum is finite for each $m$ in virtue of \eqref{eq:dynamical-localization}.
	
	For the second statement, let $c_{\omega}$ and $c_1$ be as in Theorem \ref{thm:decay-eigenfunctions}. Let $(\psi_l)_{l}$ be a basis of ${\text Ran}P_I(\opH)$ given by normalized eigenfunctions of the operator $\opH$ with corresponding eigenvalues $(E_l)_l$.
	 Let $N\geq 1$. Taking $\kappa>1-2\alpha$ and applying the lower bound in Theorem \ref{thm:decay-eigenfunctions},
	 %applying Lemma \ref{thm:general-bound-2} and the lower bound in Proposition \ref{thm:decay-eigenfunctions},
	 \begin{eqnarray*}
	 	\| \e^{\frac12 |X|^{\kappa}} \chi_{[0,N]}\psi_l\|^2
	 	&=&
	 	\int^N_0 \overline{\psi}_l(x) \e^{|x|^{\kappa}}\psi_l(x)\, dx
	 	\geq
	 	\sum^{N-1}_{n=0}\e^{n^{\kappa}} \int^{n+1}_n |\psi_l(x)|^2dx
%	 	\\
%	 	&\geq&
%	 	C \sum^{N-1}_{n=0}e^{n^{\kappa}} \left( |\psi_l(n)|^2 + |\psi'(n)|^2\right)
	 	\\
	 	&\geq&
	 	C \sum^{N-1}_{n=0}\e^{n^{\kappa}}c_{\omega}\e^{-c_1 n^{1-2\alpha}}
	 	\geq
	 	c'_{\omega} \e^{\frac12 N^{\kappa}},
	 \end{eqnarray*}
	 for some $C>0$ and some suitable random quantity $c'_{\omega}>0$.
	Let $\psi \in {\text Ran}P_I(\opH)$ and write $\psi = \sum_l a_l \psi_l$ with $\sum_l |a_l|^2=1$. Then,
	\begin{eqnarray}
		\left\| \e^{\frac12 |X|^{\kappa}} \e^{-i t \opH}\psi\right\|^2
		=
		\sum_{l,l'} a_l \overline{a}_{l'}\e^{-it(E_l - E_{l'})}
		\langle \psi_{l'}, \e^{|X|^{\kappa}} \psi_l \rangle.
	\end{eqnarray}
	A careful application of the dominated convergence theorem to exchange sums and integrals yields
	\begin{eqnarray}
		\lim_{T\to\infty} \frac{1}{T} \int^T_0 \left\| \e^{\frac12 |X|^{\kappa}} \e^{-i t \opH}\psi\right\|^2 dt
		=
		\sum_l |a_l|^2 \left\| \e^{\frac12 |X|^{\kappa}} \psi_l\right\|^2
		\geq 
		c'_{\omega} \e^{\frac12 N^{\kappa}}.
	\end{eqnarray}
	Hence, there exists an diverging (random) sequence $(T_N)_N$ such that
	\begin{eqnarray}
		\frac{1}{T_N} \int^{T_N}_0 \left\| \e^{\frac12 |X|^{\kappa}}\e^{-i t \opH}\psi\right\|^2 dt
		\geq 
		\frac{c'_{\omega}}{2} \e^{\frac12 N^{\kappa}},	
		\end{eqnarray}
	for all $N\geq 1$. From here, we can find a diverging (random) sequence $(t_N)_N$ such that
	\begin{eqnarray}
		\left\| \e^{\frac12 |X|^{\kappa}} \e^{-i t_N \opH}\psi\right\|^2
		\geq 
		\frac{c'_{\omega}}{4} \e^{\frac12 N^{\kappa}},
	\end{eqnarray}
	for all $N\geq 1$. This finishes the proof.
\end{proof}
%%%%%%%%%%%%%%%%%%%%%%%%%%%%%%%%%%%%%%%%%%%%%%%%%%%%%%%%%%%%
%%%%%%%%%%%%%%%%%%%%%%%%%%%%%%%%%%%%%%%%%%%%%%%%%%%%%%%%%%%%
%%%%%%%%%%%%%%%%%%%%%%%%%%%%%%%%%%%%%%%%%%%%%%%%%%%%%%%%%%%%
%%%%%%%%%%%%%%%%%%%%%%%%%%%%%%%%%%%%%%%%%%%%%%%%%%%%%%%%%%%%
%%%%%%%%%%%%%%%%%%%%%%%%%%%%%%%%%%%%%%%%%%%%%%%%%%%%%%%%%%%%
%%%%%%%%%%%%%%%%%%%%%%%%%%%%%%%%%%%%%%%%%%%%%%%%%%%%%%%%%%%%

\appendix

%%%%%%%%%%%%%%%%%%%%%%%%%%%%%%%%%%%%%%%%%%%%%%%%%%%%%%%%%%%%
%%%%%%%%%%%%%%%%%%%%%%%%%%%%%%%%%%%%%%%%%%%%%%%%%%%%%%%%%%%%
%%%%%%%%%%%%%%%%%%%%%%%%%%%%%%%%%%%%%%%%%%%%%%%%%%%%%%%%%%%%
%%%%%%%%%%%%%%%%%%%%%%%%%%%%%%%%%%%%%%%%%%%%%%%%%%%%%%%%%%%%
%%%%%%%%%%%%%%%%%%%%%%%%%%%%%%%%%%%%%%%%%%%%%%%%%%%%%%%%%%%%
%%%%%%%%%%%%%%%%%%%%%%%%%%%%%%%%%%%%%%%%%%%%%%%%%%%%%%%%%%%%
%%%%%%%%%%%%%%%%%%%%%%%%%%%%%%%%%%%%%%%%%%%%%%%%%%%%%%%%%%%%
%%%%%%%%%%%%%%%%%%%%%%%%%%%%%%%%%%%%%%%%%%%%%%%%%%%%%%%%%%%%

\section{General estimates}\label{app:general}

We quote two lemmas from \cite{HSS} that were used repeatedly in the proofs.
%%%%%%%%%%%%%%%%%%%%%%%%%%%%%%%%%%%%%%%%%%%%%%%%%%%%%%%%%%%%
%%%%%%%%%%%%%%%%%%%%%%%%%%%%%%%%%%%%%%%%%%%%%%%%%%%%%%%%%%%%
\begin{lemma}[\cite{HSS}, Lemma A.1]
	\label{thm:general-bound-1}
	For every $q \in L^1_{loc}(\R)$, every $c<d$ and every solution of $-\Delta \vp + q\vp = 0$ on $[c,d]$, we have
	\begin{eqnarray*}
		&&
		\left( |\vp(c)|^2 + |\vp'(c)|^2 \right)
		\exp\left( -\int^d_c |1+q(x)|\, dx\right)
		\leq
		\left( |\vp(d)|^2 + |\vp'(d)|^2 \right)
		\\
		&&
		\phantom{blablablablablablablablablablablablablabla}
		\leq
		\left( |\vp(c)|^2 + |\vp'(c)|^2 \right)
		\exp\left( \int^d_c |1+q(x)|\, dx\right).
	\end{eqnarray*}
\end{lemma}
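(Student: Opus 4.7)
My plan is to work with the auxiliary quantity
\[
R(x) \;=\; |\vp(x)|^2 + |\vp'(x)|^2,
\]
and show that it satisfies a differential inequality of the form $|R'(x)| \leq |1+q(x)|\, R(x)$. Once this is established, integrating the inequality $|(\log R)'(x)| \leq |1+q(x)|$ on $[c,d]$ yields
\[
\Bigl| \log R(d) - \log R(c) \Bigr| \;\leq\; \int_c^d |1+q(x)|\, dx,
\]
which is exactly the two-sided bound claimed in the lemma after exponentiating.

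To obtain the differential inequality, I would first differentiate $R$ and use the equation $\vp'' = q\vp$ to get
\[
R'(x) \;=\; 2\re\bigl(\vp(x)\overline{\vp'(x)}\bigr) + 2\re\bigl(\vp''(x)\overline{\vp'(x)}\bigr) \;=\; 2\bigl(1 + q(x)\bigr)\, \re\bigl(\vp(x)\overline{\vp'(x)}\bigr).
\]
Then the elementary AM–GM estimate $2|\re(\vp\overline{\vp'})| \leq 2|\vp||\vp'| \leq |\vp|^2 + |\vp'|^2 = R(x)$ yields immediately
\[
|R'(x)| \;\leq\; |1+q(x)|\, R(x).
\]

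The only mild subtlety is regularity: $q$ is merely in $L^1_{\mathrm{loc}}$, so $\vp$ is only $C^1$ with $\vp'$ absolutely continuous, and the above computations are to be understood in the sense of absolutely continuous functions and a.e.\ derivatives. Consequently $R$ is itself absolutely continuous on $[c,d]$, $\log R$ is absolutely continuous on any subinterval where $R>0$, and the fundamental theorem of calculus applies. The case when $R$ vanishes at some point is harmless: if $R(x_0) = 0$ then $\vp(x_0) = \vp'(x_0) = 0$, and by the uniqueness theorem for the ODE $\vp \equiv 0$, in which case all three quantities in the lemma vanish and the bound is trivial. Otherwise $R > 0$ throughout $[c,d]$ and the argument above goes through verbatim, yielding both the upper and lower bounds simultaneously. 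I expect no real obstacle here — this is essentially a one-line Gronwall argument applied to the natural energy $|\vp|^2 + |\vp'|^2$ of the Schrödinger equation.
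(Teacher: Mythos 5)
Your proof is correct, and it is the standard argument (the paper quotes this lemma from \cite{HSS} without reproducing a proof; the argument in \cite{HSS} is the same Gronwall-type estimate on the energy $R(x)=|\vp(x)|^2+|\vp'(x)|^2$). The computation $R'(x)=2(1+q(x))\re\bigl(\vp(x)\overline{\vp'(x)}\bigr)$ together with the AM--GM bound $2|\vp||\vp'|\le R$ is exactly the right route, and your handling of regularity (absolute continuity of $R$, and the degenerate case $R(x_0)=0$ via uniqueness --- which is itself just the same Gronwall inequality) is clean.
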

%%%%%%%%%%%%%%%%%%%%%%%%%%%%%%%%%%%%%%%%%%%%%%%%%%%%%%%%%%%%
%%%%%%%%%%%%%%%%%%%%%%%%%%%%%%%%%%%%%%%%%%%%%%%%%%%%%%%%%%%%

%%%%%%%%%%%%%%%%%%%%%%%%%%%%%%%%%%%%%%%%%%%%%%%%%%%%%%%%%%%%
%%%%%%%%%%%%%%%%%%%%%%%%%%%%%%%%%%%%%%%%%%%%%%%%%%%%%%%%%%%%
\begin{lemma}[\cite{HSS}, Lemma A.2]
	\label{thm:general-bound-2}
	For any positive numbers $l$ and $M$, there exists $C>0$ such that
	\begin{eqnarray}
		\int^{c+l}_c |\vp(t)|^2 dt \geq C \left( |\vp(c)|^2 + |\vp'(c)|^2 \right),
	\end{eqnarray}
	for all $c\in \R$, all $q \in L^1_{loc}(\R)$ such that $\int^{c+l}_c |1+q(x)|\, dx \leq M$ and every solution of $-\Delta \vp + q\vp = 0$ on $[c,c+l]$.
\end{lemma}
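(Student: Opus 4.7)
\medskip

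\noindent\textbf{Proof plan for Lemma~\ref{thm:general-bound-2}.}

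The plan is to normalise by $N^2 := |\vp(c)|^2+|\vp'(c)|^2$ and exhibit a constant $C=C(l,M)>0$ such that $\int_c^{c+l}|\vp|^2\,dt\ge C N^2$. A first step is to invoke Lemma~\ref{thm:general-bound-1} \emph{both} in the forward direction from $c$ and in the backward direction from $c+l$. Since the hypothesis $\int_c^{c+l}|1+q|\,dx\le M$ is symmetric in the two endpoints, this yields the two-sided bound
\begin{equation*}
	N^2 e^{-M}\;\le\;|\vp(t)|^2+|\vp'(t)|^2\;\le\;N^2 e^{M},\qquad t\in[c,c+l].
\end{equation*}
In particular $|\vp(t)|,|\vp'(t)|\le N e^{M/2}$, which will serve as a uniform Lipschitz control, while the lower bound $|\vp(t)|^2+|\vp'(t)|^2\ge N^2 e^{-M}$ guarantees that at each point either the value or the derivative is comparable to $N$.

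The heart of the argument is a simple dichotomy. \emph{Case A:} there is some $t_0\in[c,c+l]$ with $|\vp(t_0)|\ge N e^{-M/2}/\sqrt{2}$. Using the Lipschitz bound $|\vp'|\le N e^{M/2}$, one gets $|\vp(t)|\ge N e^{-M/2}/(2\sqrt{2})$ for all $t$ with $|t-t_0|\le \delta:=\tfrac{1}{2\sqrt{2}}e^{-M}$. The intersection of $[t_0-\delta,t_0+\delta]$ with $[c,c+l]$ has length at least $\min(\delta,l)$, so $\int_c^{c+l}|\vp|^2\,dt\ge \tfrac{1}{8} N^2 e^{-M}\min(\delta,l)$. \emph{Case B:} $|\vp(t)|^2< N^2 e^{-M}/2$ on all of $[c,c+l]$. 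The two-sided bound then forces $|\vp'(t)|^2\ge N^2 e^{-M}/2$ throughout, so by continuity $\vp'$ has constant sign and $\vp$ is strictly monotone. Let $t^{\#}\in[c,c+l]$ minimise $|\vp|$ (either a zero of $\vp$ or an endpoint). Monotonicity combined with $|\vp'|\ge N e^{-M/2}/\sqrt{2}$ gives the $V$-shaped lower bound $|\vp(t)|\ge (N e^{-M/2}/\sqrt{2})\,|t-t^{\#}|$ for every $t\in[c,c+l]$; since $\int_c^{c+l}(t-t^{\#})^2\,dt\ge l^3/12$ uniformly in $t^{\#}\in[c,c+l]$, we obtain $\int_c^{c+l}|\vp|^2\,dt\ge N^2 e^{-M} l^3/24$.

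Combining both cases yields the claim with
\begin{equation*}
	C \;=\; e^{-M}\min\!\left(\frac{\min(\delta,l)}{8},\,\frac{l^3}{24}\right),
\end{equation*}
which depends only on $l$ and $M$. The delicate point is Case~B: one must handle the situation where $\vp$ does not vanish on $[c,c+l]$ (so $t^{\#}$ lies at a boundary point) together with the situation where a zero $t^{*}\in[c,c+l]$ exists, in a single unified bound. Monotonicity, which here is guaranteed by the strict lower bound on $|\vp'|$, is exactly what makes the $V$-shaped estimate valid in all subcases and the minimisation $\int_c^{c+l}(t-t^{\#})^2\,dt\ge l^3/12$ independent of the precise location of $t^{\#}$.
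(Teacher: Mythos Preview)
Your argument is correct. The paper itself does not supply a proof of this lemma: it is simply quoted from \cite{HSS}, so there is no ``paper's own proof'' to compare against. Your dichotomy argument is self-contained and valid. The only tacit assumption worth flagging is that $\vp$ is real-valued, which is needed in Case~B to pass from $|\vp'|>0$ to $\vp'$ having constant sign; this is harmless here since the equation has real coefficients and the applications in the paper (Pr\"ufer variables, transfer matrices) all concern real solutions, while the complex case follows immediately by splitting into real and imaginary parts.
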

%%%%%%%%%%%%%%%%%%%%%%%%%%%%%%%%%%%%%%%%%%%%%%%%%%%%%%%%%%%%
%%%%%%%%%%%%%%%%%%%%%%%%%%%%%%%%%%%%%%%%%%%%%%%%%%%%%%%%%%%%
%%%%%%%%%%%%%%%%%%%%%%%%%%%%%%%%%%%%%%%%%%%%%%%%%%%%%%%%%%%%
%%%%%%%%%%%%%%%%%%%%%%%%%%%%%%%%%%%%%%%%%%%%%%%%%%%%%%%%%%%%
%%%%%%%%%%%%%%%%%%%%%%%%%%%%%%%%%%%%%%%%%%%%%%%%%%%%%%%%%%%%
%%%%%%%%%%%%%%%%%%%%%%%%%%%%%%%%%%%%%%%%%%%%%%%%%%%%%%%%%%%%

\section{The martingale decomposition}\label{app:martingale}
We briefly describe the martingale analysis of \cite[Section 9]{KLS} needed in Lemma \ref{thm:bounds-on-Tmn}. Note that the analysis of \cite{KLS} is almost sure. We only require a version in expectation.
\newline
Let $I\subset \R$ be a compact interval and, for $E\in I$, set $k=\sqrt{E}$. Letting $\bar{\theta}_n(y)=\theta(n)+ky$, it can be showed that
\begin{eqnarray}
	\log \frac{R(n)}{R(m)}
	&=&
	\sum^{n-1}_{j=m}
	\frac{\lambda\omega_j}{2kj^{\alpha}}
	\int^1_0 u(y) 
	\sin \left( 2 \bar{\theta}_j(y)\right)\, dy
	\notag
	\\
	&&
	-
	\sum^{n-1}_{j=m}
	\frac{\lambda^2\omega_j^2}{4k^2j^{2\alpha}}
	\int^1_0 u(y) 
	\left( \int^y_0 u(t) \, dt \right)
	\cos 2 \bar{\theta}_j(y)
	dy
	\notag
	\\
	&&
	+
	\sum^{n-1}_{j=m}
	\frac{\lambda^2\omega_j^2}{8k^2j^{2\alpha}}
	\left|
	\int^1_0 u(y) 
	\e^{2iky}\, dy
	\right|^2 \cos(4(\theta(j)-\nu_k))
	\\
	\notag
	&&
	\label{eq:martingale-decomposition}
	+
	\sum^{n-1}_{j=m}
	\frac{\lambda^2\omega_j^2}{8k^2j^{2\alpha}}
	\left|
	\int^1_0 u(y) 
	\e^{2iky}\, dy
	\right|^2
	+
	K_{\omega}(m,n),
\end{eqnarray}
for some $\nu_k\in[0,2\pi)$ and where
\begin{eqnarray*}
	| K_{\omega}(m,n)|
	=
	o\left( \sum^{n-1}_{j=m} j^{-2\alpha}\right),
\end{eqnarray*}
uniformly in $E\in I$ and $\omega$ (the bound is indeed deterministic). From \eqref{eq:ODE-prufer}, we can see that $\{ \bar{\theta}_j(y): y\geq 0\}$ and $\omega_j$ are independent. Recalling that the $\omega_j$'s are centered and satisfy $\esp[\omega_j^2]=1$, we can integrate \eqref{eq:martingale-decomposition},
\begin{eqnarray*}
	\esp\left[
		\log \frac{R(n)}{R(m)}
	\right]
	&=&
	-
	\frac{\lambda^2}{4k^2}
	\esp\left[
		\sum^{n-1}_{j=m}
		\frac{1}{j^{2\alpha}}
		\int^1_0 u(y) 
		\left( \int^y_0 u(t) \, dt \right)
		\cos 2 \bar{\theta}_j(y)
		dy
	\right]
	\\
	&&
	+
	\frac{\lambda^2}{8k^2}
	\esp\left[
		\sum^{n-1}_{j=m}
		\frac{1}{j^{2\alpha}}
		\left|
		\int^1_0 u(y) 
		\e^{2iky}\, dy
		\right|^2 \cos(4(\theta(j)-\nu_k))
	\right]
	\\
	&&
	+
	\frac{\lambda^2}{8k^2}
	\sum^{n-1}_{j=m}
	\frac{1}{j^{2\alpha}}
	\left|
	\int^1_0 u(y) 
	\e^{2iky}\, dy
	\right|^2
	+
	\esp\left[
		E_{\omega}(m,n)
	\right].
\end{eqnarray*}
The deterministic analysis of \cite[Section 9]{KLS} allows us to control the first two sums on the right-hand-side to show that they are $ o\left( \sum^{n-1}_{j=m} j^{-2\alpha}\right)$, uniformly in $E\in I$ such that $\sqrt{E}\notin \pi \Z$. This shows \eqref{eq:asymptotics-prufer}.

%%%%%%%%%%%%%%%%%%%%%%%%%%%%%%%%%%%%%%%%%%%%%%%%%%%%%%%%%%%%
%%%%%%%%%%%%%%%%%%%%%%%%%%%%%%%%%%%%%%%%%%%%%%%%%%%%%%%%%%%%
%%%%%%%%%%%%%%%%%%%%%%%%%%%%%%%%%%%%%%%%%%%%%%%%%%%%%%%%%%%%
%%%%%%%%%%%%%%%%%%%%%%%%%%%%%%%%%%%%%%%%%%%%%%%%%%%%%%%%%%%%
%%%%%%%%%%%%%%%%%%%%%%%%%%%%%%%%%%%%%%%%%%%%%%%%%%%%%%%%%%%%
%%%%%%%%%%%%%%%%%%%%%%%%%%%%%%%%%%%%%%%%%%%%%%%%%%%%%%%%%%%%

\section{Pure point spectrum}\label{app:pp}

%%%%%%%%%%%%%%%%%%%%%%%%%%%%%%%%%%%%%%%%%%%%%%%%%%%%%%%%%%%%
%%%%%%%%%%%%%%%%%%%%%%%%%%%%%%%%%%%%%%%%%%%%%%%%%%%%%%%%%%%%
\begin{proposition}
	Assume dynamical localization for $\opH$ holds in the sense of \eqref{eq:dynamical-localization} in an energy interval $I\subset\R$. Then, the spectrum of $\opH$ is almost surely pure point in $I$.
\end{proposition}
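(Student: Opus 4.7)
The plan is to establish $\mathrm{Ran}\, P_I(\opH)\subset\mathcal{H}_{pp}(\opH)$ almost surely, which is equivalent to the claim. Using the orthogonal decomposition $L^2(\R)=\bigoplus_{y\in\Z} L^2([y,y+1])$, the boundedness of $P_I(\opH)$, and the closedness of $\mathcal{H}_{pp}(\opH)$, it is enough to prove that $P_I(\opH)g\in \mathcal{H}_{pp}(\opH)$ for every $y\in\Z$ and every $g\in L^2([y,y+1])$ of unit norm; the full assertion then follows by linearity and a routine limiting argument.

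Fix such $y,g$ and set $\psi:=P_I(\opH)g$. By Fubini and \eqref{eq:dynamical-localization},
\begin{equation*}
S_{\omega,y}(I):=\sum_{x\in\Z} Q_{\omega,\lambda}(x,y;I)^2<\infty,\qquad \p\text{-a.s.}
\end{equation*}
Testing the supremum \eqref{eq:correlator} against the measurable functions $f_t(E):=\e^{-itE}\mathbf{1}_I(E)\in\mathcal{C}_c(I)$ and noting that $\chi_y g=g$ gives
\begin{equation*}
\norm{\chi_x\,\e^{-it\opH}\psi}\le Q_{\omega,\lambda}(x,y;I),\qquad \forall\,t\in\R,\ x\in\Z.
\end{equation*}
Since $\{\chi_x\}_{x\in\Z}$ partitions $\R$ up to a null set, summing over $|x|>R$ produces
\begin{equation*}
\sup_{t\in\R}\norm{(1-\chi_{[-R,R+1]})\e^{-it\opH}\psi}^2\le\sum_{|x|>R}Q_{\omega,\lambda}(x,y;I)^2\xrightarrow[R\to\infty]{}0,\quad \p\text{-a.s.},
\end{equation*}
i.e.\ the orbit $\{\e^{-it\opH}\psi:t\in\R\}$ is tight in $L^2(\R)$, uniformly in $t$.

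To upgrade this tightness to precompactness, I would use that $I$ is compact, so $\psi\in\mathrm{Dom}(\opH)$ and $\sup_t\norm{\opH\,\e^{-it\opH}\psi}=\norm{\opH\psi}<\infty$; combined with the boundedness of $\lambda V_\omega$ and the identity $-\vp''=\opH\vp-\lambda V_\omega\vp$, this bounds the orbit uniformly in the Sobolev space $H^1(\R)$, providing the equi-continuity in translation missing from tightness. The Fréchet--Kolmogorov criterion then gives $L^2$-precompactness of $\{\e^{-it\opH}\psi:t\in\R\}$, and the standard characterization of the pure point subspace as the set of vectors with precompact orbit (equivalently, the RAGE theorem \cite[Theorem 5.8]{CFKS} applied to the compact operators $\chi_{[-R,R+1]}P_I(\opH)$) gives $\psi\in\mathcal{H}_{pp}(\opH)$. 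The density/closedness step of the first paragraph concludes.

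The principal subtlety is the non-uniformity in $y$ of the bounds in Theorem \ref{thm:DL}, which forbids any global, simultaneous-in-$y$ control of the orbit of an arbitrary element of $\mathrm{Ran}\,P_I(\opH)$. The workaround is exactly the strategy above: argue one $y$ at a time, where only the almost sure finiteness of $S_{\omega,y}(I)$ is needed (no rate in $y$ intervenes), and propagate the conclusion to the entire range $\mathrm{Ran}\,P_I(\opH)$ by the boundedness of $P_I(\opH)$ and the closedness of $\mathcal{H}_{pp}(\opH)$.
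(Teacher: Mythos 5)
Your proof is correct and takes essentially the same route as the paper's: both establish almost-sure tightness of the orbit $\{\e^{-it\opH}\psi : t\in\R\}$ from the correlator sum, fix the base site $y$ (or $m$) to sidestep the non-uniformity of the bounds, and conclude via the RAGE theorem together with a density/closedness argument over $L^2(\R)=\bigoplus_{y\in\Z} L^2([y,y+1])$. The only difference is cosmetic: you make the compactness ingredient of RAGE explicit through Sobolev bounds and the Fr\'echet--Kolmogorov criterion, whereas the paper invokes RAGE directly, the compactness of $\chi_R P_I(\opH)$ being standard for one-dimensional Schr\"odinger operators.
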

%%%%%%%%%%%%%%%%%%%%%%%%%%%%%%%%%%%%%%%%%%%%%%%%%%%%%%%%%%%%
%%%%%%%%%%%%%%%%%%%%%%%%%%%%%%%%%%%%%%%%%%%%%%%%%%%%%%%%%%%%
\begin{proof}
	This is a consequence of the RAGE Theorem \cite{CFKS}. Suppose that \eqref{eq:dynamical-localization} holds in an energy interval $I$ and consider $\chi_R$, the projector on the box $[-R,R]$. As $\chi_R$ converges strongly to the identity, it is enough to show that, $\p$-almost surely,
	\begin{eqnarray}\label{eq:RAGE}
		\lim_{R\to\infty} \sup_t \left\| \left( 1 - \chi_R \right) \e^{-it\opH}P_I(\opH) \chi_m \right\|^2 =0,
	\end{eqnarray}
	for all integer $m$ as this implies that the range of $P_I(\opH)$ is almost surely included in the point spectrum of $\opH$. Now,
	\begin{eqnarray*}
		\left\| \left( 1 - \chi_R \right) \e^{-it\opH}P_I(\opH) \chi_m \right\|^2 
		&\leq&
		\sum_{|n|>R} \left| \langle \chi_n, \e^{-it\opH}P_I(\opH) \chi_m \rangle\right|^2\\
%		&=&
%		\sum_{|m|>R} \left| \langle e^{itH_{\omega}} \delta_m, P_I(H_{\omega}) \delta_n \rangle\right|^2\\
%		&=&
%		\sum_{|m|>R} \left| \langle e^{itH_{\omega}} \delta_m, P_I(H_{\omega})^2 \delta_n \rangle\right|^2\\
%		&=&
%		\sum_{|m|>R} \left| \langle P_I(H_{\omega}) e^{itH_{\omega}} \delta_m, P_I(H_{\omega}) \delta_n \rangle\right|^2\\
		&=&
		\sum_{|n|>R} \left| \langle P_I(\opH) \ \e^{it\opH} \chi_n, \chi_m \rangle\right|^2\\
		&\leq&
		\sum_{|n|>R} Q_\omega(m,n;I)^2,
	\end{eqnarray*}
	for all $t\in\R$. By Fatou's lemma,
	% and using that $Q_{\omega,\lambda}(m,n;I)\leq 1$,
	\begin{eqnarray*}
		\esp\left[  \lim_{R\to\infty} \sup_t \left\| \left( 1 - \chi_R \right) \e^{-it\opH}P_I(\opH) \chi_m \right\|^2 \right]
		&\leq&
		\esp\left[  \liminf_{R\to\infty} 
		\sum_{|n|>R} Q_{\omega,\lambda}(m,n;I)^2
		\right]\\
		&\le&
		\liminf_{R\to\infty} \esp\left[   
		\sum_{|n|>R} Q_{\omega,\lambda}(m,n;I)^2
		\right]
		=0,
	\end{eqnarray*}
	which shows that \eqref{eq:RAGE} holds $\p$-almost surely for each $m$. This is, for each $m$, there exists $\Omega_m \subset \Omega$ with $\p[\Omega_m]=1$ such that \eqref{eq:RAGE} holds for all $\omega\in\Omega_m$. Finally, the set $\widetilde \Omega = \bigcap_m \Omega_m$ is such that $\p[\widetilde \Omega]=1$ and such that \eqref{eq:RAGE} holds for all $m$ simultaneously for all $\omega \in \widetilde \Omega$.
\end{proof}
%%%%%%%%%%%%%%%%%%%%%%%%%%%%%%%%%%%%%%%%%%%%%%%%%%%%%%%%%%%%
%%%%%%%%%%%%%%%%%%%%%%%%%%%%%%%%%%%%%%%%%%%%%%%%%%%%%%%%%%%%

%%%%%%%%%%%%%%%%%%%%%%%%%%%%%%%%%%%%%%%%%%%%%%%%%%%%%%%%%%%%
%%%%%%%%%%%%%%%%%%%%%%%%%%%%%%%%%%%%%%%%%%%%%%%%%%%%%%%%%%%%
%%%%%%%%%%%%%%%%%%%%%%%%%%%%%%%%%%%%%%%%%%%%%%%%%%%%%%%%%%%%
%%%%%%%%%%%%%%%%%%%%%%%%%%%%%%%%%%%%%%%%%%%%%%%%%%%%%%%%%%%%

%%%%%%%%%%%%%%%%%%%%%%%%%%%%%%%%%%%%%%%%%%%%%%%%%%%%%%%%%%%%
%%%%%%%%%%%%%%%%%%%%%%%%%%%%%%%%%%%%%%%%%%%%%%%%%%%%%%%%%%%%

%%%%%%%%%%%%%%%%%%%%%%%%%%%%%%%%%%%%%%%%%%%%%%%%%%%%%%%%%%%%
%%%%%%%%%%%%%%%%%%%%%%%%%%%%%%%%%%%%%%%%%%%%%%%%%%%%%%%%%%%%

%%%%%%%%%%%%%%%%%%%%%%%%%%%%%%%%%%%%%%%%%%%%%%%%%%%%%%%%%%%%
%%%%%%%%%%%%%%%%%%%%%%%%%%%%%%%%%%%%%%%%%%%%%%%%%%%%%%%%%%%%
%%%%%%%%%%%%%%%%%%%%%%%%%%%%%%%%%%%%%%%%%%%%%%%%%%%%%%%%%%%%
%%%%%%%%%%%%%%%%%%%%%%%%%%%%%%%%%%%%%%%%%%%%%%%%%%%%%%%%%%%%
%\bibliography{schrodinger}{}

\begin{thebibliography}{99}
%%\addcontentsline{toc}{chapter}{Bibliography}
%

\bibitem{An58} P. Anderson, \textit{Absence of diffusion in certain random 
lattices}, Phys. Rev. {\bf 109}, 1492-1505 (1958).

\bibitem{AENSS} M. Aizenman, A. Elgart, S. Naboko, J. Schenker, G. Stolz, \textit{Moment analysis for localization in random Schro\"odinger operators}, Invent. Math., {\bf 163}, 343-413 (2006).

\bibitem{AM} M. Aizenman, S. Molchanov, \textit{Localisation at large disorder and at extreme energies:
an elementary derivation}, Comm. Math. Phy. {\bf 157}, 245-278 (1993).
%
\bibitem{ASW} M. Aizenman, R. Sims, S. Warzel, \textit{Stability of the absolutely continuous spectrum of random Schr\"odinger operators on tree graphs},
 Probab. Theory Related Fields {\bf 136}, 363-394 (2006).
%
%
\bibitem{AW} M. Aizenman, S. Warzel, \textit{Random Operators: Disordered effects on Quantum spectra and dynamics}, Graduate Studies in Mathematics, vol 168 AMS (2016).
%
%
\bibitem{B1} J. Bourgain, \textit{ On random Schr\"odinger operators on $\Z^2$},
 Discr. Contin. Dyn. Syst. {\bf 8}, 1-15 (2002).

\bibitem{B2} J. Bourgain, \textit{Random lattice Schr\"odinger operators with decaying potential: some higher dimensional phenomena}, 
Geometric Aspects of Functional Analysis, Lectures Notes in Math. {\bf 1807}, 70-98, Springer, Berlin-Heidelberg (2003).

\bibitem{BMT00} O. Bourget, G. Moreno Flores, A. Taarabt,
\textit{One-dimensional discrete Anderson model in a decaying random potential: from ac spectrum to dynamical localization}, preprint.
%
\bibitem{BMT01} O. Bourget, G. Moreno Flores, A. Taarabt,
\textit{One-dimensional discrete Dirac model in a decaying random potential I: spectrum and dynamics}, preprint.

\bibitem{BDFG} V. Bucaj, D. Damanik, J. Fillman, V. Gerbuz, T. VandenBoom, F. Wang, Z. Zhang,
\textit{Localization for the one-dimensional Anderson model via positivity and large deviations for the Lyapunov exponent},
Trans. Amer. Math. Soc. {\bf 372}, 3619-3667 (2019)

\bibitem{Car} R. Carmona, \textit{Exponential localization in one dimensional disordered systems}, Duke Math. J. {\bf 49} 191-213 (1982).
%
%
\bibitem{CHN} J.M. Combes, P. Hislop, S. Nakamura, \textit{The Lp-Theory of the Spectral Shift Function, the Wegner Estimate, and the Integrated Density of States for Some Random Operators}, Commun. Math. Phys. {\bf 218}, 113-130 (2001).

\bibitem{CFKS} H.L. Cycon, R.G. Froese, W. Kirsch, B. Simon, \textit{Schr\"odinger operators with applications to quantum Mechanics and global Geometry}, Texts and Monographs in Physics, Springer Study Edition, Springer-Verlag, Berlin, (1987).
%
\bibitem{CKM} R. Carmona, A. Klein, F. Martinelli, \textit{Anderson Localization for Bernoulli and Other Singular Potentials}, Commun. Math. Phys. {\bf 108}, 41-66 (1987).

\bibitem{DG} D. Damanik, A. Gorodetski,
\textit{An extension of the Kunz-Souillard approach to localization in one dimension and applications to almost-periodic Schr\"odinger operators},
Adv. Math. {\bf 297} (2015).

\bibitem{DaSiSt} D. Damanik, R. Sims, G. Stolz, \textit{Localization for one-dimensional, continuum, Bernoulli-Anderson models}, Duke J. Math, {\bf 114}, 1, 59-100 (2002)

\bibitem{DS} D. Damanik, G. Stolz, \textit{A continuum version of the Kunz–Souillard approach to localization in one dimension}, Journal f\"ur die reine und angewandte Mathematik (Crelles Journal), {\bf 660}, 99-130 (2011).
%

%
\bibitem{DeRJLS1} R. Del Rio, S. Jitomirskaya, Y. Last and B. Simon, \textit{What is localization?}, Phys. Rev. Lett. 75, 117-119 (1995).

\bibitem{DeRJLS2} R. Del Rio, S. Jitomirskaya, Y. Last and B. Simon, \textit{Operators with singular continuous spectrum IV: Hausdorff dimensions, rank one pertubations and localization}, J. Anal. Math. 69 153-200 (1996).
%

\bibitem{D} F. Delyon, \textit{Appearance of a purely singular continuous spectrum in a class of random Schr\"odinger operators},
J. Statist. Phys. {\bf 40}, 621-630 (1985).

\bibitem{DLS} F. Delyon, H. Kunz, B. Souillard, \textit{One-dimensional wave equations in disordered media},
J. Phys. A: Math. Gen. {\bf 16}, 25-42, 1983

\bibitem{DSS} F. Delyon, B. Simon, B. Souillard,
\textit{From power pure point to continuous spectrum in disordered systems},
Ann. Henri Poincaré, {\bf 42} vol. 6, 283-309 (1985).

%
%
\bibitem{FGKM} A. Figotin, F. Germinet, A. Klein, P. M\"uller, \textit{Persistence of Anderson localization in Schr\"odinger operators with decaying random potentials}, Ark. Mat. {\bf 45} 15-30 (2007).
%
\bibitem{FHS} R. Froese, D. Hasler, W. Spitzer, \textit{Absolutely continuous spectrum for the Anderson model on a tree: a geometric proof of Klein's theorem},
Comm. Math. Phys. {\bf 269}, 239-257 (2007).
%
\bibitem{GK1} F. Germinet, A. Klein, \textit{Bootstrap multiscale analysis and localization in random
media}, Commun. Math. Phys. {\bf 222}, 415-448 (2001).
%
%

\bibitem{GZ} L. Ge, X. Zhao, 
\textit{Exponential dynamical localization in expectation for one dimensional Anderson model},
to appear in J. Spect. Theory.

\bibitem{GKS} F. Germinet, A. Klein, J. Schenker, \textit{Dynamical delocalization in
random Landau Hamiltonians}, Ann. Math. {\bf 166}, 215-244 (2007).
%
\bibitem{GKT} F. Germinet, A. Kiselev, S. Tcheremchantsev,
\textit{Transfer matrices and transport for Schr\"odinger operators},
Ann. Inst. Fourier {\bf 54}, 787-830 (2004).
%
%
\bibitem{GMP} I. Goldsheid, S. Molchanov, L. Pastur, \textit{A pure point spectrum of the stochastic
one-dimensional Schr\"odinger equation}, Funct. Anal. Appl. {\bf 11}, 1-10 (1977).
%
\bibitem{GT} F. Germinet, A. Taarabt, 
\textit{Spectral  properties  of  dynamical  localization  for Schr\"odinger operators}, Rev. Math. Phys. {\bf 25}, 9 (2013).
%
\bibitem{HSS} E. Hamza, R. Sims, G. Stolz, \textit{A note on fractional moments for the one-dimensional continuum Anderson model}, J. Math. Anal. Appl. {\bf 365}, 435-446 (2010)

%\bibitem{JL} V. Jak\u{s}i\'{c}, Y, Last,
%\textit{Spectral structure of Anderson type Hamiltonians},
%Inven. Math. {\bf 141}, 561-577 (2000).
%
%
\bibitem{JZ} S. Jitomirskaya, X. Zhu,
\textit{Large deviations of the Lyapunov exponent and localization for the 1D Anderson model}, 
Comm. Phys. Math. {\bf 370}, 311-324 (2019).

\bibitem{Kl} A. Klein, \textit{Extended states in the Anderson model on the bethe lattice}, Adv. in Math. {\bf 133}, 163-184 (1998).
%
\bibitem{Kr} M. Krishna, \textit{Anderson model with decaying randomness: existence of extended states}, Proc. Indian Acad. Sci. (Math. Sci.) {\bf 100},
285-294 (1990).
%
\bibitem{KKO} W. Kirsch, M. Krishna, J. Obermeit,
\textit{Anderson model with decaying randomness: mobility edge},
Math. Z. {\bf 235}, 421-433 (2000).
%
%
\bibitem{KLS} A. Kiselev, Y. Last, B. Simon,
\textit{Modified Pr\"ufer and EFGP transforms and the spectral analysis of one-dimensional schr\"odinger operators},
Comm. Math. Phys. {\bf 194}, 1-45 (1998).
%
%\bibitem{KRS} A. Kiselev, C. Remling, B. Simon, \textit{Effective perturbation methods for one-dimensional Schr\"odinger operators}, J. Differential Equations {\bf 151}, 290-312 (1999).
%
\bibitem{KS} H. Kunz, B. Souillard, \textit{Sur le spectre des op\'erateurs aux diff\'erences finies al\'eatoires}, Comm. Math. Phys. {\bf 78}, 201-246 (1980).
%
%
\bibitem{KU} S. Kotani, N. Ushiroya, 
\textit{One-dimensional Schr\"odinger operators with random decaying potentials},
Comm. Math. Phys. {\bf 115}, 247-266 (1988).
%
\bibitem{Si82} B. Simon, 
\textit{Some Jacobi matrices with decaying potential and dense point spectrum}, Comm. Math. Phys. {\bf 87}, 253-258 (1982).

\bibitem{CBD} Stollman, P., \textit{Caught by disorder}, Progress in Mathematical Physics 20, Springer (2001). 
%\bibitem{Si95} B. Simon, \textit{Spectral Analysis of rank one perturbations and applications},
%CRM Lectures Notes, Vol. 8, Amer. Math. Soc, Providence, RI, 109-149 (1995).
%
\end{thebibliography}
%\bibliographystyle{plain}
%\end{document}

\end{document}